\documentclass[sigplan,10pt,authorversion,nonacm]{acmart}
\setcopyright{none}

\usepackage[utf8]{inputenc}
\usepackage[inline]{enumitem}
\usepackage{frenchineq}
\usepackage{amsmath}  
\usepackage{amssymb}     
\usepackage{amsthm}  
\usepackage{bbm}    
\usepackage{bm}
\usepackage{mathrsfs}
\usepackage{mathtools}
\usepackage{fixmath}
\usepackage{cleveref}

\usepackage{tikz}
\usetikzlibrary{calc}
\usetikzlibrary{arrows,arrows.meta}
\usetikzlibrary{positioning}

\usepackage{pgfplots}
\pgfplotsset{compat=1.18}

\definecolor{vertexone}{RGB}{0,114,178}
\definecolor{vertextwo}{RGB}{213,94,0}
\definecolor{vertexthree}{RGB}{0,158,115}
\definecolor{vertexfour}{RGB}{204,121,167}
\definecolor{vertexfive}{RGB}{230,159,0}
\definecolor{edgehighlight}{RGB}{117,112,179}

\definecolor{vplusone}{RGB}{230,159,0}
\definecolor{vplustwo}{RGB}{204,121,167}
\definecolor{vplusthree}{RGB}{86,180,233}
\definecolor{vminusone}{RGB}{0,114,178}
\definecolor{vminustwo}{RGB}{213,94,0}
\definecolor{vminusthree}{RGB}{0,158,115}

\usepackage[disable,colorinlistoftodos,bordercolor=orange,backgroundcolor=orange!20,linecolor=orange,textsize=tiny]{todonotes}
\allowdisplaybreaks

\newcommand{\ie}{i.e.}
\newcommand{\eg}{e.g.}
\newcommand{\etc}{etc}
\newcommand{\R}{\mathbb{R}}
\newcommand{\F}{\mathbb{F}}
\newcommand{\Ccal}{\mathcal{C}}
\newcommand{\Ksf}{\mathsf{K}}

\newcommand{\facets}{\Ksf^{(d-1)}}
\newcommand{\ridges}{\Ksf^{(d-2)}}
\newcommand{\eps}{\varepsilon}
\newcommand{\symdiff}{\mathbin{\triangle}}

\newcommand{\transpose}[1]{#1^\top}
\newcommand{\NP}{\mathsf{NP}}
\newcommand{\coNP}{\mathsf{coNP}}
\newcommand{\poly}{\mathrm{poly}}
\DeclarePairedDelimiterX{\scalar}[2]{\langle}{\rangle}{#1, #2}
\DeclareMathOperator{\vertices}{\mathrm{vert}}
\DeclareMathOperator{\linspan}{\mathrm{span}}
\DeclareMathOperator{\rank}{\mathrm{rank}}
\DeclareMathOperator{\conv}{\mathrm{conv}}
\DeclareMathOperator{\cone}{\mathrm{cone}}
\DeclareMathOperator{\interior}{\mathrm{int}}
\DeclareMathOperator{\relint}{\mathrm{relint}}
\DeclareMathOperator{\rec}{\mathrm{rec}}

\newcommand{\Rocq}{Rocq}

\newcommand{\cdd}{\C$cddlib$}
\newcommand{\lrs}{\C$lrslib$}
\newcommand{\MathComp}{Mathematical Components}
\newcommand{\CoqPolyhedra}{Coq-Polyhedra}

\usepackage{listings}

\let\C=\lstinline

\definecolor{dkblue}{rgb}{0,0.1,0.5}
\definecolor{lightblue}{rgb}{0,0.5,0.5}
\definecolor{dkgreen}{rgb}{0,0.4,0}
\definecolor{dk2green}{rgb}{0.4,0,0}
\definecolor{dkviolet}{HTML}{932191}
\definecolor{dkpink}{rgb}{1.2,0,1.6}
\definecolor{iden}{HTML}{0332FF}
\definecolor{comment}{HTML}{B12122}

\usepackage{etoolbox}

\newlist{conditions}{enumerate}{1}
\setlist[conditions]{
  label=\arabic*.,
  ref=\arabic*
}
\crefname{conditionsi}{condition}{conditions}
\Crefname{conditionsi}{Condition}{Conditions}

\newlist{properties}{enumerate}{1}
\setlist[properties]{
  label=\arabic*.,
  ref=\arabic*
}
\crefname{propertiesi}{property}{properties}
\Crefname{propertiesi}{Property}{Properties}

\newlist{components}{enumerate}{1}
\setlist[components]{
  label=\arabic*.,
  ref=\arabic*
}
\crefname{componentsi}{component}{components}
\Crefname{componentsi}{components}{Components}

\newtheorem{theorem}{Theorem}
\newtheorem{lemma}[theorem]{Lemma}
\newtheorem{proposition}[theorem]{Proposition}
\newtheorem{corollary}[theorem]{Corollary}
\newtheorem{remark}{Remark}

\newcommand{\reason}[1]{\tag*{#1}}
\newcommand{\reasonnext}[1]{\\[-.5ex] & \tag*{\text{#1}}}

\title{Formally Certifying the Vertex Set of a Polyhedron Faster than Informal Enumeration}

\author{Xavier Allamigeon}
\orcid{0000-0002-0258-8018}
\affiliation{\institution{Inria}\institution{CMAP, CNRS, {\'E}cole polytechnique, Institut Polytechnique de Paris}\country{France}}
\email{xavier.allamigeon@inria.fr}
\author{Yazid Id-Sahra}
\affiliation{\institution{\'Ecole Normale Supérieure de Lyon}
\country{France}}
\email{yazid.id-sahra@ens-lyon.fr}
\author{Pierre-Yves Strub}
\orcid{0000-0002-8196-7875}
\affiliation{\institution{PQShield}\country{France}}
\email{pierre-yves.strub@pqshield.com}

\begin{abstract}
The computation of the vertices of a polyhedron described by a system of linear inequalities is a central problem in polyhedral computation. It is a fundamental step in the conversion between H-representations, by linear inequalities, and V-representations, by vertices and extreme rays. This operation plays an important role both in the study of polyhedra and their combinatorics in mathematics and in applications to software and system verification.

We present a certificate-based approach for formally verifying the computation of the vertices of a polyhedron. Given an informally computed list of vertices, our method allows to certify in the proof assistant \Rocq{} that the list is complete, or even exact. The cornerstone of the method is a new completeness criterion based on an abstract simplicial complex that generalizes a triangulation of the normal fan of the polyhedron. A significant advantage over previous approaches is that the usually expensive numerical computations are essentially reduced to membership tests to the polyhedron, while the other steps are cheap combinatorial tests.

We implement the certification method and prove its correctness in the proof assistant \Rocq{}. We experiment with it on a variety of polyhedra, including Birkhoff polytopes, cross-polytopes, cubes, permutahedra, hypersimplices, and high-dimensional polytopes involved in the disproof of the Hirsch conjecture. Our experiments show that certification with the \Rocq{}-to-OCaml extracted checker is typically 1.5x to over 5x faster than vertex enumeration by the state-of-the-art informal C implementation \lrs{} of the reverse search method. 
\end{abstract}

\begin{document}
\maketitle

\section{Introduction}\label{sec:introduction}

Minkowski--Weyl Theorem states that convex polyhedra, \ie, sets of $\R^d$ defined by finitely many linear inequalities, are precisely the convex hulls of finitely many points and rays (\ie, halflines). This statement establishes the existence of a double representation for polyhedra, by means of a system of linear inequalities (H-representation) or by a finite collection of points and rays (V-representation). Some operations over polyhedra, such as intersections of two polyhedra, or testing the membership of a point, can be straightforwardly done over H-representations. Other operations, such as projections, image under linear maps, or convex unions, are immediate over V-representations. This makes the computation of one representation from another a prominent problem in polyhedral computations, which appears in many applications. As an example, in applications to static analysis by abstract interpretation~\cite{CousotHalbwachs1978,BagnaraHillZaffanella2008}, neural network verification~\cite{TranEtAl2020NNV,MullerEtAl2022PRIMA}, or dynamical system analysis~\cite{Guglielmi2017}, it is generally acknowledged that the H- vs V-conversion is one of the main bottleneck steps; see~\eg{}~\cite{SinghPuschelVechev2017,MarechalMonniauxPerin2017}.

In this setting, the computation of the vertices of a polyhedron, or even more specifically, of a polytope (\ie, a bounded polyhedron) plays a central role. Indeed, the vertex set of a polytope provides the smallest (inclusionwise) V-representation, and H- vs V-conversion problems for general polyhedra can be reduced to it (as we recall in \Cref{sec:conclusion}). For this reason, the computation of the vertex set, commonly referred to as the \emph{vertex enumeration problem}, is one of the most fundamental computational problems over polyhedra. It is also closely connected to long-standing questions in combinatorics and optimization involving graphs of polytopes and their diameters. In more details, the (vertex-edge) graph of a polytope is the graph whose vertices are the vertices of the polytope and whose edges are its one-dimensional faces. The diameter is the maximum  graph distance between any two vertices. Diameters of graphs of polytopes provide lower bounds on the worst-case number of steps performed by the simplex method (as the latter actually traces a path to an optimal vertex in such graphs). The Hirsch conjecture asserted that the diameter of the vertex-edge graph of a $d$-dimensional polytope with $n$ facets is at most $n-d$. Although this precise  bound was disproved by~\citet*{Santos2012}, the central question remains whether polytope diameters admit a polynomial bound in $n$ and $d$, which is known as the polynomial Hirsch conjecture. This provides an additional motivation for the development of formally verified methods computing vertex sets and vertex-edge graphs, especially in the context of an increasing use of computer-assisted and AI-assisted methods in the study of open problems.

The state-of-the-art approaches to the vertex enumeration problem are the \emph{double description method}~\cite{MotzkinRaiffaThompsonThrall1953} and the more recent \emph{reverse-search method}~\cite{AvisFukuda1992}. The double description method takes a system of $n$ linear inequalities, and incrementally computes (minimal) V-representations of the polyhedra defined by the first $k$ inequalities for $k = 1, \dots, n$. Despite being very simple to implement, the major drawback of this approach is that the  intermediate polyhedra may have very large (\eg, exponentially large) V-representations compared to that of the final polyhedron. The reverse-search approach is based on the simplex method. It relies on the fact that every vertex corresponds to at least one basis in the sense of the simplex method. It then enumerates all such bases, by reversing the path followed by the simplex method. In non-degenerate situations, \ie, when every vertex corresponds to exactly one simplex basis, the reverse-search method is linear in the number of vertices (up to polynomial factor in the dimension $d$ of the ambient space and the number $n$ of defining inequalities). This applies in particular to simple polytopes, which form a ``dense'' class in the sense that any polytope can be made simple by an arbitrarily small generic perturbation of its defining inequalities. Another advantage of the reverse-search method is that it can be implemented in a memoryless way, and can be parallelized~\cite{AvisJordan2018}. The double description and the reverse-search methods have proofread and highly optimized implementations, respectively the C libraries \cdd{} and \lrs, which are interfaced to various programming languages (\eg, Rust, Julia, Python) and are widely used in mathematical software (\eg, SageMath, Polymake or OSCAR). 

\paragraph{Contribution} We present a certification method for checking that an informally computed list of vertices contains or is equal to the vertex set of a polytope. In addition, it can certify that an informally computed graph is the vertex-edge graph of the polytope. The method takes as input a description of the polytope by inequalities as well as a certificate that is built from the output of the reverse-search method C library \lrs{}. 

The main novelty of our method is the way the completeness of the informal list of vertices is verified. It makes use of the normal fan of the polytope, which is a collection of cones called normal cones, each cone corresponding to a face of the polytope. A basic fact is that the list is complete if and only if the normal cones of the candidate vertices cover the whole space $\R^d$. In order to check this covering property, we incorporate in the certificate an abstract simplicial complex, \ie, a collection of subsets of $[n] \coloneqq \{1, \dots, n\}$ (where $n$ is the number of defining inequalities), which define elementary (conical) tiles contained in these normal cones. These tiles act as a relaxation of a triangulation of the normal cones, in the sense that they are allowed to overlap and may have arbitrary dimension. We establish an \emph{odd-covering criterion} (\Cref{th:odd-covering}) which guarantees that the tiles cover $\R^d$ under a purely combinatorial condition on the abstract simplicial complex and the existence of a (generic) vector covered by an odd number of tiles. As a result, most of the certificate can be checked combinatorially, while the numerical computations are reduced to the verification that candidate vertices belong to the polytope (a rather unavoidable step), and that a given vector is oddly covered. 

We implement the certificate checker in the proof assistant \Rocq{} and formally prove its correctness up to two ingredients: the odd-covering criterion (\Cref{th:odd-covering}) and the correctness statement for vertex-edge graph certification (\Cref{th:graph,th:graph-completeness}). Both involve topological and homological arguments that are not yet available in the \Rocq{} mathematical libraries we use in our development. We therefore provide self-contained pen-and-paper proofs of these results in the present paper and leave their formalization for future work. The correctness of the checker is proved by using proof-oriented types, and relies on the libraries \MathComp{} and \CoqPolyhedra{} for the formalization of the mathematical notions used (fields, vector spaces, polyhedra, faces, etc). This ``high-level'' implementation of the checker is then proved to be equivalent to a ``low-level'' one. The latter involves types made for efficient computations, such as persistent arrays, native binary integers, and the multiprecision arithmetics provided by the \Rocq{} library \C$bignums$. 

We experiment the resulting checker on several classical families of polytopes, namely Birkhoff polytopes, cross-polytopes, cubes, (dual) cyclic polytopes, permutahedra and hypersimplices. Cross-polytopes put aside, each class exhibits an exponentially (or subexponentially) growing number of vertices as the dimension scales up, which make them relevant test cases. We also consider the $20$- and $23$-dimensional polytopes provided as counterexamples to the Hirsch conjecture by~\cite*{Matschke2015}.

We report on two execution modes of the checker. In the first one, the checker runs inside \Rocq{} using its bytecode virtual machine through the tactics \C$vm_compute$, as in the previous work of~\citet*{AllamigeonCanuStrub2023} on the graph certification problem. On their benchmark instances, our method improves the graph certification time by a factor ranging from 15 to 1000. To give an order of magnitude, we are now able to verify each of the two counterexamples to the Hirsch conjecture, including the violation of the Hirsch bound, in resp.\ ~30 and ~70 seconds on a standard laptop, compared to more than an hour in their work. In the second execution mode, the checker is obtained through \Rocq{} extraction mechanism to OCaml code, which we compile to native code. We directly compare its performance to that of \lrs{}. We find that the vertex set certification time is typically 1.5x to over 5x (and up to 50x) faster than the vertex enumeration time. This demonstrates that our certification method could be used on top of \lrs{} in order to provide a formal guarantee at an acceptable overhead.

\paragraph{Related Work}

The closest work to ours is that of~\citet*{AllamigeonCanuStrub2023}, who provide a method for certifying the vertex-edge graph of a polytope and apply it to the formal verification of counterexamples to the Hirsch conjecture. Their method builds on the enumeration of lexicographically feasible bases (lex-feasible bases for short) performed by the reverse-search implementation \lrs{}. As explained in \Cref{subsec:completeness}, we also use lex-feasible bases to construct our certificates, as a post-processing step applied to the output of \lrs{}.
The performance gap between the two approaches can be explained from what is actually certified. As discussed in \Cref{subsec:completeness}, the lex-feasible bases correspond to a particular triangulation of the normal cones called the lexicographic triangulation. The method of~\cite{AllamigeonCanuStrub2023} implicitly amounts to certifying that this specific triangulation is correct. In our implementation, we use the same lexicographic triangulation to generate the abstract simplicial complex contained in the certificate, but the checker does not verify that the complex represents the lexicographic triangulation. Instead, it only checks the weaker properties of the odd-covering criterion to certify the vertex-set and the graph correctness. This saves a substantial amount of numerical computations, most notably matrix-rank tests, as explained in \Cref{sec:odd-covering}. It also results in significantly smaller certificates, as reported in \Cref{subsec:experiments}.

Out of the aforementioned work, we are not aware of any formally proven methods for computing or certifying vertex sets or vertex-edge graphs of polyhedra. To our knowledge, the majority of prior works that use formally verified polyhedral computations avoid dealing directly with H- vs V-representation problems, and reduce the verification of polyhedral operations to solving basic linear inequality problems such as infeasibility or logical implication problems. This is the approach followed by the Verified Polyhedra Library~\cite{BoulmeEtAl2018}, where computational polyhedral operators are implemented in OCaml, and the verification of Farkas certificates (nonnegative combinations of linear inequalities) underlying to the correctness of the polyhedral computations done are discharged to \Rocq{}. Similarly, the formally verified neural network prover Rocq-NN-Roll~\cite{AleksandrovJackischVoellinger2026} relies on solving (in)feasibility problems over linear inequality systems. In practice, rather than exploiting a certification method, they directly do a formal proof of the Fourier--Motzkin elimination method. 

Beyond the context of formal proof, the complexity of the certification problems related to H-/V-representations of polyhedra is an active topic where several questions remain unsolved. In more details, given a polytope $P$ described by some linear inequalities, and finite set $V \subset \R^d$ of points, the complexity of deciding if $V = \vertices(P)$, or even $V \supset \vertices(P)$, is known to be in $\coNP$. However, the membership to $\NP$ or, equivalently, the existence of a polynomial time certification method for the vertex-set exactness or completeness, is still an open question. We refer to~\cite{KaibelPfetsch2003}, in which a broader discussion on the complexity status of problems related to H- vs V-representations can also be found. To our knowledge,~\citet{JoswigZiegler2004} is one of the few works to address vertex-set certification from an algorithmic perspective. Their completeness criterion is stated as the nontriviality of the $(d-1)$th homology group over $\F_2$ (the field of order $2$) of a crosscut complex generated by the sets of vertices incident to each facet. They propose to check this criterion by computing ranks of boundary matrices (of exponential size in the worst case). We are not aware of any implementation (even informal) of this approach. Our odd-covering criterion is also homological in nature, and we provide in \Cref{remark:homology} an interpretation of the conditions of the criterion in terms of homology over $\F_2$.

\paragraph{Outline}

The preliminary \Cref{sec:normal-fan} introduces the key notions underlying to this work, such as polyhedra and their faces, normal cones and normal fans. \Cref{sec:odd-covering} deals with the presentation and the proof of the odd-covering criterion. In \Cref{sec:certificate}, we present the  certification method for vertex sets  and vertex-edge graphs of polytopes (\Cref{subsec:vertex-completeness} and \Cref{subsec:vertex-and-graph}). 
We also show how a certificate can be systematically built from the output of the library \lrs{}, thus establishing the completeness of the certification method (\Cref{subsec:completeness}). The formal proof of the method and its experimental evaluation are covered in \Cref{sec:experiments}. Proofs and extra statements are provided in appendix.

\paragraph{Source Code of the Formalization}

The \Rocq{} formalization is available in the repository \url{https://github.com/Coq-Polyhedra/homology-checker}.
 The \C$README.md$ file describes how to compile it from dependencies, run the checker, and reproduce the experiments reported in \Cref{sec:experiments} (a \C$Dockerfile$ replaying the whole set of benchmarks is also provided in the directory \C$benchmarks$). Certificates can be built from the Rust generator provided in the repository \url{https://github.com/Coq-Polyhedra/lrs-postprocess}. The generator works from the output of a patched version of \C$lrslib$ available in \url{https://github.com/Coq-Polyhedra/lrslib}.

\paragraph{Use of AI} 

Generative AI tools were used to assist the implementation, primarily in two specific contexts. First, they assisted in developing the Rust certificate generator, which lies outside the trusted computing base. Second, they were used to develop the refinement proofs bridging the low- and high-level certificate checkers (see \Cref{subsec:refinement}), whose correctness is then checked by the proof assistant. AI-assisted tools were also used for minor supporting tasks, notably writing benchmark scripts or documentation, and producing the TikZ code of the pictures in the paper.

\section{Polyhedra and Normal Fans}\label{sec:normal-fan}

For a general account on the notions and properties introduced here, we refer the reader to the textbooks~\cite{ziegler,Schrijver86,DeLoeraRambauSantos2010}. We denote by $\scalar{y}{z} \coloneqq \sum_{i = 1}^d y_i z_i$ the Euclidean scalar product of two vectors $y, z \in \R^d$.

A \emph{polyhedron} of $\R^d$ is a set formed by the points $x \in \R^d$ satisfying a finite system of linear (affine) inequalities $\scalar{a_1}{x} \leq b_1$, \dots, $\scalar{a_n}{x} \leq b_n$, where $n \geq 0$, and $a_i \in \R^d$, $b_i \in \R$ for all $i \in [n]$. 

Minkowski--Weyl Theorem entails that bounded polyhedra in $\R^d$ are precisely the sets of the form $\conv(V)$ for $V \subset \R^d$ being a finite set, where $\conv(V)$ stands for the set of convex combinations $\sum_{v \in V} \alpha_v v$ of the points in $V$ ($\alpha_v \in \R_{\geq 0}$ for all $v \in V$, $\sum_{v \in V} \alpha_v = 1$). Sets of the form $\conv(V)$ for $V$ finite are called \emph{polytopes}. This is why we use the more standard term ``polytope'' to designate a bounded polyhedron. However, we confirm that the polytopes dealt with in the paper will be all described by inequalities as input (while the goal will be to compute or certify a corresponding set $V$). 

We also extensively use polyhedral cones. A \emph{polyhedral cone} is a set of $x \in \R^d$ satisfying a finite system of linear inequalities $\scalar{f_i}{x} \geq 0$ where $f_i \in \R^d$, $i \in [n]$ and $n \geq 0$. Polyhedral cones are special cases of polyhedra. (A polyhedral cone is nothing but a polyhedron closed under the scaling $x \mapsto \beta x$ by all scalar $\beta > 0$). Minkowski--Weyl Theorem states that polyhedral cones are precisely the sets of the form $\cone(W)$ for $W \subset \R^d$ being a finite set, where $\cone(W)$ stands for the set of nonnegative combinations $\sum_{w \in W} \beta_w w$ of the vectors in $W$ ($\beta_w \geq 0$ for all $w \in W$). 

An important notion is that of the \emph{dimension} of a polyhedron. The latter is defined as the dimension of the \emph{affine hull} of the polyhedron, \ie, of the smallest (inclusionwise) affine subspace of $\R^d$ containing the polyhedron. For example, a polyhedron reduced to a single point has dimension $0$, a line segment has dimension $1$, \etc. In the case of a polytope under the form $\conv(V)$ where $V = \{v^{(0)}, \dots, v^{(p)}\}$ for $p \geq 0$, the dimension is given by the rank of family of vectors $v^{(i)} - v^{(0)}$ where $i = 1, \dots, p$. Analogously, the dimension of a polyhedral cone under the form $\cone(W)$ for $W$ finite is given by the rank of the family $W$. We denote by $\dim P$ the dimension of a polyhedron $P$. 

\begin{figure*}[t]
\begin{tikzpicture}[
  scale=.55,
  convex/.style={draw=none,fill=lightgray,fill opacity=.7},
  convexborder/.style={very thick,black!80},
  normal/.style={ultra thick,shorten <=5pt,-{Stealth[length=7pt,width=6pt]}},
  edgenormal/.style={thick,gray!90,shorten <=5pt,-{Stealth[length=7pt,width=4pt]}},
  fanray/.style={thick,gray!90},
  lbl/.style={font=\footnotesize}
]
  \clip (-1.1,-1) rectangle (25.5,9);
  \draw[gray!70,->] (-1,0) -- (10.5,0) node[above left] {$x_1$};
  \draw[gray!70,->] (0,-1) -- (0,8.5) node[below right] {$x_2$};

  \coordinate (v1) at (3,1);
  \coordinate (v2) at (7,2);
  \coordinate (v3) at (8,5);
  \coordinate (v4) at (3,7);
  \coordinate (v5) at (1,3);

  \fill[convex] (v1) -- (v2) -- (v3) -- (v4) -- (v5) -- cycle;
  \draw[convexborder] (v1) -- (v2) -- (v3) -- (v4) -- (v5) -- cycle;

  \draw[normal,vertexone]   (v1) -- ++(-.30,-1.75);
  \draw[normal,vertextwo]   (v2) -- ++(1.40,-1.75);
  \draw[normal,vertexthree] (v3) -- ++(1.45,1.16);
  \draw[normal,vertexfour]  (v4) -- ++(-.52,1.70);
  \draw[normal,vertexfive]  (v5) -- ++(-1.75,0);

  \draw[edgenormal] (5,1.5) -- ++(.32,-1.28)
    node[lbl,below right] {$a_1$};
  \draw[edgenormal] (7.5,3.5) -- ++(1.20,-.40)
    node[lbl,right] {$a_2$};
  \draw[edgenormal] (5.5,6) -- ++(.48,1.20)
    node[lbl,above] {$a_3$};
  \draw[edgenormal] (2,5) -- ++(-1.15,.58)
    node[lbl,above left] {$a_4$};
  \draw[edgenormal] (2,2) -- ++(-.85,-.85)
    node[lbl,below left] {$a_5$};

  \fill[vertexone]   (v1) circle (4pt);
  \fill[vertextwo]   (v2) circle (4pt);
  \fill[vertexthree] (v3) circle (4pt);
  \fill[vertexfour]  (v4) circle (4pt);
  \fill[vertexfive]  (v5) circle (4pt);

  \begin{scope}[shift={(18,4)}]
    \coordinate (o) at (0,0);

    \begin{scope}
      \clip (o) -- (225:5) arc[start angle=225,end angle=284.04,radius=5] -- cycle;
      \shade[inner color=vertexone!38,middle color=vertexone!38,outer color=white] (o) circle (4.15);
    \end{scope}
    \begin{scope}
      \clip (o) -- (284.04:5) arc[start angle=284.04,end angle=341.57,radius=5] -- cycle;
      \shade[inner color=vertextwo!38,middle color=vertextwo!38,outer color=white] (o) circle (4.15);
    \end{scope}
    \begin{scope}
      \clip (o) -- (341.57:5) arc[start angle=341.57,end angle=428.20,radius=5] -- cycle;
      \shade[inner color=vertexthree!38,middle color=vertexthree!38,outer color=white] (o) circle (4.15);
    \end{scope}
    \begin{scope}
      \clip (o) -- (68.20:5) arc[start angle=68.20,end angle=153.43,radius=5] -- cycle;
      \shade[inner color=vertexfour!38,middle color=vertexfour!38,outer color=white] (o) circle (4.15);
    \end{scope}
    \begin{scope}
      \clip (o) -- (153.43:5) arc[start angle=153.43,end angle=225,radius=5] -- cycle;
      \shade[inner color=vertexfive!38,middle color=vertexfive!38,outer color=white] (o) circle (4.15);
    \end{scope}

    \draw[fanray] (o) -- (225:3.85);
    \draw[fanray] (o) -- (284.04:3.85);
    \draw[fanray] (o) -- (341.57:3.85);
    \draw[fanray] (o) -- (68.20:3.85);
    \draw[fanray] (o) -- (153.43:3.85);
    \node[lbl,below right] at (284.04:4.02) {$a_1$};
    \node[lbl,right]       at (341.57:4.02) {$a_2$};
    \node[lbl,above right] at (68.20:4.02) {$a_3$};
    \node[lbl,above left]  at (153.43:4.02) {$a_4$};
    \node[lbl,below left]  at (225:4.02) {$a_5$};

    \draw[normal,vertexone]   (o) -- ++(-.30,-1.75);
    \draw[normal,vertextwo]   (o) -- ++(1.40,-1.75);
    \draw[normal,vertexthree] (o) -- ++(1.45,1.16);
    \draw[normal,vertexfour]  (o) -- ++(-.52,1.70);
    \draw[normal,vertexfive]  (o) -- ++(-1.75,0);
    \fill[black!80] (o) circle (3pt);
    \node[lbl,below right] at (o) {$0$};
  \end{scope}
\end{tikzpicture}

\caption{A two-dimensional polytope (left) and its normal fan (right). Each colored vector defines a linear function uniquely maximized at the associated vertex; the same vector is shown inside the corresponding normal cone. The generating rays in the normal fan are labeled by the corresponding outer normal vector $a_i$ of the polytope.}\label{fig:normal-fan}
\end{figure*}

Let $P \subset \R^d$ be a polytope. We recall that a \emph{vertex} of~$P$ is a point $v \in P$ arising as the unique maximizer of some linear function $x \mapsto \scalar{c}{x}$ over $P$ (where $c \in \R^d$), \ie,  
\[
\text{for all} \; x \in P \, , \; x \neq v \implies \scalar{c}{x} < \scalar{c}{v} \, .
\]
We refer to \Cref{fig:normal-fan} for an example. We denote by $\vertices(P)$ the set of vertices of $P$. As discussed in \Cref{sec:introduction}, the vertex set is the minimal (inclusionwise) set $V \subset \R^n$ such that $P = \conv(V)$. This explains the term ``V-representation'' in which the letter ``V'' refers to vertices (while the H in ``H-representation'' refers to halfspaces, \ie, sets defined by a single affine inequality $\scalar{a}{x} \leq b$).

More generally, a set of points maximizing some linear function $x \mapsto \scalar{c}{x}$ over $P$ is called a (nonempty) \emph{face} of $P$. Faces are polytopes as well. The vertices of $P$ correspond to the faces of dimension $0$. The \emph{edges} are the faces of dimension $1$. The \emph{vertex-edge graph} of $P$ is the combinatorial graph $G(P)$ whose vertices correspond to the vertices of $P$, and there is an edge between two vertices $v, w$ in $G(P)$ if $v$ and $w$ belong to a common edge in the polytope $P$. The \emph{facets} of $P$ are the faces of dimension $\dim P - 1$. 

Following the definition of a vertex, a natural approach to check if some informally computed set $V \subset \R^d$ contains all vertices of $P$ is to check that for every linear function with a unique point as maximizer, the latter belongs to $V$. While apparently this would need to examine infinitely many vectors $c \in \R^n$, the notion of normal fan yields a finite discretization making this approach applicable.

In more details, the \emph{normal cone} $N(x)$ of a point $x \in P$ is defined as the set of vectors $c \in \R^d$ such that $x$ is maximizing the function $x \mapsto \scalar{c}{x}$ over $P$, \ie,
\[
N(x) \coloneqq \{ c \in \R^d \colon \forall y \in P, \scalar{c}{x} \geq \scalar{c}{y} \} \, .
\]
Now, suppose that $P$ is defined by the inequalities $\scalar{a_i}{x} \leq b_i$ where $a_i \in \R^d$ and $b_i \in \R$ for all $i \in [n]$. Let $x \in P$. It is well known (see~\Cref{prop:normal-cone-def} in \Cref{app:normal-fan}) that the normal cone $N(x)$ is the cone generated by the vectors $a_i$ of the inequalities \emph{active} at $x$, \ie, satisfying $\scalar{a_i}{x} = b_i$. Formally, we have
\begin{equation}
N(x) = \cone{\{a_i\}_{i \in I(x)}}  \, . \label{eq:normal-cone-characterization}
\end{equation}
where $I(x) \coloneqq \{ i \in [n] \colon \scalar{a_i}{x} = b_i \}$. We refer again to \Cref{fig:normal-fan} for an illustration, where we observe that the normal cone of every vertex is indeed generated by the normal vectors $a_i$ of two corresponding active inequalities.

The \emph{normal fan} is the collection of normal cones $N(x)$ for $x \in P$. Observe that the characterization given in~\eqref{eq:normal-cone-characterization} in terms of a subset $I(\cdot) \subset [n]$ shows that the normal fan consists of finitely many cones. The term \emph{fan} refers to the fact that the faces of every cone in the collection also belong to the collection, and that the intersection of any two cones is a common face of the two. 

We are now ready to state the proposition that will be at the core of the certification of vertex-set completeness.
\begin{proposition}\label{prop:vertex-completeness}
Suppose that $P$ is a nonempty polyhedron, and let $V \subset P$. The following two properties are equivalent:
\begin{itemize}
\item $P$ is a polytope and $V$ contains all vertices of $P$;
\item $\bigcup_{v \in V} N(v) = \R^d$.
\end{itemize}
\end{proposition}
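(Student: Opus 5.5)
We prove the two implications separately. Throughout, $P$ is a nonempty polyhedron given by inequalities $\scalar{a_i}{x} \leq b_i$, $i \in [n]$, and $V \subset P$.

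\emph{First implication.} Assume $P$ is a polytope and $\vertices(P) \subset V$. Let $c \in \R^d$ be arbitrary. Since $P$ is a nonempty compact set, the linear function $x \mapsto \scalar{c}{x}$ attains its maximum on $P$. The set $F$ of maximizers is a nonempty face of $P$, hence a polytope, and by Minkowski--Weyl $F$ has at least one vertex $v$.

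We then check that $v$ is a vertex of $P$. Choose $c'$ such that $v$ is the unique maximizer of $\scalar{c'}{\cdot}$ over $F$. For $\eps > 0$ small enough, $v$ is the unique maximizer of $\scalar{c + \eps c'}{\cdot}$ over $P$. To see this, use that $P = \conv(\vertices(P))$ and compare the finitely many values at the vertices: vertices outside $F$ lose a fixed positive gap under $c$, and vertices in $F$ other than $v$ lose under $c'$.

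Alternatively, we can avoid the perturbation argument. Since $\conv(V) \supset \conv(\vertices(P)) = P$, the maximum of $\scalar{c}{\cdot}$ over $P$ is attained at some point of $\vertices(P) \subset V$. That point $v$ satisfies $c \in N(v)$, which gives $\bigcup_{v \in V} N(v) = \R^d$. We will use this cleaner route.

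\emph{Second implication.} Assume $\bigcup_{v \in V} N(v) = \R^d$.

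\emph{Boundedness.} Suppose $P$ is unbounded. Since $P$ is a nonempty polyhedron, its recession cone $\{ r \colon \scalar{a_i}{r} \leq 0 \ \forall i \}$ is nonzero (standard fact). Take $r \neq 0$ in it and set $c = r$. For any $v \in P$, the point $v + t r$ lies in $P$ for all $t \geq 0$, and $\scalar{c}{v + t r} = \scalar{c}{v} + t \|r\|^2 \to \infty$. Hence $c \notin N(v)$ for every $v$, which contradicts the covering. So $P$ is a polytope.

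\emph{Containment of vertices.} Let $w$ be a vertex of $P$, the unique maximizer of $\scalar{c}{\cdot}$ over $P$. By the covering, $c \in N(v)$ for some $v \in V$. Then $v \in P$ maximizes $\scalar{c}{\cdot}$ over $P$, so uniqueness forces $v = w$, and thus $w \in V$.

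\emph{Main obstacle.} The only nontrivial ingredient is the fact that an unbounded nonempty polyhedron has a nonzero recession direction. I would justify it via Minkowski--Weyl for general polyhedra, $P = \conv(V') + \cone(W)$: if $P$ is unbounded then $\cone(W) \neq \{0\}$, and every $w \in W$ satisfies $\scalar{a_i}{w} \leq 0$. Alternatively, take a sequence $x_k \in P$ with $\|x_k\| \to \infty$ and extract a limit point $r$ of $x_k/\|x_k\|$. Dividing the inequalities $\scalar{a_i}{x_k} \leq b_i$ by $\|x_k\|$ and passing to the limit gives $\scalar{a_i}{r} \leq 0$ for all $i$, with $\|r\| = 1$. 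This compactness argument is self-contained, so I would use it.
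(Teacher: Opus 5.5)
Your proof is correct, and it takes a more elementary route than the paper's.

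The paper argues entirely through normal-fan structure. For the forward direction it uses \Cref{prop:normal-cone-basic-prop}\ref{item:ii} to see that the cones $N(v)$, $v \in \vertices(P)$, are the inclusionwise maximal cones of the fan. It then applies \Cref{prop:normal-cone-support}: the support of the fan is $(\rec P)^\circ$, which equals $\R^d$ exactly when $P$ is a polytope. For the converse it invokes \Cref{prop:normal-cone-support} again to get boundedness. For vertex containment it picks $z \in \interior N(v)$ and applies \Cref{cor:normal-cone-basic-prop}.

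You instead work directly from the definitions of $N(\cdot)$ and of a vertex. Your forward direction reduces to $P = \conv(\vertices(P))$ with $\vertices(P)$ finite. For boundedness, your compactness argument produces a recession direction $r$, and $c = r$ lies in no normal cone. For vertex containment you use the vector $c$ witnessing that $w$ is a vertex, rather than an interior point of $N(w)$.

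What your version buys is self-containment. It effectively proves both directions of \Cref{prop:normal-cone-support} that are used, which the paper states without proof. It also never needs $\interior N(v) \neq \varnothing$, which the paper gets from the dimension formula $\dim N(Q) = d - \dim Q$ and hence from the description $N(x) = \cone\{a_i\}_{i \in I(x)}$. The paper's version is shorter given its appendix, and it relies on \Cref{cor:normal-cone-basic-prop}, which is needed again later anyway (in \Cref{prop:root} and in the graph results).

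Your perturbation detour in the first implication is correct but redundant once you adopt the convex-hull argument, so you can drop it.
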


\section{Odd-Covering Criterion}\label{sec:odd-covering}

\subsection{Overview: the Role of Triangulations}\label{subsec:overview}

The criterion provided by \Cref{prop:vertex-completeness} turns the vertex-set completeness problem $V \supset \vertices(P)$ into a covering of the whole space by the normal cones $N(v)$ of the points $v \in V$. Equivalently, as normal cones are invariant by scaling by a positive scalar, we can visualize this problem as a covering problem of the $(d-1)$-dimensional unit sphere $S^{d-1} \subset \R^d$ by the spherical polytopes $N(v) \cap S^{d-1}$ for $v \in V$, as illustrated in the middle of \Cref{fig:cross-polytope} with the normal fan of the three-dimensional cross-polytope. (We recall that the $d$-dimensional cross-polytope is the set of the points $x \in \R^d$ satisfying $\sum_{i = 1}^d |x_i| \leq 1$, or, equivalently, the $2^d$ linear inequalities $\sum_{i = 1}^d \eps_i x_i \leq 1$ where $\eps \in \{+1,-1\}^d$.)

The approach to such covering problems is to check that the whole boundary of every normal cone --- here, the sets $N(v)$ --- is fully covered by the boundary of other normal cones. However, in general, the boundary of a normal cone, as described by its facets, may be arbitrarily complex. While the normal cones are naturally given by some V-representation (see~\eqref{eq:normal-cone-characterization}), enumerating their facets falls in the same family of problems as the vertex enumeration (see \Cref{sec:conclusion}). Triangulations, \ie, decomposition in triangles in dimension $2$, or simplices in higher dimension, are of the main techniques to overcome such issues. For cones, such triangulations are stated in terms of simplicial cones. A ($k$-dimensional) \emph{simplicial cone} is a cone of the form $\cone \{w_1, \dots, w_k\}$ generated by $k$ linearly independent vectors $w_1, \dots, w_k \in \R^d$. Checking if a collection of simplicial cones forms a complete triangulation of $\R^d$ (\ie, covering all $\R^d$) is made easier by the fact that every $d$-dimensional simplicial cone has precisely $d$ facets, and each facet can be found out by solving a $(d-1) \times (d-1)$ linear system. These are the foundations of the criteria developed in~\cite[Corollary~4.5.20]{DeLoeraRambauSantos2010} or~\cite[Proposition~3.1]{Pilaud_2025}. They can be informally described as follows:
\begin{theorem}[Informal statement]\label{th:covering-triangulation}
A finite collection $\Ccal$ of $d$-dimensional simplicial cones in $\R^d$
is a complete triangulation of $\R^d$ if and only if
\begin{conditions}
\item\label{cond:icop} each facet of a cone in $\Ccal$ is a facet of precisely two cones in $\Ccal$, one of each side of the facet;
\item\label{cond:ipp} there is a vector $z^\star \in \R^d$ contained in the interior
of one cone of $\Ccal$ and which does not belong to any other cone in $\Ccal$.
\end{conditions}
\end{theorem}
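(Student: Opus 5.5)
The plan is a degree (counting-function) argument. For $z \in \R^d$ let $m(z)$ be the number of cones of $\Ccal$ containing $z$ in their interior. Call $z$ \emph{generic} if it lies outside the finite union of the linear hyperplanes spanned by the facets of the cones in $\Ccal$. This complement is open and dense in $\R^d$. By definition, $m$ is locally constant on it.

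For the \emph{if} direction, the first step is to show that $m$ is constant on the generic set. Take two generic points $z, z'$ and perturb the segment $[z,z']$ so that it avoids the codimension-$2$ intersections of distinct facet hyperplanes. The segment then crosses facet hyperplanes only at points lying in the relative interior of the facets involved, and only one hyperplane at a time. At such a crossing point $p$ on a hyperplane $H$, consider the cones $C \in \Ccal$ that have a facet containing $p$ in its relative interior.

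Condition~\ref{cond:icop} pairs these cones up. Each such facet $F$ (which must lie in $H$) is shared by exactly two cones lying on opposite sides of $H$. I will need a small care point here: distinct facets of $\Ccal$ lying in $H$ may overlap near $p$. To handle this, I sum over all facets of $\Ccal$ in $H$ whose relative interior contains $p$. Each such facet contributes one cone leaving and one cone entering as the path crosses $H$. Cones not having a facet through $p$ either contain a neighbourhood of $p$ or avoid one, so their status is unchanged. Hence $m$ has the same value just before and just after the crossing, and $m$ is constant, say equal to $k$.

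Condition~\ref{cond:ipp} then fixes $k$. The point $z^\star$ lies in the interior of exactly one cone and in no other cone, so a small generic perturbation of $z^\star$ still lies in exactly one cone, giving $k = 1$. Every generic point is therefore in exactly one cone interior. By density and closedness of the cones, the union of the cones is $\R^d$. The interiors are pairwise disjoint: an open intersection of two interiors would contain a generic point with $m \ge 2$.

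One further point remains: the cones must intersect in common faces, which is part of "triangulation" in the sense of~\cite{DeLoeraRambauSantos2010}. To obtain it, I apply the same crossing argument inside each facet hyperplane, using the pairing of facets and $m = 1$. This shows that each codimension-one cell is met properly, and then I induct on faces. This face-to-face step is where I expect the main technical friction. If the informal statement only intends "covering with disjoint interiors", that step can be dropped.

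For the \emph{only if} direction, assume $\Ccal$ is a complete triangulation and let $F$ be a facet of $C \in \Ccal$. Take $p \in \relint F$ generic within $H = \linspan F$, and a short segment crossing $H$ at $p$. The side opposite to $C$ must be covered by some cone $C'$. Face-to-face intersection then forces $F$ to be a facet of $C'$, and disjointness of interiors forces $C'$ to be unique, which gives Condition~\ref{cond:icop}. Any generic $z^\star$ lies in the interior of exactly one cone and in no other cone, which gives Condition~\ref{cond:ipp}.

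The main obstacle I anticipate is the handling of degenerate crossings in the constancy step. Choosing paths that avoid codimension-$2$ strata, and counting by facets rather than by cones, should resolve it.
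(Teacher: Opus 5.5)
Your ``only if'' direction and the covering half of the ``if'' direction are correct. The paper does not prove this informal statement; it imports it from the cited references. Your covering argument is nevertheless the path-crossing argument of its proof of \Cref{th:odd-covering}: a perturbed segment meets one facet hyperplane at a time, at relative-interior points, and the cones with a facet through the crossing point are paired off. The clause ``one on each side'' is what lets you track the exact count $m$ rather than only its parity. The paper stops at covering (\Cref{cor:covering}), which is all its certificate needs.

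The genuine gap is the face-to-face property, and you cannot fall back on dropping it. Under the reading ``covering with disjoint interiors'' the ``only if'' direction is false: split $\cone\{e_1,e_2,e_3\}$ in the fan of the eight orthants of $\R^3$ along the ray $e_1+e_2$, but do not split its neighbour $\cone\{e_1,e_2,-e_3\}$; then $\cone\{e_1,e_2\}$ is a facet of only one cone. Your own ``only if'' proof also uses face-to-face intersections.

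Your sketch does not supply this property. The facets lying in a hyperplane $H$ need not cover $H$, and Condition~\ref{cond:icop} pairs cones across a facet, not facets across a ridge inside $H$, so there is nothing to propagate there. Codimension one is in fact the easy case: a point of $\relint F$ has a neighbourhood covered by the two cones sharing $F$, so no third cone contains it. The difficulty lies at faces of codimension $\geq 2$, about which Condition~\ref{cond:icop} says nothing directly, and ``induct on faces'' does not address it.

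The missing idea is to rerun your crossing argument in the star of a point.
\begin{itemize}
\item Fix $p$, and for each cone $C \ni p$ let $G_C$ be the face of $C$ with $p \in \relint G_C$.
\item If $F \ni p$ is a facet of $C$ and $C'$ is its partner, then $G_C$ is a face of $F$, hence of $C'$, so $G_{C'} = G_C$. Thus $G_C$ equals a fixed face $G$ on each class $\mathcal{A}$ of cones through $p$ connected by adjacencies through facets containing $p$.
\item The images of the cones $C + \linspan G$ ($C \in \mathcal{A}$) in $\R^d/\linspan G$ are full-dimensional simplicial cones whose facets are paired as in Condition~\ref{cond:icop}. Your argument therefore shows that they cover $\R^d/\linspan G$.
\item Each $C \in \mathcal{A}$ coincides with $C + \linspan G$ near $p$, so $\mathcal{A}$ alone covers a neighbourhood of $p$. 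Disjointness of interiors then forces every cone through $p$ to lie in $\mathcal{A}$.
\item Taking $p \in \relint(C_1 \cap C_2)$ gives $C_1 \cap C_2 \subseteq G \subseteq C_1 \cap C_2$, so $C_1 \cap C_2$ is a common face of $C_1$ and $C_2$.
\end{itemize}
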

The right-hand side of \Cref{fig:cross-polytope} illustrates this criterion by depicting a decomposition of the normal fan into simplicial cones which, when intersected with $S^2$, provides a triangulation of the sphere. As an illustration of \Cref{cond:icop}, the darker spherical edge depicted in the foreground is contained in exactly two triangles, one on each side.

\begin{figure*}[t]

\definecolor{vplusone}{RGB}{230,159,0}
\definecolor{vplustwo}{RGB}{204,121,167}
\definecolor{vplusthree}{RGB}{86,180,233}
\definecolor{vminusone}{RGB}{0,114,178}
\definecolor{vminustwo}{RGB}{213,94,0}
\definecolor{vminusthree}{RGB}{0,158,115}

\pgfplotsset{
  colormap={vplusonemap}{color(0cm)=(vplusone);color(1cm)=(vplusone)},
  colormap={vplustwomap}{color(0cm)=(vplustwo);color(1cm)=(vplustwo)},
  colormap={vplusthreemap}{color(0cm)=(vplusthree);color(1cm)=(vplusthree)},
  colormap={vminusonemap}{color(0cm)=(vminusone);color(1cm)=(vminusone)},
  colormap={vminustwomap}{color(0cm)=(vminustwo);color(1cm)=(vminustwo)},
  colormap={vminusthreemap}{color(0cm)=(vminusthree);color(1cm)=(vminusthree)},
  fan boundary back/.style={no marks,draw=black!32,line width=.30pt,
    densely dotted,opacity=.72,samples y=0},
  fan boundary front/.style={no marks,draw=black!42,line width=.35pt,
    samples y=0},
  fan diagonal back/.style={no marks,draw=black!28,line width=.25pt,
    densely dotted,opacity=.68,samples y=0},
  fan diagonal front/.style={no marks,draw=black!38,line width=.30pt,
    samples y=0},
  fan featured edge/.style={no marks,draw=black!68,line width=1.00pt,
    samples y=0}
}

\def\viewaz{118}
\def\viewel{18}

\def\triU{(-1+2*x)}
\def\triV{(-1+2*x*y)}
\def\triNorm{sqrt(1+(\triU)^2+(\triV)^2)}
\newcommand{\TriangularShade}[5]{%
  \addplot3[surf,shader=interp,faceted color=,
    colormap name=#1,draw=none,opacity=#2,
    domain=0:1,y domain=0:1,samples=8]
    ({#3},{#4},{#5});%
}

\begin{tikzpicture}[scale=1]
  \begin{axis}[
    at={(0,0)},anchor=center,
    hide axis,view={\viewaz}{\viewel},axis equal image,
    width=8.6cm,height=8.6cm,
    xmin=-1.18,xmax=1.18,ymin=-1.18,ymax=1.18,zmin=-1.18,zmax=1.18,
    scale mode=scale uniformly,
    clip=false
  ]
    \addplot3[gray!43,line width=.45pt]
      coordinates {(-1.2,0,0) (1,0,0)};
    \addplot3[gray!43,line width=.45pt,-stealth]
      coordinates {(0,-1.2,0) (0,1.2,0)};
    \addplot3[gray!43,line width=.45pt,-stealth]
      coordinates {(0,0,-1.2) (0,0,1.2)};
    \node[gray!62,font=\footnotesize,inner sep=1pt]
      at (axis cs:1.4,0,0) {$x_1$};
    \node[gray!62,font=\footnotesize,inner sep=1pt]
      at (axis cs:0,1.3,0) {$x_2$};
    \node[gray!62,font=\footnotesize,inner sep=1pt]
      at (axis cs:0,0,1.3) {$x_3$};

    \foreach \sy in {-1,1}{
      \addplot3[black!30,line width=.8pt,dashed]
        coordinates {(-1,0,0) (0,\sy,0)};
    }
    \foreach \sz in {-1,1}{
      \addplot3[black!30,line width=.8pt,dashed]
        coordinates {(-1,0,0) (0,0,\sz)};
    }

    \foreach \sy in {-1,1}{
      \foreach \sz in {-1,1}{
        \addplot3[fill=black!12,draw=none,fill opacity=.16]
          coordinates {(1,0,0) (0,\sy,0) (0,0,\sz) (1,0,0)};
      }
    }

    \foreach \sy in {-1,1}{
      \addplot3[black!62,line width=1.05pt]
        coordinates {(1,0,0) (0,\sy,0)};
    }
    \foreach \sz in {-1,1}{
      \addplot3[black!62,line width=1.05pt]
        coordinates {(1,0,0) (0,0,\sz)};
    }
    \foreach \sy in {-1,1}{
      \foreach \sz in {-1,1}{
        \addplot3[black!62,line width=1.05pt]
          coordinates {(0,\sy,0) (0,0,\sz)};
      }
    }

    \addplot3[gray!43,line width=.45pt,-stealth]
      coordinates {(1,0,0) (1.28,0,0)};

    \addplot3[only marks,mark=*,mark size=2.65pt,vplusone,
      mark options={fill opacity=.88,draw=white,draw opacity=.8,line width=.35pt}]
      coordinates {(1,0,0)};
    \addplot3[only marks,mark=*,mark size=2.65pt,vplustwo,
      mark options={fill opacity=.88,draw=white,draw opacity=.8,line width=.35pt}]
      coordinates {(0,1,0)};
    \addplot3[only marks,mark=*,mark size=2.65pt,vplusthree,
      mark options={fill opacity=.88,draw=white,draw opacity=.8,line width=.35pt}]
      coordinates {(0,0,1)};
    \addplot3[only marks,mark=*,mark size=2.65pt,vminusone,
      mark options={fill opacity=.88,draw=white,draw opacity=.8,line width=.35pt}]
      coordinates {(-1,0,0)};
    \addplot3[only marks,mark=*,mark size=2.65pt,vminustwo,
      mark options={fill opacity=.88,draw=white,draw opacity=.8,line width=.35pt}]
      coordinates {(0,-1,0)};
    \addplot3[only marks,mark=*,mark size=2.65pt,vminusthree,
      mark options={fill opacity=.88,draw=white,draw opacity=.8,line width=.35pt}]
      coordinates {(0,0,-1)};
  \end{axis}

  \newcommand{\NormalFan}[2]{%
  \begin{axis}[
    at={(#1,0)},anchor=center,
    hide axis,view={\viewaz}{\viewel},axis equal image,
    width=7.2cm,height=7.2cm,
    xmin=-1.12,xmax=1.12,ymin=-1.12,ymax=1.12,zmin=-1.12,zmax=1.12,
    scale mode=scale uniformly
  ]
    \pgfmathsetmacro{\camx}{sin(\viewaz)*cos(\viewel)}
    \pgfmathsetmacro{\camy}{-cos(\viewaz)*cos(\viewel)}
    \pgfmathsetmacro{\camz}{sin(\viewel)}

    \foreach \sy in {-1,1}{
      \addplot3[fan boundary back,domain=-1:1,samples=25,variable=\u]
        ({-1/sqrt(2+\u^2)},{\sy/sqrt(2+\u^2)},{\u/sqrt(2+\u^2)});
    }
    \foreach \sz in {-1,1}{
      \addplot3[fan boundary back,domain=-1:1,samples=25,variable=\u]
        ({-1/sqrt(2+\u^2)},{\u/sqrt(2+\u^2)},{\sz/sqrt(2+\u^2)});
    }

    \foreach \sy/\sz in {1/1,1/-1,-1/1,-1/-1}{
      \pgfmathsetmacro{\tsil}{-(\camy*\sy+\camz*\sz)/\camx}
      \addplot3[fan boundary back,domain=-1:\tsil,samples=18,variable=\u]
        ({\u/sqrt(2+\u^2)},{\sy/sqrt(2+\u^2)},{\sz/sqrt(2+\u^2)});
    }

    \ifnum#2=1\relax
    \addplot3[fan diagonal back,domain=-1:1,samples=25,variable=\u]
      ({-1/sqrt(1+2*\u^2)},{\u/sqrt(1+2*\u^2)},{\u/sqrt(1+2*\u^2)});
    \foreach \sy in {-1,1}{
      \pgfmathsetmacro{\tdiag}{-\camy*\sy/(\camx+\camz)}
      \addplot3[fan diagonal back,domain=-1:\tdiag,samples=18,variable=\u]
        ({\u/sqrt(1+2*\u^2)},{\sy/sqrt(1+2*\u^2)},{\u/sqrt(1+2*\u^2)});
    }
    \foreach \sz in {-1,1}{
      \pgfmathsetmacro{\tdiag}{-\camz*\sz/(\camx+\camy)}
      \addplot3[fan diagonal back,domain=-1:\tdiag,samples=18,variable=\u]
        ({\u/sqrt(1+2*\u^2)},{\u/sqrt(1+2*\u^2)},{\sz/sqrt(1+2*\u^2)});
    }
    \fi

    \addplot3[surf,shader=interp,faceted color=,
      colormap name=vminusonemap,draw=none,opacity=.08,
      domain=-1:1,y domain=-1:1,samples=8]
      ({-1/sqrt(1+x^2+y^2)},{x/sqrt(1+x^2+y^2)},{y/sqrt(1+x^2+y^2)});
    \ifnum#2=1\relax
      \TriangularShade{vminusonemap}{.08}
        {-1/\triNorm}{\triU/\triNorm}{\triV/\triNorm}
    \fi
    \addplot3[surf,shader=interp,faceted color=,
      colormap name=vminustwomap,draw=none,opacity=.08,
      domain=-1:1,y domain=-1:1,samples=8]
      ({x/sqrt(1+x^2+y^2)},{-1/sqrt(1+x^2+y^2)},{y/sqrt(1+x^2+y^2)});
    \ifnum#2=1\relax
      \TriangularShade{vminustwomap}{.08}
        {\triU/\triNorm}{-1/\triNorm}{\triV/\triNorm}
    \fi
    \addplot3[surf,shader=interp,faceted color=,
      colormap name=vminusthreemap,draw=none,opacity=.08,
      domain=-1:1,y domain=-1:1,samples=8]
      ({x/sqrt(1+x^2+y^2)},{y/sqrt(1+x^2+y^2)},{-1/sqrt(1+x^2+y^2)});
    \ifnum#2=1\relax
      \TriangularShade{vminusthreemap}{.08}
        {\triU/\triNorm}{\triV/\triNorm}{-1/\triNorm}
    \fi

    \addplot3[surf,shader=interp,faceted color=,
      colormap name=vplustwomap,draw=none,opacity=.24,
      domain=-1:1,y domain=-1:1,samples=8]
      ({x/sqrt(1+x^2+y^2)},{1/sqrt(1+x^2+y^2)},{y/sqrt(1+x^2+y^2)});
    \ifnum#2=1\relax
      \TriangularShade{vplustwomap}{.30}
        {\triU/\triNorm}{1/\triNorm}{\triV/\triNorm}
    \fi
    \addplot3[surf,shader=interp,faceted color=,
      colormap name=vplusthreemap,draw=none,opacity=.24,
      domain=-1:1,y domain=-1:1,samples=8]
      ({x/sqrt(1+x^2+y^2)},{y/sqrt(1+x^2+y^2)},{1/sqrt(1+x^2+y^2)});
    \ifnum#2=1\relax
      \TriangularShade{vplusthreemap}{.14}
        {\triU/\triNorm}{\triV/\triNorm}{1/\triNorm}
    \fi
    \addplot3[surf,shader=interp,faceted color=,
      colormap name=vplusonemap,draw=none,opacity=.24,
      domain=-1:1,y domain=-1:1,samples=8]
      ({1/sqrt(1+x^2+y^2)},{x/sqrt(1+x^2+y^2)},{y/sqrt(1+x^2+y^2)});
    \ifnum#2=1\relax
      \TriangularShade{vplusonemap}{.30}
        {1/\triNorm}{\triU/\triNorm}{\triV/\triNorm}
    \fi

    \ifnum#2=1\relax
    \addplot3[fan diagonal front,domain=-1:1,samples=25,variable=\u]
      ({1/sqrt(1+2*\u^2)},{\u/sqrt(1+2*\u^2)},{\u/sqrt(1+2*\u^2)});
    \foreach \sy in {-1,1}{
      \pgfmathsetmacro{\tdiag}{-\camy*\sy/(\camx+\camz)}
      \addplot3[fan diagonal front,domain=\tdiag:1,samples=18,variable=\u]
        ({\u/sqrt(1+2*\u^2)},{\sy/sqrt(1+2*\u^2)},{\u/sqrt(1+2*\u^2)});
    }
    \foreach \sz in {-1,1}{
      \pgfmathsetmacro{\tdiag}{-\camz*\sz/(\camx+\camy)}
      \addplot3[fan diagonal front,domain=\tdiag:1,samples=18,variable=\u]
        ({\u/sqrt(1+2*\u^2)},{\u/sqrt(1+2*\u^2)},{\sz/sqrt(1+2*\u^2)});
    }
    \fi

    \foreach \sy in {-1,1}{
      \addplot3[fan boundary front,domain=-1:1,samples=25,variable=\u]
        ({1/sqrt(2+\u^2)},{\sy/sqrt(2+\u^2)},{\u/sqrt(2+\u^2)});
    }
    \foreach \sz in {-1,1}{
      \addplot3[fan boundary front,domain=-1:1,samples=25,variable=\u]
        ({1/sqrt(2+\u^2)},{\u/sqrt(2+\u^2)},{\sz/sqrt(2+\u^2)});
    }

    \foreach \sy/\sz in {1/1,1/-1,-1/1,-1/-1}{
      \pgfmathsetmacro{\tsil}{-(\camy*\sy+\camz*\sz)/\camx}
      \addplot3[fan boundary front,domain=\tsil:1,samples=18,variable=\u]
        ({\u/sqrt(2+\u^2)},{\sy/sqrt(2+\u^2)},{\sz/sqrt(2+\u^2)});
    }

    \ifnum#2=1\relax
    \addplot3[fan featured edge,domain=-1:1,samples=35,variable=\u]
      ({1/sqrt(2+\u^2)},{1/sqrt(2+\u^2)},{\u/sqrt(2+\u^2)});
    \fi
  \end{axis}%
  }%

  \NormalFan{7.2cm}{0}
  \NormalFan{13.2cm}{1}
\end{tikzpicture}

\caption{The three-dimensional cross-polytope (left), its normal fan intersected with the sphere $S^2$ (middle), and a triangulation of the normal fan (right).}\label{fig:cross-polytope}
\end{figure*}

The drawback of the criterion of \Cref{th:covering-triangulation} is that \Cref{cond:icop} requires a significant amount of numerical computations. Indeed, we need to compute every facet of each normal cone, which is done by solving a $(d-1) \times (d-1)$ linear system. Hence, the number of such systems to be solved is proportional to the number of simplicial cones in the triangulation, which may be very large, as explained in \Cref{subsec:experiments}.

As discussed in \Cref{sec:introduction}, we propose a significant relaxation of the conditions of \Cref{th:covering-triangulation} that allows to replace a large part of the numerical burden by combinatorial tests, thus reducing the impact of a large size triangulation. To this aim, we rely on abstract simplicial complexes, which will encode decompositions in which the elementary cells are not required to be $d$-dimensional anymore, nor to intersect only on their boundaries. 

\subsection{The Statement}

An \emph{abstract simplicial complex} over a finite set $E$ is a collection $\Ksf$ of finitely many sets contained in $E$ and closed under subsets. The sets constituting $\Ksf$ are called the \emph{(abstract) simplices}. The closedness property of simplices under subsets is analogous to that exhibited by simplicial cones and their faces: given a $k$-dimensional simplicial cone generated by the vectors $w_1, \dots, w_k$, its faces are precisely the cones generated by the subsets of $\{w_1, \dots, w_k\}$. The \emph{dimension} of an abstract simplex $\sigma$ is defined as $\# \sigma - 1$, where $\# \sigma$ is the cardinality of the set $\sigma$. The \emph{dimension} of an abstract simplicial complex $\Ksf$, denoted $\dim \Ksf$ is defined as the maximum dimension of its simplices. In the sequel, we restrict to the case where $\Ksf$ is \emph{pure}, \ie, all maximal (inclusionwise) simplices have the same dimension (equal to $\dim \Ksf)$. Under this assumption, a simplex of dimension $\dim \Ksf$ is called a \emph{facet} of $\Ksf$, and a simplex of dimension $(\dim \Ksf) -1$ is called a \emph{ridge}. 

In our setting, we use pure abstract simplicial complexes over the ground set $E = [n]$ of dimension $d-1$ (\ie, the dimension of the sphere $S^{d-1}$ rather than that of $\R^d$, due to homogeneity). Thus, facets are subsets of $[n]$ with cardinality~$d$, and ridges have cardinality $d-1$. Such an abstract complex $\Ksf$ can be realized geometrically into a collection of cones by mapping every element $i \in [n]$ to the vector $a_i$, and, by extension, every simplex $\sigma \in \Ksf$ to the cone $a_\sigma \coloneqq \cone{\{a_i\}_{i \in \sigma}}$. The dimension of the cone $a_\sigma$ is upper bounded by $\# \sigma$. Equality holds if and only if the $a_i$ for $i \in \sigma$ are linearly independent, \ie, $a_\sigma$ is a simplicial cone. Simpliciality will not be required in general in the conditions of the odd-covering criterion below. We also point out a subtle terminology discrepancy between polyhedral fans and abstract simplicial complexes that may be a source of confusion: in the case the cone $a_\sigma$ is a $d$-dimensional simplicial cone, the simplex $\sigma$ is a facet of the abstract simplicial complex, while the facets of $a_\sigma$ correspond to cones $a_\tau$ where $\tau$ is a ridge. We recommend the reader to think of the ridges as the analogous of the facets of cones discussed in \Cref{subsec:overview}.

\begin{example}\label{example:asc}
\Cref{fig:odd-covering} can be used to illustrate a case where $d = 3$, $n = 6$. The abstract simplicial complex $\Ksf$ is given by the facets $\{1, 2, i\}$ for $i = 3, \dots, 6$ and their subsets. All four facets share the ridge $\{1,2\}$. Cones are again represented by the intersection with the sphere $S^2$. As expected from the generality of the setting, the cones $a_\sigma$ for $\sigma$ facets may intersect over their interiors (\eg, $a_{\{1,2,4\}}$, $a_{\{1,2,5\}}$ and $a_{\{1,2,6\}}$). To keep the illustration readable, we do not depict the case in which some of these cones are not $3$-dimensional.
\end{example}

We start by stating a covering result for $d$-generic vectors (\Cref{th:odd-covering}). A vector $x \in \R^d$ is $k$-\emph{generic} if $x$ does not belong to any cone $a_\sigma$ satisfying $\dim a_\sigma < k$. 
The set of $d$-generic vectors is open dense in $\R^d$, as shown in \Cref{lemma:open-dense} in \Cref{app:odd-covering}. Moreover, as the boundary of cones $a_\sigma$ is made of cones of dimension less than $d$, a $d$-generic vector can only belong to the interior of ($d$-dimensional) cones $a_\sigma$.

The main result of this section is the following:
\begin{theorem}\label{th:odd-covering}
Let $\Ksf$ be a pure $(d-1)$-dimensional abstract simplicial complex over the set $[n]$, and a collection of nonzero vectors $(a_i)_{i \in [n]} \subset \R^d$. Suppose that
\begin{conditions}
\item\label{cond:ridge} each ridge of $\Ksf$ is contained in an even number of facets;
\item\label{cond:root} there is a $d$-generic vector $z^\star \in \R^d$ such that $z^\star \in a_\sigma$ for an odd number of facets $\sigma$ of $\Ksf$;
\item\label{cond:pointed} for all facet $\sigma$ of $\Ksf$, $a_\sigma$ is a pointed cone, \ie, $a_\sigma \cap (-a_\sigma) = \{0\}$.
\end{conditions}
Then, every $d$-generic vector $u \in \R^d$ satisfies $u \in a_\sigma$ for an odd number of facets $\sigma$ of $\Ksf$. 
\end{theorem}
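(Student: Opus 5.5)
The plan is a parity (degree) argument along a generic path from $z^\star$ to $u$. For $x$ $d$-generic, let $c(x) \coloneqq \#\{\sigma \text{ facet of } \Ksf \colon x \in a_\sigma\} \bmod 2$. I will show that $c$ is constant along a suitably generic segment from $z^\star$ to $u$. By \Cref{cond:root}, $c(z^\star)=1$, so this gives $c(u)=1$. First I handle the case where the segment $[z^\star,u]$ avoids the origin. If $u$ is a negative multiple of $z^\star$, I route through an intermediate $d$-generic point $w$, which exists because $d$-generic vectors form an open dense set by \Cref{lemma:open-dense}. I then perturb the endpoints' path, keeping $z^\star$ and $u$ fixed but possibly inserting a generic intermediate vertex, so that the path meets only cones $a_\tau$ of dimension $\ge d-1$, and meets the $(d-1)$-dimensional ones transversally at isolated points lying in the relative interior of exactly one $(d-1)$-dimensional linear span at a time.

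\textbf{Step 1 (genericity of the path).} The union of all cones $a_\tau$ with $\dim a_\tau \le d-2$, over all subsets $\tau\subseteq[n]$, is contained in finitely many linear subspaces of codimension $\ge 2$. A generic segment (equivalently a generic piecewise-linear path, after a small perturbation of an intermediate vertex) avoids them. Likewise, the path can be taken to cross the finitely many hyperplanes $H = \linspan\{a_i\}_{i\in\tau}$ of $(d-1)$-dimensional cones $a_\tau$ transversally, at distinct times, and never at points of two distinct such hyperplanes. Between crossing times, the function $c$ along the path is constant: membership of a $d$-generic point in a $d$-dimensional cone $a_\sigma$ can only change when the point crosses its boundary, which lies in the union of its $(d-1)$-dimensional faces $a_\tau$ with $\tau \subsetneq \sigma$. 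Since any boundary point of $a_\sigma$ lies in some $a_\tau$ with $\#\tau=d-1$ (a Carathéodory-type statement), only ridges $\tau$ matter. Facets $\sigma$ with $\dim a_\sigma<d$ never contain the generic points of the path.

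\textbf{Step 2 (local crossing analysis).} Fix a crossing point $p$ in the hyperplane $H$. Let $R_H$ be the set of ridges $\tau$ with $\dim a_\tau = d-1$, $\linspan a_\tau = H$ and $p\in a_\tau$. By genericity $p$ lies in the relative interior of each such $a_\tau$. For every facet $\sigma$, I compare membership just before and just after $p$. The cone $a_\sigma$ contains $p$ in its interior, in which case nothing changes, or $p$ lies on its boundary. In the boundary case, near $p$ the cone $a_\sigma$ looks like a halfspace bounded by $H$. This uses pointedness (\Cref{cond:pointed}) together with $\dim a_\sigma=d$ to ensure that $a_\sigma$ lies on one side of $H$ locally and has $H$-facets generated by subsets of $\sigma$. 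Indeed, the facet of $a_\sigma$ through $p$ is $a_\sigma\cap H = \cone\{a_i\}_{i\in\sigma\cap H}$. Since $p$ avoids lower-dimensional cones, $p$ lies in $a_\tau$ for some ridge $\tau\subset\sigma$ with $a_\tau\subseteq H$. If $\sigma\cap H$ has $d-1$ indices, then $a_\sigma$ toggles membership exactly when $\sigma=\tau\cup\{j\}$ with $a_j\notin H$. Each such $\sigma$ therefore toggles once per crossing. If $\sigma\cap H$ is larger, I must count how many ridges of $\sigma$ in $H$ contain $p$, and argue that it toggles with the parity of that count. Hence the change of $c$ at $p$ equals $\sum_{\tau\in R_H}\#\{j\notin H : \tau\cup\{j\}\in\Ksf\}$ modulo $2$, possibly corrected by facets with indices in $H$.

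\textbf{Step 3 (using \Cref{cond:ridge}).} The facets containing $\tau$ are of two kinds: $\tau\cup\{j\}$ with $a_j\in H$, which yields a cone of dimension $\le d-1$ that does not contribute, and those with $a_j\notin H$. The hard point is that \Cref{cond:ridge} counts \emph{all} facets containing $\tau$, including the degenerate ones with $a_j\in H$, and also counts facets on both sides of $H$ indiscriminately. My plan is to avoid side-by-side bookkeeping altogether by counting parity only. A facet $\sigma=\tau\cup\{j\}$ with $a_j\notin H$ contains exactly one of the points just before or just after $p$, so it toggles. Degenerate facets never contain path points. The change of $c$ mod $2$ is then $\sum_{\tau}\deg(\tau)$ minus the degenerate counts. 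To eliminate the degenerate terms, I would try to show that, within $H$, the degenerate facets (which are $(d-1)$-element sets spanning $H$ in dimension $d-1$) pair up via \Cref{cond:ridge} applied inside $H$. Equivalently, I would apply the theorem one dimension down, by induction on $d$, to the link-type complex in $H$. This would show that $p$, which is $(d-1)$-generic in $H$, is covered an even or odd number of times consistently.

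This last point, reconciling the multiple-ridge and degenerate-facet contributions at a crossing, is where I expect the main obstacle to lie. If the local analysis is done correctly the total change is even, so $c$ is constant along the path and $c(u)=c(z^\star)=1$.
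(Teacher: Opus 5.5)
Your overall route is the paper's: a generic path from $z^\star$ to $u$, constancy of the covering set between crossings, and a parity count at each crossing using \Cref{cond:ridge}. Inserting a generic intermediate vertex instead of perturbing $z^\star$, and forcing each crossing to meet a single hyperplane $H$, are harmless variations. Your Steps~2--3 also correctly show that the parity change at a crossing point $p$ equals the parity of the degenerate term
\[
\sum_{\tau \in R_H} \#\{ j \colon \tau \cup \{j\} \in \Ksf,\ a_j \in H\} \;=\; \sum_{\sigma \in \facets,\ a_\sigma \subseteq H} \#\{\tau \in \partial\sigma \colon p \in a_\tau\} .
\]
But you stop there. Showing that this term is even is the actual content of the theorem, and ``if the local analysis is done correctly the total change is even'' is an assertion, not a proof. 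The missing ingredient is the paper's \Cref{lemma:non-full-dim-cones}: if $a_\sigma$ is pointed of dimension $d-1$ and $p \in a_\sigma$ is $(d-1)$-generic, then exactly two $\tau \in \partial\sigma$ satisfy $p \in a_\tau$. It is a simplex-pivot argument:
\begin{itemize}
\item choose $\tau \subset \sigma$ with $p \in a_\tau$, write $\sigma = \tau \cup \{j\}$, and write $a_j = \sum_{i \in \tau} \lambda_i a_i$ (uniquely);
\item pointedness and $a_j \neq 0$ force some $\lambda_i > 0$;
\item the ratio test on $p = \sum_{i\in\tau}(\mu_i - \alpha\lambda_i)a_i + \alpha a_j$ yields a second ridge $\sigma \setminus \{k\}$ containing $p$, and genericity of $p$ makes it unique.
\end{itemize}
Each summand on the right is then $0$ or $2$.

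This is not a cosmetic gap, because \Cref{cond:pointed} is needed exactly here. Take $d=2$, $a_1 = e_1$, $a_2 = -e_1$, $a_3 = e_2$, and facets $\{1,2\},\{1,3\},\{2,3\}$. Every vertex lies in two facets and $(1,1)$ is covered once, yet $(0,-1)$ is covered zero times. The segment between them crosses at $(1/2,0)$, which lies in the non-pointed degenerate cone $a_{\{1,2\}}$ but in only one of its boundary cones.

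Conversely, where your Step~2 invokes pointedness (full-dimensional facets), it is not needed; simpliciality suffices. Your subcase ``$\sigma \cap H$ larger'' also cannot occur there, since $d$ linearly independent vectors do not fit in $H$; such $\sigma$ are precisely the degenerate facets of Step~3.

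As for your induction idea, a ``link-type'' complex together with \Cref{cond:ridge} inside $H$ does not give what you need. It can, however, be rescued: apply the statement in dimension $d-1$, inside $H$, to the complex $\partial\sigma$ of a single degenerate facet $\sigma$. Each $(d-2)$-subset of $\sigma$ lies in exactly two $(d-1)$-subsets, the cones $a_\tau \subseteq a_\sigma$ are pointed, and generic points of $H \setminus a_\sigma$ (which exist since $a_\sigma$ is pointed) are covered zero times. By the contrapositive, $p$ is covered an even number of times. That would be a valid alternative to the pivot lemma, by induction on $d$ with a trivial base case. As written, though, your proposal contains neither argument.
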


As outlined above, the conditions of \Cref{th:odd-covering} generalize those of \Cref{th:covering-triangulation}. (As we shall see in \Cref{sec:certificate}, \Cref{cond:pointed} of \Cref{th:odd-covering} is a technical condition that will be automatically fulfilled thanks to a global condition on the dimension of the polytope $P$ to which we will apply the certification method.) A first difference is that the conditions of \Cref{th:odd-covering} only involve the parity of incidence numbers, rather than an exact counting. A much more significant difference is that \Cref{cond:ridge} in \Cref{th:odd-covering} is only stated in terms of the abstract simplicial complex, \ie, in terms of combinatorial conditions over the incidence of abstract ridges and facets, rather than resorting to a geometric condition like \Cref{cond:ridge} in \Cref{th:covering-triangulation} involving the solutions of linear systems. 

We now give an overview of the proof provided in \Cref{app:odd-covering}. We consider a $d$-generic vector $u$, and a $C^1$ path $\gamma \colon [0,1] \to \R^d$ from a vector $z^\star$ satisfying \Cref{cond:root} to the vector $u$, \ie, $\gamma(0) = z^\star$ and $\gamma(1) = u$. We require this path to be \emph{generic}, in the sense that it only intersects cones $a_\sigma$ ($\sigma \in \Ksf$) of dimension $d$ or $d-1$, and the intersection with every $(d-1)$-dimensional such cone is transversal (\ie, the tangent vector of the path at the intersection point is not parallel to the affine hull of the $(d-1)$-dimensional cone). Under such assumptions, $\gamma$ intersects $(d-1)$-dimensional cones only at finitely many values of $t \in [0,1]$ (\Cref{lemma:finite-breakpoints}). At these values, the vector $\gamma(t)$ is $(d-1)$-generic. Out of these values, \ie, on open intervals of $[0,1]$, the vector $\gamma(t)$ is $d$-generic. We show (\Cref{lemma:linear-path}) that such a generic path can be built from a linear path $t \mapsto z^\star + t (u - z^\star)$ up to slightly perturbing the vector $z^\star$ (which is possible since \Cref{cond:root} still holds after such a perturbation).

Then, we keep track of the parity of the cardinality of the set $S(t) \coloneqq \{ \sigma \; \text{facet of} \; \Ksf \colon \gamma(t) \in a_\sigma \}$ by continuously increasing $t$ from~$0$ to $1$. On every open interval where $\gamma(t)$ remains $d$-generic, the path does not cross the boundary of any cone (as such boundary consists of union of lower-dimensional cones), and thus $S(t)$ remains constant. The only changes on $S(t)$ may appear at a value $t^\circ$ where $\gamma(t^\circ)$ is $(d-1)$-generic but not $d$-generic. At such a breakpoint, $\gamma(t^\circ)$ lies on the facets of some cones that the path is leaving or entering. Such facets are of the form $a_\tau$ where $\tau$ is a ridge of $\Ksf$. The fact that every such ridge appears in an even number of facets $\sigma$ of $\Ksf$ allows to show that the symmetric difference $S(t^-) \symdiff S(t^+)$ has even cardinality. We refer to \Cref{fig:odd-covering} for an illustration (which must be now understood as the illustration of only a part of a larger abstract simplicial complex than the one described in \Cref{example:asc}). Proceeding in this way up to $t = 1$, we obtain that $S(1)$ has odd cardinality like $S(0)$.

\begin{figure}[t]
\definecolor{facetone}{RGB}{230,159,0}
\definecolor{facettwo}{RGB}{86,180,233}
\definecolor{facetthree}{RGB}{204,121,167}
\definecolor{facetfour}{RGB}{0,158,115}

\tikzset{
  facet side/.style={line width=.70pt,opacity=.82},
  facet vertex/.style={circle,inner sep=1.45pt,draw=white,
    line width=.25pt},
  ridge/.style={draw=black!72,line width=1.15pt},
  path/.style={draw=black!58,line width=.75pt,-{Stealth[length=4.5pt]}},
  lbl/.style={font=\footnotesize,text=black!72},
  vertex lbl/.style={font=\footnotesize,text=black!72,inner sep=1pt}
}

\begin{tikzpicture}[x=1cm,y=1cm,scale=.8]
  \coordinate (A) at (0,-2.15);
  \coordinate (B) at (0, 2.15);
  \coordinate (C1) at (-2.30, 0.00);
  \coordinate (C2) at ( 2.12,-1.10);
  \coordinate (C3) at ( 2.50, 0.00);
  \coordinate (C4) at ( 2.08, 1.08);
  \coordinate (crossing) at (-.09,0);

  \path[fill=facettwo,fill opacity=.20]
    (A) -- (C2) -- (B)
    .. controls (-.12,.75) and (-.12,-.75) .. (A) -- cycle;
  \path[fill=facetthree,fill opacity=.20]
    (A) -- (C3) -- (B)
    .. controls (-.12,.75) and (-.12,-.75) .. (A) -- cycle;
  \path[fill=facetfour,fill opacity=.20]
    (A) -- (C4) -- (B)
    .. controls (-.12,.75) and (-.12,-.75) .. (A) -- cycle;

  \path[fill=facetone,fill opacity=.25]
    (A) -- (C1) -- (B)
    .. controls (-.12,.75) and (-.12,-.75) .. (A) -- cycle;

  \draw[facet side,draw=facettwo!70!black] (A) -- (C2) -- (B);
  \draw[facet side,draw=facetthree!70!black] (A) -- (C3) -- (B);
  \draw[facet side,draw=facetfour!70!black] (A) -- (C4) -- (B);
  \draw[facet side,draw=facetone!70!black] (A) -- (C1) -- (B);

  \draw[ridge]
    (A) .. controls (-.12,-.75) and (-.12,.75) .. (B);

  \node[facet vertex,fill=facetone] at (C1) {};
  \node[facet vertex,fill=facettwo] at (C2) {};
  \node[facet vertex,fill=facetthree] at (C3) {};
  \node[facet vertex,fill=facetfour] at (C4) {};
  \fill[black!62] (A) circle (1.35pt);
  \fill[black!62] (B) circle (1.35pt);

  \node[vertex lbl,below=4pt]       at (A)  {$a_1$};
  \node[vertex lbl,above=4pt]       at (B)  {$a_2$};
  \node[vertex lbl,left=4pt]        at (C1) {$a_3$};
  \node[vertex lbl,below right=3pt] at (C2) {$a_4$};
  \node[vertex lbl,right=4pt]       at (C3) {$a_5$};
  \node[vertex lbl,above right=3pt] at (C4) {$a_6$};

  \draw[path] (-1.18,0) -- (1.22,0);
  \fill[black!70] (-.72,0) circle (1.25pt);
  \fill[black!70] ( .72,0) circle (1.25pt);
  \filldraw[draw=black!72,fill=white,line width=.45pt]
    (crossing) circle (1.45pt);

  \node[lbl,below left={2pt and -3pt}] at (-.6,0) {$\gamma(t^-)$};
  \node[lbl,above right={1pt and .5pt}] at (crossing) {$\gamma(t^\circ)$};
  \node[lbl,below right={1pt and -3pt}] at (.6,0) {$\gamma(t^+)$};
\end{tikzpicture}

\caption{An illustration of the odd-covering criterion and its proof, in the case $d = 3$. The only cone $a_\tau$ (for $\tau$ being a ridge) met by the path $\gamma$ at $t = t^\circ$ is $\tau = \{1,2\}$. The number of cones $a_\sigma$ for $\sigma$ facet of $\Ksf$ containing the path point $\gamma(t)$ varies from $1$ at $t = t^-$ to $3$ at $t = t^+$.}\label{fig:odd-covering}
\end{figure}

\begin{remark}\label{remark:homology}
We can interpret the conditions of \Cref{th:odd-covering} in homological terms as follows. (We refer to~\cite{Hatcher,Munkres} for definitions and notation.) \Cref{cond:ridge} amounts to the fact that the $\F_2$-chain 
\[
c \coloneqq \sum_{\sigma \, \text{facet of}\, \Ksf} \sigma \in C_{d-1}(\Ksf; \F_2) 
\]
is a cycle. Recall that we can also think of every $\sigma \in \Ksf$ as a singular simplex, \ie, as the map sending $t \in \triangle_{\dim \sigma}$ to $\sum_{i \in \sigma} t_i e_i \in \R^n$ (where $e_i$ is the $i$th unit vector, and $\triangle_k \coloneqq \{ x \in \R_{\geq 0}^{k+1} \colon \sum_{i = 1}^k x_i = 1 \}$). By extension, $c$ can be seen a singular chain in $C_{d-1}(|\Ksf|; \F_2)$ where $|\Ksf| \coloneqq \bigcup_{\sigma \in \Ksf} \conv \{e_i \colon i \in \sigma\}$. In this setting, $c$ is still a cycle.

\Cref{cond:pointed} ensures that the map $f$ associating the point $\sum_{i \in \sigma} t_i e_i \in |\Ksf|$ to $\frac{z}{\|z\|} \in S^{d-1}$ where $z = \sum_{i \in \sigma} t_i a_i$, is well-defined (\ie, $z \neq 0$). As a consequence, the push-forward chain $f_\sharp(c)$ is a cycle on $S^{d-1}$. 

The interpretation of \Cref{cond:root} involves local homology groups $H_{d-1}(S^{d-1}, S^{d-1} \setminus \{y\}; \F_2)$ on $S^{d-1}$ ($y \in S^{d-1}$); see \cite[\S35]{Munkres} for a definition. We recall that the elements of $H_{d-1}(S^{d-1}, S^{d-1} \setminus \{y\}; F_2)$ are represented by local cycles, \ie, chains of $C_{d-1}(S^{d-1}; \F_2)$ whose boundary consists of singular simplices whose image does not contain $y$. We denote by $[\cdot]_y$ the homology class of such cycles. 

Then, \Cref{cond:root} states that $[f_\sharp(c)]_{y^\star}$ is not trivial. Indeed, let $y \in S^{d-1}$ be a $d$-generic point. As shown in \Cref{lemma:local-homology-simplex}, for all facet $\sigma$ of $\Ksf$, we have $[f_\sharp(\sigma)]_{y} \neq 0$ if and only if $y \in a_\sigma$. Therefore, $[f_\sharp(c)]_y \neq 0$ if and only if $y \in S^{d-1}$ belongs to an odd number of cones $a_\sigma$ for $\sigma$ facet of $\Ksf$. 

By \Cref{lemma:local-homology-isomorphic}, the inclusion maps from $H_{d-1}(S^{d-1}; \F_2)$ to the local homology groups are isomorphisms. Hence, $[f_\sharp(c)]_{y^\star} \neq 0$ implies $[f_\sharp(c)]_y \neq 0$ for all $y \in S^{d-1}$. We deduce that every $d$-generic point $y \in S^{d-1}$ belongs to an odd number of cones $a_\sigma$ for $\sigma$ facet of $\Ksf$. This yields an alternative (though not self-contained) proof of \Cref{th:odd-covering}.
\end{remark}

\begin{remark}\label{remark:pure}
The assumption that $\Ksf$ is pure was made only to make the presentation easier. It is not needed in the proof given in \Cref{app:odd-covering} (nor in the alternative argument of \Cref{remark:homology}).
\end{remark}

The following result about the covering of $\R^d$ (and not only $d$-generic vectors) is an immediate corollary. It is a consequence of the density of $d$-generic vectors in $\R^d$ (\Cref{lemma:open-dense}) and the closedness of the cones~$a_\sigma$.
\begin{corollary}\label{cor:covering}
Under the assumptions of \Cref{th:odd-covering}, the cones $a_\sigma$ where $\sigma$ ranges over the facets of $\Ksf$ cover the space $\R^d$.
\end{corollary}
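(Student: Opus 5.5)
The plan is to deduce \Cref{cor:covering} from \Cref{th:odd-covering} by a density-and-closedness argument, as the paper indicates. Let $F$ be the set of facets $\sigma$ of $\Ksf$, and set $U \coloneqq \bigcup_{\sigma \in F} a_\sigma$. We must show $U = \R^d$.

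\emph{Step 1: $U$ is closed.} Each $a_\sigma = \cone\{a_i\}_{i \in \sigma}$ is a finitely generated cone. By Minkowski--Weyl it is a polyhedral cone, that is, an intersection of finitely many closed halfspaces, so it is closed. Since $\Ksf$ is a finite collection, $F$ is finite, and $U$ is a finite union of closed sets, hence closed.

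\emph{Step 2: $U$ contains every $d$-generic vector.} Let $u$ be $d$-generic. By \Cref{th:odd-covering}, whose hypotheses are assumed, $u$ lies in $a_\sigma$ for an odd number of facets $\sigma$. An odd number is at least $1$, so $u \in U$.

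\emph{Step 3: conclude by density.} By \Cref{lemma:open-dense}, the set of $d$-generic vectors is dense in $\R^d$. Step 2 shows $U$ contains this dense set, and Step 1 shows $U$ is closed, so $U$ contains its closure, which is $\R^d$.

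The only point needing care is the density of $d$-generic vectors, and that is exactly \Cref{lemma:open-dense}. If one wanted to avoid citing it, the argument would go as follows. The non-$d$-generic vectors lie in the union of the finitely many cones $a_\sigma$ with $\dim a_\sigma < d$. Each such cone sits inside a proper linear subspace. A finite union of proper subspaces is closed with empty interior, so its complement is open and dense.
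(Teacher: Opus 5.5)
Your proof is correct and follows the paper's argument: \Cref{th:odd-covering} places every $d$-generic vector in at least one cone $a_\sigma$. The density of $d$-generic vectors (\Cref{lemma:open-dense}), together with the closedness of the finite union of the cones $a_\sigma$, then yields all of $\R^d$.
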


\section{The Certification Method}\label{sec:certificate}

Throughout this section, we suppose that $P$ is a nonempty polyhedron defined by $n$ inequalities $\scalar{a_i}{x} \leq b_i$ where $a_i \in \R^d$, $a_i \neq 0$, and $b_i \in \R$ for $i \in [n]$. Without loss of generality, we assume that the vectors $a_i$ are nonzero.\footnote{The case where $a_i = 0$ corresponds either to a tautological inequality when $b_i \geq 0$, in which case it can be eliminated from the system, or to a invalid inequality when $b_i < 0$, in which case the polyhedron is trivially empty.} (This condition is used to apply \Cref{th:odd-covering}.)

\subsection{Vertex-Set Completeness}\label{subsec:vertex-completeness}

Let $V$ be a finite set of points of $\R^d$. The certificate that we use to prove $V \supset \vertices(P)$ consists of the following components:
\begin{components}
\item for each $v \in V$, a set $I_v \subset [n]$;
\item a collection $F$ of subsets of $[n]$ with cardinality $d$, and a mapping $\phi \colon F \to V$;
\item a (undirected) graph $G$ with vertex set $F$;
\item a special subset $\sigma^\star \in F$ associated to $v^\star \coloneqq \phi(\sigma^\star)$, a collection of vectors $f_1, \dots, f_d \in \R^d$, and for all $\sigma \in \phi^{-1}(v^\star) \setminus \{\sigma^\star\}$, a vector $s_\sigma \in \R^d$.
\end{components}

The verification of the certificate consists of the following steps:
\begin{conditions}[label=T\arabic*., ref=T\arabic*]
\item\label{cond:feasibility} for all $v \in V$, $\scalar{a_i}{v} = b_i$ if $i \in I_v$, and $\scalar{a_i}{v} < b_i$ otherwise;
\item\label{cond:facet_subset} for all $\sigma \in F$, $\sigma$ is a subset of $I_{\phi(\sigma)}$;
\item\label{cond:ridge_count} every $\sigma \in F$ has degree at most $d$ in $G$, and, for each $i \in \sigma$, there is a neighbor $\sigma'$ satisfying $\sigma' \neq \sigma$ and $\sigma \setminus \{i\} \subset \sigma'$; 
\item\label{cond:rank} setting $\sigma^\star = (i_1, \dots, i_d)$, $\scalar{a_{i_k}}{f_j} > 0$ if $j = k$, and  $\scalar{a_{i_k}}{f_j} = 0$ otherwise;
\item\label{cond:sep} for all $\sigma \in \phi^{-1}(v^\star) \setminus \{\sigma^\star\}$, $s_\sigma \in \R_{\geq 0}^d$, $s_\sigma \neq 0$, and for every $i \in \sigma$, $\scalar[\big]{a_i}{\sum_j (s_\sigma)_j f_j} \leq 0$.
\end{conditions}

In order to prove that the vertex-set completeness property is implied by  \Cref{cond:feasibility,cond:facet_subset,cond:ridge_count,cond:rank,cond:sep}, we use the abstract simplicial complex $\Ksf$ generated by the elements in $F$, \ie, consisting of the subsets of $\sigma$ for all $\sigma \in F$. As every $\sigma \in F$ has cardinality $d$, the complex $\Ksf$ is pure and $(d-1)$-dimensional, and $F$ is precisely the set of its facets. Our goal is to show that we can apply \Cref{th:odd-covering} to $\Ksf$ and deduce that the cones $a_\sigma$ cover the whole space. Given $v \in V$, the normal cone $N(v)$ is intended to be covered by the cones $a_\sigma$ for the elements $\sigma \in F$ mapped to $v$ by $\phi$. In this case, we will be able to conclude that $V \supset \vertices(P)$ thanks to \Cref{prop:vertex-completeness}.

We start with two lemmas showing that $V \subset P$, and relating each cone $a_\sigma$ with the normal cone $N(v)$ of the associated point $v = \phi(\sigma)$.
\begin{lemma}\label{lemma:feasibility}
Under \Cref{cond:feasibility}, we have $v \in P$, $I_v = I(v)$, and $N(v) = \cone{\{a_i\}_{i \in I_v}}$ for all $v \in V$.
\end{lemma}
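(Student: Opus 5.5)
The lemma follows almost directly from \Cref{cond:feasibility} together with the characterization~\eqref{eq:normal-cone-characterization} of normal cones. The plan is to establish the three claims in order, each feeding the next.

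\emph{Membership.} Fix $v \in V$. Every index $i \in [n]$ is either in $I_v$ or not. If $i \in I_v$, \Cref{cond:feasibility} gives $\scalar{a_i}{v} = b_i$. Otherwise it gives $\scalar{a_i}{v} < b_i$. In both cases $\scalar{a_i}{v} \leq b_i$, so $v$ satisfies all defining inequalities of $P$, hence $v \in P$.

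\emph{Active set.} We have $I(v) = \{ i \in [n] \colon \scalar{a_i}{v} = b_i \}$, and we prove $I_v = I(v)$ by double inclusion. If $i \in I_v$, then equality holds by \Cref{cond:feasibility}, so $i \in I(v)$. Conversely, if $i \notin I_v$, the inequality is strict, so $i \notin I(v)$. Note that $I_v \subset [n]$ by the certificate's definition, so no index outside $[n]$ needs to be considered.

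\emph{Normal cone.} Since $v \in P$, the characterization~\eqref{eq:normal-cone-characterization} (\Cref{prop:normal-cone-def} in the appendix) applies and gives $N(v) = \cone{\{a_i\}_{i \in I(v)}}$. Substituting $I(v) = I_v$ yields the claim.

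There is no real obstacle here. The only point requiring care is that the normal-cone characterization is invoked only for points of $P$, which is why membership must be established first. If one wanted to avoid relying on the appendix proposition, the inclusion $\cone\{a_i\}_{i\in I(v)} \subset N(v)$ is immediate: for $y \in P$ and $i \in I(v)$ we have $\scalar{a_i}{y} \leq b_i = \scalar{a_i}{v}$, and this extends to nonnegative combinations. The reverse inclusion is the Farkas-type direction, which is exactly the content of the cited proposition.
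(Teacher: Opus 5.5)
Your proposal is correct and follows essentially the same route as the paper: you read $v \in P$ and $I_v = I(v)$ directly off \Cref{cond:feasibility}, then apply the characterization~\eqref{eq:normal-cone-characterization}, which is valid since $v \in P$, and substitute $I(v) = I_v$. Your version only spells out the details that the paper's two-line proof leaves implicit.
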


\begin{lemma}\label{lemma:facet-subset-normal-cone}
Under \Cref{cond:feasibility,cond:facet_subset}, $\phi(\sigma)$ belongs to $P$ and $a_\sigma \subset N(\phi(\sigma))$ for every $\sigma \in F$.
\end{lemma}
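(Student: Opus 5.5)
The plan is to deduce the statement directly from \Cref{lemma:feasibility} and the characterization~\eqref{eq:normal-cone-characterization}.

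Fix $\sigma \in F$ and set $v \coloneqq \phi(\sigma) \in V$. Since $\phi$ maps into $V$, \Cref{cond:feasibility} applies to $v$. \Cref{lemma:feasibility} then gives three facts: $v \in P$, $I_v = I(v)$, and $N(v) = \cone{\{a_i\}_{i \in I_v}}$. The first of these is exactly the membership claim $\phi(\sigma) \in P$.

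For the inclusion, \Cref{cond:facet_subset} gives $\sigma \subset I_{\phi(\sigma)} = I_v$. Hence every generator $a_i$ with $i \in \sigma$ lies among the generators $\{a_i\}_{i \in I_v}$ of $N(v)$. The cone operator is monotone with respect to inclusion of generating sets: a nonnegative combination of $\{a_i\}_{i\in\sigma}$ is a nonnegative combination of $\{a_i\}_{i\in I_v}$, with zero coefficients on $I_v \setminus \sigma$. Therefore
\[
a_\sigma = \cone{\{a_i\}_{i \in \sigma}} \subset \cone{\{a_i\}_{i \in I_v}} = N(v) = N(\phi(\sigma)) \, .
\]

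No step here is a genuine obstacle, since the substantive content (the identification $I_v = I(v)$ and the normal cone formula from \Cref{prop:normal-cone-def}) is already packaged in \Cref{lemma:feasibility}. The only points to check are the following.
\begin{itemize}
\item $\phi(\sigma)$ genuinely ranges over $V$, so that \Cref{lemma:feasibility} applies to it.
\item The strict inequalities in \Cref{cond:feasibility} are what force $I_v$ to coincide with the full active set $I(v)$, rather than merely being a subset of it.
\end{itemize}
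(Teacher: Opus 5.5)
Your proposal is correct and takes essentially the same route as the paper. Both apply \Cref{lemma:feasibility} to $\phi(\sigma) \in V$, which gives $\phi(\sigma) \in P$ and $N(\phi(\sigma)) = \cone{\{a_i\}_{i \in I_{\phi(\sigma)}}}$, and then use \Cref{cond:facet_subset} together with monotonicity of the cone operator to conclude $a_\sigma \subset N(\phi(\sigma))$.
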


We now study the properties of the graph $G$ induced by \Cref{cond:ridge_count}. As shown in  \Cref{lemma:neighbors,lemma:simple}, the condition first implies that $G$ is a simple graph (\ie, without loops nor multiple edges) in which every $\sigma \in F$ has precisely $d$ neighbors $\sigma^{(i)} \in F$, one per element $i \in \sigma$. These neighbors satisfy $\sigma \setminus \{i\} \subset \sigma^{(i)}$ for all $i \in \sigma$. The set $\sigma \setminus \{i\}$ is a ridge of $\Ksf$, and every ridge of $\Ksf$ arises in this way (for some $\sigma \in \Ksf$). Moreover, it is contained in the two facets $\sigma$ and $\sigma^{(i)}$. 

Given a ridge $\tau$ of $\Ksf$, we show that every facet $\sigma$ containing $\tau$ comes by pairs of adjacent facets $(\sigma, \sigma^{(i)})$ in the graph $G$ (here, $i$ is the unique element of $\sigma$ such that $\tau = \sigma \setminus \{i\}$). This is the idea underlying to the proof of the next statement, which establishes that \Cref{cond:ridge_count} entails \Cref{cond:ridge} of \Cref{th:odd-covering}. The complete proof is provided in \Cref{app:certificate}.

\begin{proposition}\label{lemma:ridge-even}
Under \Cref{cond:ridge_count}, every ridge of $\Ksf$ appears in an even number of facets. 
\end{proposition}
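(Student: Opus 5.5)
Fix a ridge $\tau$ of $\Ksf$ and let $F_\tau \coloneqq \{\sigma \in F \colon \tau \subset \sigma\}$ be the set of facets containing it. The plan is to exhibit a fixed-point-free involution on $F_\tau$ given by the edges of $G$; this forces $\# F_\tau$ to be even.

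First, I extract the structure of $G$ from \Cref{cond:ridge_count}, as announced in \Cref{lemma:neighbors,lemma:simple}. For $\sigma \in F$ and $i \in \sigma$, the condition provides a neighbor $\sigma' \neq \sigma$ with $\sigma \setminus \{i\} \subset \sigma'$. Since $\#\sigma' = \#\sigma = d$ and $\sigma' \neq \sigma$, we have $\sigma \cap \sigma' = \sigma \setminus \{i\}$ exactly. Consequently, the neighbors obtained for distinct $i, j \in \sigma$ are distinct, since their intersections with $\sigma$ differ. This gives $d$ distinct neighbors, and the degree bound $\le d$ shows there are no others. So every neighbor of $\sigma$ is $\sigma^{(i)}$ for a unique $i \in \sigma$, and it is characterized by $\sigma \cap \sigma^{(i)} = \sigma \setminus \{i\}$. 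Since $\sigma' \neq \sigma$, there are no loops. I would treat $G$ as a simple graph, which is where \Cref{lemma:simple} is used.

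Next, I define $\iota \colon F_\tau \to F_\tau$. For $\sigma \in F_\tau$, let $i$ be the unique element with $\tau = \sigma \setminus \{i\}$, and set $\iota(\sigma) \coloneqq \sigma^{(i)}$. This lies in $F_\tau$ because $\tau \subset \sigma^{(i)}$, and $\iota(\sigma) \neq \sigma$. For the involution property, put $\sigma' = \sigma^{(i)}$. Then $\sigma'$ is adjacent to $\sigma$, and since the graph is undirected, $\sigma$ is a neighbor of $\sigma'$. By the previous paragraph, $\sigma = \sigma'^{(j)}$ for the unique $j \in \sigma'$ with $\sigma' \cap \sigma = \sigma' \setminus \{j\}$. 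But $\sigma' \cap \sigma = \tau$, so $j$ is exactly the element with $\tau = \sigma' \setminus \{j\}$. Hence $\iota(\sigma') = \sigma$, and $\iota$ is an involution without fixed points. Its orbits are pairs partitioning $F_\tau$, so $\# F_\tau$ is even.

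The main obstacle is the involution step. It relies on the fact that adjacency determines the index $i$ uniquely through the intersection, so that the neighbor chosen from $\sigma'$ along $\tau$ is indeed $\sigma$ and not some other facet containing $\tau$. This requires both the exact-degree argument and the simplicity of $G$: a multi-edge or the symmetry of adjacency failing would break it. I would therefore state and prove the neighbor characterization carefully as a separate lemma before the counting.
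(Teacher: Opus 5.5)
Your proof is correct and is essentially the paper's argument. The paper labels each edge $\{\sigma,\sigma'\}$ of $G$ by $\sigma \cap \sigma'$, shows that the facets containing $\tau$ are exactly the endpoints of the edges labeled $\tau$, and uses \Cref{lemma:neighbors,lemma:simple} to see that these edges form a matching — which is precisely your fixed-point-free involution $\iota$, with your identity $\sigma \cap \sigma^{(i)} = \sigma \setminus \{i\}$ playing the same role.
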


Next, we show that \Cref{cond:root} of \Cref{th:odd-covering} holds. To this aim, we use the special facet $\sigma^\star$ involved in \Cref{cond:rank,cond:sep}. The idea is the following. \Cref{cond:rank} implies that the cone $a_{\sigma^\star}$ is a simplicial cone of dimension $d$. In fact, it can be shown that the vectors $f_1, \dots, f_d$ represents (inner) normal vectors to its facets, \ie, the cone $a_{\sigma^\star}$ is the set of vectors $u \in \R^d$ such that $\scalar{f_i}{u} \geq 0$ for all $i \in [d]$. We then pick an arbitrary vector $z^\star$, for instance $z^\star \coloneqq \sum_{i \in \sigma^\star} a_i$, in the interior of $a_{\sigma^\star}$. By the property of the normal fan (\Cref{cor:normal-cone-basic-prop}), $z^\star$ cannot lie in any normal cone $N(w)$ for $w \neq v^\star$, hence, not in any cone $a_\sigma$ satisfying $\phi(\sigma) \neq v^\star$ by \Cref{lemma:facet-subset-normal-cone}. Thanks to \Cref{cond:sep}, we show that $a_{\sigma^\star}$ is the only cone in $N(v^\star)$ containing $z^\star$ by exhibiting a hyperplane $\Big\{ z \in \R^n \colon \scalar[\big]{\sum_j (s_\sigma)_j f_j}{z} = 0 \Big\}$ separating $z^\star$ from any cone $a_\sigma$ satisfying $\phi(\sigma) = v^\star$, $\sigma \neq \sigma^\star$. In this way, we arrive at the following statement.

\begin{proposition}\label{prop:root}
Under \Cref{cond:feasibility,cond:facet_subset,cond:rank,cond:sep}, the vector $z^\star = \sum_{i \in \sigma^\star} a_i$ is $d$-generic, and the only cone $a_\sigma$ ($\sigma \in \Ksf$) containing it is $a_{\sigma^\star}$. 
\end{proposition}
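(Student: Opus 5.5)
The plan is to first extract from \Cref{cond:rank} an explicit H-description of the cone $a_{\sigma^\star}$, then rule out every other simplex of $\Ksf$ by either a normal-fan argument or a separating hyperplane. Genericity of $z^\star$ then comes for free.

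\textbf{Step 1: $a_{\sigma^\star}$ is a full-dimensional simplicial cone.} Write $\sigma^\star = (i_1, \dots, i_d)$.
\begin{itemize}
\item If $\sum_k \lambda_k a_{i_k} = 0$, pairing with $f_j$ and using \Cref{cond:rank} gives $\lambda_j \scalar{a_{i_j}}{f_j} = 0$. Hence $\lambda_j = 0$, so $a_{i_1}, \dots, a_{i_d}$ form a basis of $\R^d$.
\item Writing any $u \in \R^d$ as $u = \sum_k \lambda_k a_{i_k}$, we get $\scalar{f_j}{u} = \lambda_j \scalar{a_{i_j}}{f_j}$ with $\scalar{a_{i_j}}{f_j} > 0$.
\item Therefore $a_{\sigma^\star} = \{u \colon \scalar{f_j}{u} \geq 0 \ \forall j\}$. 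Moreover $\scalar{f_j}{z^\star} = \scalar{a_{i_j}}{f_j} > 0$ for every $j$, so $z^\star$ lies in the interior of $a_{\sigma^\star}$.
\end{itemize}

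\textbf{Step 2: every simplex whose cone contains $z^\star$ lies under $\sigma^\star$.} Let $\sigma \in \Ksf$ with $z^\star \in a_\sigma$, and pick a facet $\sigma' \in F$ with $\sigma \subset \sigma'$.

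First we show $\phi(\sigma') = v^\star$.
\begin{itemize}
\item By \Cref{lemma:facet-subset-normal-cone}, $z^\star \in a_\sigma \subset a_{\sigma'} \subset N(w)$ with $w \coloneqq \phi(\sigma') \in P$.
\item Also $z^\star \in \interior a_{\sigma^\star} \subset \interior N(v^\star)$.
\item I would use the normal-fan property (\Cref{cor:normal-cone-basic-prop}) to conclude $w = v^\star$. If that corollary is not phrased this way, argue directly. Both $v^\star$ and $w$ maximize $\scalar{z^\star}{\cdot}$ over $P$. If $w \neq v^\star$, then $c = z^\star + \eps (w - v^\star)$ lies in $N(v^\star)$ for small $\eps > 0$. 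But $\scalar{c}{w - v^\star} = \eps \|w - v^\star\|^2 > 0$, contradicting the fact that $v^\star$ maximizes $\scalar{c}{\cdot}$.
\item This step is the main point where geometry of the polytope, rather than of the cones, enters; I expect it to be the only delicate part.
\end{itemize}

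Next we show $\sigma' = \sigma^\star$. Suppose instead $\sigma' \neq \sigma^\star$, so $\sigma' \in \phi^{-1}(v^\star) \setminus \{\sigma^\star\}$, and set $g \coloneqq \sum_j (s_{\sigma'})_j f_j$.
\begin{itemize}
\item By \Cref{cond:sep}, $\scalar{a_i}{g} \leq 0$ for all $i \in \sigma'$, so $\scalar{g}{y} \leq 0$ on all of $a_{\sigma'}$.
\item On the other hand, $\scalar{g}{z^\star} = \sum_j (s_{\sigma'})_j \scalar{f_j}{z^\star} > 0$. This holds because $s_{\sigma'} \geq 0$, $s_{\sigma'} \neq 0$, and every $\scalar{f_j}{z^\star} > 0$ by Step 1.
\item This contradicts $z^\star \in a_{\sigma'}$, so $\sigma' = \sigma^\star$ and hence $\sigma \subset \sigma^\star$.
\end{itemize}

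\textbf{Step 3: $\sigma = \sigma^\star$.} Suppose $\sigma \subsetneq \sigma^\star$, so some $i_k \notin \sigma$.
\begin{itemize}
\item By \Cref{cond:rank}, $\scalar{a_i}{f_k} = 0$ for all $i \in \sigma$, so $\scalar{f_k}{y} = 0$ on $a_\sigma$.
\item But $\scalar{f_k}{z^\star} > 0$, contradicting $z^\star \in a_\sigma$. Hence $\sigma = \sigma^\star$.
\end{itemize}

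\textbf{Conclusion.} The only cone $a_\sigma$ ($\sigma \in \Ksf$) containing $z^\star$ is $a_{\sigma^\star}$. It has dimension $d$ by Step 1, so $z^\star$ lies in no cone of dimension less than $d$, i.e., $z^\star$ is $d$-generic.
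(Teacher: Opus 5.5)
Your proof is correct and follows essentially the same route as the paper: \Cref{cond:rank} makes $a_{\sigma^\star}$ a full-dimensional simplicial cone with $z^\star$ in its interior; facets mapped to a point other than $v^\star$ are excluded via the normal fan; the other facets are excluded by the separating vectors of \Cref{cond:sep}; and proper subsets of $\sigma^\star$ are excluded because their cones miss the interior. The only local differences are these: your perturbation $c = z^\star + \eps(w - v^\star)$ is a self-contained substitute for \Cref{cor:normal-cone-basic-prop}, so you do not need to first show that $v^\star$ is a vertex as the paper does; and you exclude proper subsets of $\sigma^\star$ with a vanishing $f_k$ rather than through the face structure of the simplicial cone.
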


In order to fullfill \Cref{cond:pointed} of \Cref{th:odd-covering}, it suffices to assume that $P$ is a full-dimensional. Indeed, under this assumption, every normal cone is pointed (see \Cref{lemma:normal-cone-pointed}), and so are the cones $a_\sigma$ owing the inclusion $a_\sigma \subset N(\phi(\sigma))$ (\Cref{lemma:facet-subset-normal-cone}). We are now ready to show the correctness of our certification method.

\begin{theorem}\label{th:vertex-completeness}
Let $P$ be a full-dimensional polyhedron.

If \Cref{cond:feasibility,cond:facet_subset,cond:ridge_count,cond:rank,cond:sep} are satisfied, $P$ is a polytope and the set $V$ contains all its vertices.
\end{theorem}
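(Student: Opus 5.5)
The plan is to assemble the pieces established above and apply \Cref{th:odd-covering} to the pure $(d-1)$-dimensional complex $\Ksf$ generated by $F$, then conclude with \Cref{prop:vertex-completeness}. First, \Cref{cond:feasibility} and \Cref{lemma:feasibility} give $V \subset P$. Since $P$ is nonempty by the standing assumption, this puts us in the setting of \Cref{prop:vertex-completeness}. It then suffices to prove $\bigcup_{v \in V} N(v) = \R^d$.

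Next I check the three hypotheses of \Cref{th:odd-covering}; the vectors $a_i$ are nonzero by the standing assumption of this section.
\begin{itemize}
\item \Cref{cond:ridge} of \Cref{th:odd-covering} follows from \Cref{cond:ridge_count} via \Cref{lemma:ridge-even}.
\item \Cref{cond:root} follows from \Cref{prop:root}, using \Cref{cond:feasibility,cond:facet_subset,cond:rank,cond:sep}. The vector $z^\star = \sum_{i\in\sigma^\star} a_i$ is $d$-generic and lies in exactly one cone $a_\sigma$ with $\sigma \in \Ksf$, namely $a_{\sigma^\star}$. In particular it lies in exactly one cone $a_\sigma$ with $\sigma$ a facet, and one is odd.
\item For \Cref{cond:pointed}, full-dimensionality of $P$ makes every normal cone pointed by \Cref{lemma:normal-cone-pointed}. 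Since $a_\sigma \subset N(\phi(\sigma))$ by \Cref{lemma:facet-subset-normal-cone}, each $a_\sigma$ with $\sigma \in F$ is pointed too: if $x\in a_\sigma\cap(-a_\sigma)$, then $x \in N(\phi(\sigma))\cap(-N(\phi(\sigma)))=\{0\}$.
\end{itemize}

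\Cref{cor:covering} then shows that the cones $a_\sigma$, $\sigma\in F$, cover $\R^d$. By \Cref{lemma:facet-subset-normal-cone}, each $a_\sigma \subset N(\phi(\sigma))$ with $\phi(\sigma) \in V$, so $\R^d = \bigcup_{\sigma\in F} a_\sigma \subset \bigcup_{v\in V} N(v)$. \Cref{prop:vertex-completeness} then yields that $P$ is a polytope and $V \supseteq \vertices(P)$.

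The main obstacle is bookkeeping rather than mathematics: making sure the facet set of $\Ksf$ is exactly $F$. Every element of $F$ has cardinality $d$ and $\Ksf$ is generated by $F$, so the facets are exactly the elements of $F$, and parity counts over facets are counts over $F$. The other point to check is that \Cref{prop:root} speaks of all $\sigma \in \Ksf$, which is stronger than what \Cref{cond:root} requires, so no extra work is needed. If \Cref{lemma:ridge-even} were stated only for ridges of the form $\sigma\setminus\{i\}$, I would add that every ridge of $\Ksf$ arises this way: a ridge is contained in some facet, since it is a subset of some $\sigma\in F$ and has cardinality $d-1$.
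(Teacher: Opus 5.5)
Your proposal is correct and follows the paper's own proof: you obtain the three hypotheses of \Cref{th:odd-covering} from \Cref{lemma:ridge-even}, from \Cref{prop:root}, and from \Cref{lemma:normal-cone-pointed} combined with \Cref{lemma:facet-subset-normal-cone}, then apply \Cref{cor:covering} and conclude with \Cref{prop:vertex-completeness}. You also check explicitly that $V \subset P$ via \Cref{lemma:feasibility}, which \Cref{prop:vertex-completeness} requires and the paper leaves implicit.
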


\begin{remark}\label{remark:complexity}
We briefly discuss the complexity of the certification method. All basic operations involved, namely computing scalar products in \Cref{cond:feasibility,cond:rank,cond:sep} and subset tests in \Cref{cond:facet_subset,cond:ridge_count}, can be done in polynomial time in~$d$ and~$n$. Since $\phi^{-1}(v^\star) \subset F$, it follows that the verification of \Cref{cond:feasibility,cond:facet_subset,cond:ridge_count,cond:rank,cond:sep} can be carried out in time $O(\poly(d, n) (\# V + \# F))$ where $\poly(d, n)$ denotes some polynomial factor in $d$ and $n$. As we shall see in \Cref{subsec:completeness}, in practice we take $V = \vertices(P)$ and $F$ a triangulation of the normal fan of $P$. In this setting, we have $\# F \geq \# V$ (\ie, there is at least one facet per vertex), so that the total complexity simply writes as $O(\poly(d,n) \#F)$. If, moreover, the polytope $P$ is simple, all normal cones are simplicial, and $F$ consists of exactly one facet per vertex. In this case, the certification method becomes polynomial in $d$, $n$ and $\# (\vertices(P))$. 

As explained in \Cref{subsec:experiments} in the case of cross-polytopes, there may be exponentially many more simplices in $F$ than vertices. In this situation, as remarked in~\cite{JoswigZiegler2004}, a possibly better approach consists in applying the certification method to the polar polytope~\cite[\S~2.3]{ziegler}. In more details, up to replacing the full-dimensional polytope $P \subset \R^d$ by $P - x^\circ$ where $x^\circ$ is an interior point, the inequalities defining $P$ can be put under the form $\scalar{\bar a_i}{x} \leq 1$ where $\bar a_i \in \R^d$, $i \in [n]$. We claim that it suffices to apply \Cref{th:vertex-completeness} to the polyhedron $Q \coloneqq \{ z \in \R^n \colon \forall v \in V \, , \; \scalar{v}{z} \leq 1 \}$ (which is full-dimensional as $0$ is an interior point), and certify that $Q$ is a polytope and the points $(\bar a_i)_{i \in [n]}$ contain its vertices. Indeed, the latter two properties are equivalent to $Q \subset \conv{\{\bar a_i\}_{i \in [n]}}$. The polar of the polytope $Q$ is $\conv{V}$, and the polar of $\conv{\{\bar a_i\}_{i \in [n]}}$ is the polyhedron $P = \{x \in \R^d \colon \forall i \in [n], \, \scalar{\bar a_i}{x} \leq 1 \}$. Since passing to  polars reverses set inclusion, we obtain that $P \subset \conv{V}$, which is equivalent to $V \supset \vertices(P)$ and $P$ being a polytope. This proves the claim. The polar-based approach is especially useful when $P$ is a simplicial polytope (like in the case of cross-polytopes). Indeed, under this assumption, the (candidate) polar $Q$ is a simple polytope. Following the discussion in the previous paragraph, the complexity of the certification method then becomes $O(\poly(d, \#(\vertices(P))) n)$.
\end{remark}

\subsection{Vertex Set and Vertex-Edge Graph}\label{subsec:vertex-and-graph}

In order to certify that $V = \vertices(P)$ from the results of \Cref{subsec:vertex-completeness}, it suffices to verify that every point $v \in V$ is a vertex of $P$. The standard way of certifying that a point $v \in P$ is a vertex is to show that there are $d$ linearly independent vectors among the vectors $a_i$ for $i \in I(v)$, \ie, satisfying $\scalar{a_i}{v} = b_i$ (see e.g.~\cite[Chapter~8]{Schrijver86}). However, this would reintroduce an expensive rank-based condition that we managed to avoid in the certification $\vertices(P) \subset V$. Instead, we propose a purely combinatorial approach that uses the following extra component in the certificate
\begin{components}[resume]
\item for each $v \in V$, a sequence $i_1, \dots, i_d \in I_v$ and points $w_1, \dots, w_d \in V$,
\end{components}
for which we verify the condition:
\begin{conditions}[resume,label=T\arabic*., ref=T\arabic*]
\item\label{cond:flag} for all $v \in V$, we have 
\[
\forall k \in [d] \, , \; i_k \notin I_{w_k} \enspace \text{and} \enspace \forall j < k \, , \; i_k \in I_{w_j} \, .
\]
\end{conditions}

\begin{proposition}\label{prop:flag}
Suppose that \Cref{cond:feasibility} holds, and let $v \in V$ satisfying \Cref{cond:flag}. Then the polyhedron~$P$ is full-dimensional and $v$ is a vertex of $P$.
\end{proposition}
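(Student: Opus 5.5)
The plan is to show that the vectors $a_{i_1}, \dots, a_{i_d}$ attached to $v$ by the certificate are linearly independent. All of them lie in $I_v = I(v)$ by \Cref{lemma:feasibility}, so this yields both conclusions at once. For the vertex property, the cone $N(v) = \cone{\{a_i\}_{i \in I(v)}}$ then has nonempty interior, and any $c$ in that interior defines a linear function maximized over $P$ only at $v$. Equivalently, $v$ is the unique solution of the system $\scalar{a_{i_k}}{x} = b_{i_k}$ for $k \in [d]$, so any $x \in P$ maximizing $\scalar{c}{\cdot}$ with $c = \sum_k a_{i_k}$ must satisfy all these equalities and hence equal $v$. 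For full dimensionality, I would use the points $v, w_1, \dots, w_d \in P$ (membership is again \Cref{lemma:feasibility}) and show that the differences $w_k - v$ are linearly independent, so that $\dim P = d$.

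The key step is a triangularity argument built from \Cref{cond:flag}. Consider the $d \times d$ matrix $M_{kj} \coloneqq \scalar{a_{i_k}}{w_j - v} = \scalar{a_{i_k}}{w_j} - b_{i_k}$, where the equality uses $i_k \in I_v$. By \Cref{cond:feasibility}:
\begin{itemize}
\item for $j < k$, the condition $i_k \in I_{w_j}$ gives $M_{kj} = 0$;
\item the condition $i_k \notin I_{w_k}$ gives $M_{kk} < 0$.
\end{itemize}
So $M$ is upper triangular with nonzero diagonal, hence invertible. Since $M = A W$, where $A$ has rows $\transpose{a_{i_k}}$ and $W$ has columns $w_j - v$, both $A$ and $W$ have rank $d$. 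This gives the independence of the $a_{i_k}$ and of the $w_j - v$ simultaneously.

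It remains to assemble the pieces. Full dimensionality follows because $P$ contains the affinely independent points $v, w_1, \dots, w_d$. The vertex property follows as explained above: $A$ is invertible, so $v$ is the unique point satisfying the $d$ active equalities, and it is the unique maximizer of $\scalar{c}{\cdot}$ with $c = \sum_k a_{i_k}$, since $\scalar{c}{x} \leq \sum_k b_{i_k} = \scalar{c}{v}$ on $P$ with equality forcing all $d$ equalities. I do not expect a serious obstacle. The only delicate point is handling the indices correctly so that the zero pattern is triangular: the condition concerns rows $k$ and columns $j < k$, which is precisely what \Cref{cond:flag} provides.
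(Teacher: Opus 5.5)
Your proof is correct and matches the paper's: both use the upper-triangular product of the matrix with rows $a_{i_k}^\top$ and the matrix of differences between $v$ and the $w_j$ (you take $w_j - v$, so your diagonal is negative rather than positive) to get the independence of the $a_{i_k}$ and of the $w_k - v$ at once. Your explicit unique-maximizer argument with $c = \sum_k a_{i_k}$ simply spells out the standard rank criterion for vertices that the paper cites.
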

While the proof we provide in appendix is purely algebraic, we shall see in \Cref{subsec:completeness} that \Cref{cond:flag} is related to the existence of a \emph{flag} above the point $v$ of $P$, \ie, a chain of faces $F_0 = \{v\} \subset F_1 \subset \dots \subset F_{\dim(P)}$ such that $\dim F_i = i$ for all $i$. The points $w_1, \dots, w_d$ can be built from vertices of the faces $F_i$, and the fact that the chain has length $d$ ensures $\dim P = d$.

The following result is an immediate consequence.
\begin{theorem}\label{cor:vertex-set}
If \Cref{cond:feasibility,cond:facet_subset,cond:ridge_count,cond:rank,cond:sep,cond:flag} are satisfied, $P$ is a full-dimensional polytope and the set $V$ is precisely the set of vertices of $P$.
\end{theorem}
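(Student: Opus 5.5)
The plan is to assemble the statement from \Cref{prop:flag} and \Cref{th:vertex-completeness}. These two results already do the substantive work, so what remains is checking that each hypothesis is met and that the two inclusions combine into an equality.

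First, I would apply \Cref{prop:flag} to any single $v \in V$. This uses \Cref{cond:feasibility} together with \Cref{cond:flag} at that point, and it yields that $P$ is full-dimensional. If $V$ is empty, \Cref{cond:flag} is vacuous and this step gives nothing. I would therefore handle this degenerate case separately. When $V = \emptyset$, \Cref{cond:facet_subset} forces $F = \emptyset$, because $\phi \colon F \to V$ cannot exist otherwise. But the certificate contains a distinguished element $\sigma^\star \in F$, so the certificate itself presupposes $F \neq \emptyset$, hence $V \neq \emptyset$. With full-dimensionality established, the hypothesis of \Cref{th:vertex-completeness} holds. Conditions \ref{cond:feasibility}--\ref{cond:sep} then give that $P$ is a polytope and that $\vertices(P) \subset V$.

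Second, I would prove the reverse inclusion $V \subset \vertices(P)$. For each $v \in V$, \Cref{prop:flag}, applied with \Cref{cond:feasibility} and \Cref{cond:flag} at $v$, states directly that $v$ is a vertex of $P$. Combining the two inclusions gives $V = \vertices(P)$, and full-dimensionality of $P$ was obtained in the first step.

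I expect essentially no obstacle, since the result is a direct corollary. The only point needing care is the order of the argument. Full-dimensionality must come from \Cref{prop:flag} before \Cref{th:vertex-completeness} is invoked, because that theorem assumes it. Consequently, the nonemptiness of $V$ (equivalently, the existence of $\sigma^\star$) has to be noted explicitly so that \Cref{prop:flag} can be applied at least once.
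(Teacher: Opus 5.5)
Your proposal is correct and follows the paper's argument. The paper presents this theorem as an immediate consequence of \Cref{prop:flag}, which gives full-dimensionality and $V \subset \vertices(P)$, combined with \Cref{th:vertex-completeness}, which gives that $P$ is a polytope with $\vertices(P) \subset V$. Your extra remark that $V \neq \emptyset$ is a harmless refinement the paper leaves implicit; strictly, it follows from $\sigma^\star \in F$ and the existence of the map $\phi \colon F \to V$, not from \Cref{cond:facet_subset} itself.
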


We now describe the certification of the vertex-edge graph $G(P)$ of $P$. The candidate graph is the image $\phi(G)$ of the graph $G$ by the mapping $\phi$, \ie, the non-directed graph with vertex set $\phi(F) \subset V$ and whose edges are the pairs $\{v,w\} \subset \phi(F)$ such that $v \neq w$ and there is an edge $\{\sigma, \sigma'\}$ in $G$ such that $\sigma \in \phi^{-1}(v)$ and $\sigma' \in \phi^{-1}(w)$. 
\begin{theorem}\label{th:graph-completeness}
If \Cref{cond:feasibility,cond:facet_subset,cond:ridge_count,cond:rank,cond:sep,cond:flag} are satisfied, $G(P)$ is a subgraph of $\phi(G)$.
\end{theorem}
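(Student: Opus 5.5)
The idea is to localize at the normal cone of an edge: move a generic vector across it and use the parity mechanism from the proof of \Cref{th:odd-covering} to find a graph edge of $G$ whose endpoints are mapped to $v$ and $w$.

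\textbf{Setup.} By \Cref{cor:vertex-set}, $P$ is a full-dimensional polytope and $V = \vertices(P)$. In particular every $\phi(\sigma)$ is a vertex of $P$, and \Cref{lemma:facet-subset-normal-cone} gives $a_\sigma \subset N(\phi(\sigma))$. Fix an edge $e = [v,w]$ of $P$. Since $P$ is full-dimensional, $N(e) = N(v) \cap N(w)$ is a $(d-1)$-dimensional cone spanning a hyperplane $H$. The vertices $x$ with $N(e) \subset N(x)$ are exactly the vertices of $e$, namely $v$ and $w$.

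\textbf{Choice of $c$ and the crossing.} Choose $c$ in the relative interior of $N(e)$ that is $(d-1)$-generic, i.e.\ avoids all cones $a_\sigma$ of dimension $\le d-2$. This is possible because a $(d-1)$-dimensional relatively open set cannot be covered by finitely many cones of dimension $\le d-2$. Next choose $g \notin H$, generic, and $\eps>0$ small. Then $c+\eps g$ lies in the interior of $N(v)$, $c-\eps g$ lies in the interior of $N(w)$, and both are $d$-generic; the segment between them meets the cones $a_\tau$ of dimension $<d$ only at $c$.

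\textbf{Odd counts on both sides.} By \Cref{th:odd-covering}, whose hypotheses were verified in \Cref{lemma:ridge-even,prop:root} together with pointedness, the set $S^+$ of facets whose cone contains $c+\eps g$ has odd cardinality. Each such facet satisfies $\phi(\sigma)=v$, since $c+\eps g$ lies in no other normal cone. Likewise $S^-$ is odd and $\phi \equiv w$ on $S^-$.

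\textbf{Combinatorial core.} Let $R$ be the set of ridges $\tau$ with $c \in a_\tau$. Let $X$ be the set of facets containing some $\tau \in R$, and consider the subgraph $G_c$ of $G$ whose edges are the pairs $\{\sigma,\sigma^{(i)}\}$ with $\sigma\setminus\{i\} \in R$. By \Cref{lemma:neighbors,lemma:simple} and the pairing used in \Cref{lemma:ridge-even}, the facets containing a fixed $\tau \in R$ are perfectly matched in $G_c$. Hence the degree of $\sigma$ in $G_c$ equals the number of ridges of $\sigma$ lying in $R$. Every $\sigma \in X$ has $c \in a_\sigma$, so $\phi(\sigma) \in \{v,w\}$.

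Each $\sigma \in S^+ \cup S^-$ is a simplicial $d$-cone having $c$ on its boundary. By genericity, $c$ lies in the relative interior of exactly one of its facets, so $\sigma$ has degree $1$ in $G_c$. The plan is then a handshake argument. If every other vertex of $G_c$ has even degree, then the component of some element of $S^+$ must contain another odd-degree vertex. Since $|S^+|$ is odd, not all of them can pair up within $S^+$, so some component contains vertices of both $S^+$ and $S^-$. A path in $G$ from a $\phi^{-1}(v)$-vertex to a $\phi^{-1}(w)$-vertex, all of whose vertices map into $\{v,w\}$, must contain an edge $\{\sigma,\sigma'\}$ with $\phi(\sigma)=v$ and $\phi(\sigma')=w$. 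That is exactly the desired edge of $\phi(G)$.

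\textbf{Main obstacle.} The hard step is controlling the ``flat'' facets, i.e.\ $\sigma \in X \setminus (S^+ \cup S^-)$, whose cones are degenerate or lie on one side touching $H$. I need their $G_c$-degree to be even, or else a modified parity argument that avoids this. For full-dimensional $a_\sigma$ with $c$ on its boundary but not in $S^\pm$, $c$ cannot lie on exactly one facet, since then a small move across $H$ would enter $a_\sigma$; so such a $\sigma$ would be in $S^\pm$ unless $c$ lies on two facets, and I would try to show the count is even by local genericity of $c$. For $(d-1)$-dimensional cones $a_\sigma \subset H$, the boundary of $a_\sigma$ inside $H$ consists of lower-dimensional cones, which $c$ avoids. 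So $c$ lies in the relative interior of $a_\sigma$, and the ridges $\tau \subset \sigma$ with $c\in a_\tau$ correspond to a relative-interior covering of $a_\sigma$ by its $(d-1)$-subcones.

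If the even-degree claim resists, I would fall back on a ($\bmod 2$) counting over $R$. Summing the perfect-matching parity over $\tau \in R$ gives $|S^+| + |S^-| + \sum_{\text{flat}} \deg \equiv 0$. Since $|S^+|+|S^-|$ is even, this forces $\sum_{\text{flat}}\deg$ to be even. I would then refine the count by $\phi$-value, counting $G_c$-edges between $\phi^{-1}(v)$ and $\phi^{-1}(w)$ modulo $2$, and aim to show that their number is odd, hence nonzero.
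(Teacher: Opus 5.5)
Your reduction is sound and is essentially the paper's route in graph language. The paper shows (\Cref{lemma:cochain-boundary}) that $c$ lies in $a_\tau$ for an odd number of ridges $\tau\in\partial c_v$. It then uses the perfect matching on such a $\tau$ (\Cref{lemma:boundary}) to find a $G$-edge leaving $\phi^{-1}(v)$, whose other end must map to $w$. This is your count of $G_c$-edges leaving $\phi^{-1}(v)$. However, the proposal stops at the one non-routine step, and your fallback does not bypass it. Modulo $2$, the number of $G_c$-edges between $\phi^{-1}(v)$ and $\phi^{-1}(w)$ is
\[
\sum_{\sigma\in\phi^{-1}(v)\cap X}\deg_{G_c}(\sigma)\equiv \#S^+ + \sum_{\sigma \text{ flat},\ \phi(\sigma)=v}\deg_{G_c}(\sigma).
\]
So proving it odd is exactly proving that the flat degrees are even. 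The global handshake identity you write says nothing about this.

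Two facts close the gap.

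\emph{First}, the full-dimensional flat case is empty. If $\dim a_\sigma=d$ and $c\in a_\sigma$, then $c\notin\interior a_\sigma$. Otherwise $c\in\interior N(\phi(\sigma))\cap N(x)$ for the other endpoint $x$ of $e$, contradicting \Cref{cor:normal-cone-basic-prop}. The same genericity argument you used for $S^\pm$ then shows that $c$ lies on exactly one facet $a_\tau$ of $a_\sigma$. Lying on two facets is impossible, since $a_\tau\cap a_{\tau'}=a_{\tau\cap\tau'}$ has dimension $d-2$. Transversality then puts $\sigma$ in $S^+\cup S^-$.

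\emph{Second}, if $\dim a_\sigma=d-1$, exactly two ridges $\tau\subset\sigma$ satisfy $c\in a_\tau$; this is \Cref{lemma:non-full-dim-cones} in the paper. This needs pointedness, not merely genericity. The $d$ vectors $a_i$, $i\in\sigma$, satisfy a unique (up to scaling) relation $\sum_{i\in\sigma}\mu_i a_i=0$. Pointedness of $a_\sigma$ together with $a_i\neq 0$ forces $\mu$ to have entries of both signs. Hence the nonnegative representations of $c$ form a bounded segment $\{\nu+t\mu\}$. By $(d-1)$-genericity no representation has two zero coordinates. So each endpoint has exactly one zero coordinate (two distinct ones, with $\mu$ of opposite signs), and no coordinate vanishes in the interior. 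The ridges containing $c$ are therefore exactly the two produced by the ratio test at the ends.

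Without pointedness this fails. For instance, with $d=2$, $a_2=-a_1\neq0$ and $c=a_1$, only one ridge contains $c$. So this is where the pointedness condition genuinely enters, not only as a hypothesis of \Cref{th:odd-covering}.

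With both facts, every flat vertex of $G_c$ has degree $2$, the odd-degree vertices are exactly $S^+\cup S^-$, and your component argument concludes. Also record that $S^+\neq\varnothing$ gives $\vertices(P)\subset\phi(F)$, which the word ``subgraph'' requires.
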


\Cref{th:graph-completeness} entails that $G(P)$ and $\phi(G)$ have the same vertex set (since $\vertices(V) \subset \phi(F) \subset V = \vertices(P)$), and the edge set of $G(P)$ is contained in that of $\phi(G)$. We point out that even though it does not certify that $\phi(G)$ is exactly vertex-edge graph of $P$, it shows that the diameter of $G(P)$ is lower bounded by that of $\phi(G)$ (as the two graphs have the same vertices, and the latter have more edges). Consequently, \Cref{th:graph-completeness} can be already used to certify lower bounds on diameters like in the disproof of the Hirsch conjecture.

Finally, we introduce the following condition to ensure that every edge of $\phi(G)$ is an actual edge of $G(P)$. 
\begin{conditions}[resume,label=T\arabic*., ref=T\arabic*]
\item\label{cond:incomparable-edges} for every $v \in V$, the sets $I_v \setminus I_w$ for $w$ adjacent to $v$ in $\phi(G)$ are incomparable. 
\end{conditions}

\begin{theorem}\label{th:graph}
If \Cref{cond:feasibility,cond:facet_subset,cond:ridge_count,cond:rank,cond:sep,cond:flag,cond:incomparable-edges} are satisfied, the graph $\phi(G)$ is precisely the vertex-edge graph $G(P)$ of $P$. 
\end{theorem}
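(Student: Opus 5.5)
The plan is to combine \Cref{th:graph-completeness} with a soundness argument that uses \Cref{cond:incomparable-edges}. By \Cref{cor:vertex-set}, $P$ is a full-dimensional polytope and $V = \vertices(P)$. By \Cref{th:graph-completeness}, $G(P)$ is a subgraph of $\phi(G)$ and both graphs have vertex set $V$. It then remains to show that every edge $\{v,w\}$ of $\phi(G)$ (with $v \neq w$) is an edge of $P$.

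Fix such an edge $\{v,w\}$ and consider the set
\[
F_{vw} \coloneqq \{x \in P \colon \scalar{a_i}{x} = b_i \ \text{for all } i \in I_v \cap I_w\}.
\]
This is a face of $P$ containing $v$ and $w$; for instance, it maximizes $\scalar{c}{\cdot}$ with $c = \sum_{i \in I_v\cap I_w} a_i$, or it is $P$ itself if the intersection is empty. By \Cref{lemma:feasibility}, $I_u = I(u)$ for every $u \in V$. Hence the vertices of $F_{vw}$ are exactly the vertices $u$ of $P$ with $I_u \supseteq I_v \cap I_w$. This is the standard fact that the vertices of a face are the vertices of $P$ lying in it.

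Suppose for contradiction that $\{v,w\}$ is not an edge of $P$. Since $F_{vw}$ contains the two distinct points $v,w$, we have $\dim F_{vw} \geq 1$. If $\dim F_{vw} = 1$, then $F_{vw}$ is an edge of $P$ containing $v$ and $w$, so $v$ and $w$ are adjacent, a contradiction. Hence $\dim F_{vw} \geq 2$. In a polytope of dimension $k$, every vertex lies on at least $k$ edges of that polytope, and the edges of a face are edges of $P$. So $v$ has at least two neighbors in $G(P)$ lying in $F_{vw}$, and we pick one, $u$, with $u \neq w$ (automatic here, since $w$ is not adjacent to $v$).

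Since $u$ is adjacent to $v$ in $G(P)$, it is also adjacent to $v$ in $\phi(G)$ by \Cref{th:graph-completeness}. Since $u \in F_{vw}$, we get $I_u \supseteq I_v \cap I_w$, and therefore $I_v \setminus I_u \subseteq I_v \setminus I_w$. Thus $u$ and $w$ are two distinct neighbors of $v$ in $\phi(G)$ whose sets $I_v \setminus I_u$ and $I_v \setminus I_w$ are comparable. This contradicts \Cref{cond:incomparable-edges}, so $\{v,w\}$ is an edge of $P$ and $\phi(G) = G(P)$.

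The main point needing care is the polyhedral fact that a vertex of a face of dimension at least $2$ has at least two incident edges inside that face, together with the description of the vertices of $F_{vw}$ through active sets. Both are standard, and I would cite them from \Cref{app:normal-fan}-type background (e.g.\ \cite{ziegler}) rather than reprove them. I also read \Cref{cond:incomparable-edges} as requiring the sets for distinct neighbors to be pairwise incomparable, so that even equality of the two sets would be excluded.
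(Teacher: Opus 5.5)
Your proposal is correct and follows essentially the paper's argument: your face $F_{vw}$, cut out by $I_v \cap I_w$, is the paper's minimal face $Q$ containing $[v,w]$, and an edge $[v,u]$ of it gives $I_v \setminus I_u \subseteq I_v \setminus I_w$, which \Cref{cond:incomparable-edges} forbids unless $u = w$. The only difference is cosmetic: you argue by contradiction and invoke $\dim F_{vw} \geq 2$, whereas the paper takes any edge $[v,w']$ of $Q$ and concludes $w' = w$ directly, so one incident edge suffices and the dimension case split is unnecessary.
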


\subsection{Building a Certificate from Lex-Feasible Bases}\label{subsec:completeness}

We show that the previously described certificate can always be built in the case where $P$ is a full-dimensional polytope, proving that our certification method is complete for this class. We recall that, given a general polytope $P$, we can always reduce to the full-dimensional setting by eliminating $d - \dim(P)$ variables in the inequality system.

The construction we propose actually makes use of the enumeration of the so-called lexicographically feasible bases (or lex-feasible bases for short), as described in~\cite{Avis2000} and computed by \lrs{}. Consider the perturbation $\tilde P$ of the polyhedron $P$ defined as $\tilde P \coloneqq \{x \in \R^d \colon \forall i \in [n] \, , \; \scalar{a_i}{x} \leq b_i + \eps^i \}$, where $\eps > 0$. As shown in~\cite{Avis2000}, assuming that $\eps$ is small enough, every vertex of $\tilde P$ activates precisely $d$ inequalities. Moreover, assuming $\eps$ below some threshold values, 
the sets of active inequalities $\{ i \in [n] \colon \scalar{a_i}{\tilde v} = b_i + \eps^i \}$ for $\tilde v$ ranging over the vertices of $\tilde P$ are independent of the choice of $\eps$. These sets are called the \emph{lex-feasible bases}. As proved in~\cite{Avis2000}, every lex-feasible basis $I$ is a feasible basis of $P$, meaning that the linear system $\scalar{a_i}{x} = b_i$ for $i \in I$ has a unique solution, say $v$, and this solution belongs to $P$. Thus, $v$ is a vertex of $P$, and $I \subset I(v)$. In short, every lex-feasible basis corresponds to a vertex of $P$. The following statement extends this relationship to normal fans. We recall that a \emph{simplicial fan} is a fan consisting of simplicial cones.
\begin{proposition}\label{prop:simplicial-refinement}
The normal fan of $\tilde P$ is a simplicial fan which refines the normal fan of $P$, \ie, every cone in the normal fan of $\tilde P$ is contained in some cone of the normal fan of $P$. 
\end{proposition}
It can be shown that the refinement provided by the normal fan $\tilde P$ corresponds to a \emph{lexicographic triangulation} of the normal fan of $P$, which is a special case of a regular triangulation (see~\cite[Chapters~2 and~4]{DeLoeraRambauSantos2010}).

We now build $F$ as the set of lex-feasible bases. We define $\phi$ as the mapping which associates a feasible basis $I$ to the (unique) vertex of $P$ such that $I \subset I(v)$. We take for $G$ the adjacency graph of lex-feasible bases, \ie, there is an edge between two bases $I, I'$ if $I$ and $I'$ differ by only one element. We set $V \coloneqq \vertices(P)$, and $I_v \coloneqq I(v)$ for $v \in V$.

\begin{proposition}\label{prop:t1-to-t3-complete}
The previously described components $V$, $I_v$ ($v \in V$), $F$, $\phi$ and $G$ satisfy \Cref{cond:feasibility,cond:facet_subset,cond:ridge_count}.
\end{proposition}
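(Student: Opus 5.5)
We verify the three conditions in turn. The first two are essentially immediate from the definitions and the facts about lex-feasible bases recalled from~\cite{Avis2000}. The third requires an argument about pivoting in the perturbed polytope $\tilde P$.

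\textbf{Condition T1.} Since $V = \vertices(P)$ and $I_v = I(v)$, every $v \in V$ lies in $P$. Hence $\scalar{a_i}{v} \leq b_i$ for all $i$, with equality exactly when $i \in I(v) = I_v$. This is precisely \Cref{cond:feasibility}.

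\textbf{Condition T2.} Every lex-feasible basis $I \in F$ is a feasible basis of $P$. Its unique solution $v = \phi(I)$ is therefore a vertex with $I \subset I(v) = I_{\phi(I)}$, which gives \Cref{cond:facet_subset}. We also record that each $I \in F$ has cardinality $d$ and that $\{a_i\}_{i \in I}$ is linearly independent.

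\textbf{Condition T3.} We fix a sufficiently small $\eps>0$ and work with $\tilde P$, which is simple: each vertex activates exactly the $d$ inequalities of its basis, and the $a_i$, $i \in I$, are independent. The bijection between $F$ and $\vertices(\tilde P)$ lets us transport facts about the graph of $\tilde P$.

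First, the degree bound. Neighbours of $I$ in $G$ are bases $I' = (I\setminus\{i\})\cup\{j\}$ with $i \in I$. I claim there is at most one such $I'$ per $i \in I$. The set $I \setminus \{i\}$ has $d-1$ independent normals, so $\{x : \scalar{a_k}{x} = b_k + \eps^k,\ k \in I\setminus\{i\}\}$ is a line $L$ through $\tilde v_I$. The intersection $L \cap \tilde P$ is a segment or ray (an edge of $\tilde P$, by simplicity), and $\tilde v_I$ is one endpoint. Any vertex of $\tilde P$ whose basis contains $I\setminus\{i\}$ lies on $L$, hence is an endpoint of $L\cap\tilde P$. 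So there is at most one other such vertex, and the degree of $I$ in $G$ is at most $d$.

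Second, existence of the neighbours. Since $P$, hence $\tilde P$, is bounded for small $\eps$, $L\cap\tilde P$ is a bounded segment. It is nondegenerate: moving from $\tilde v_I$ in the direction with $\scalar{a_i}{\cdot}<0$ keeps all non-active inequalities strict for a short time, by simplicity. Its other endpoint is therefore a vertex $\tilde v'\neq\tilde v_I$ whose basis $I'$ contains $I\setminus\{i\}$. Thus $I' \neq I$ and $I \setminus \{i\} \subset I'$, and $|I\symdiff I'|=2$, so $I'$ is adjacent to $I$ in $G$. This gives \Cref{cond:ridge_count}.

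\textbf{Main obstacle.} The delicate step is justifying simplicity of $\tilde P$, uniformity in $\eps$, and boundedness of $\tilde P$. For the first two, I would rely on the cited results of~\cite{Avis2000}, namely that lex-feasible bases are exactly the vertex bases of $\tilde P$ for all small $\eps$. For boundedness, I would use that the recession cone of $\tilde P$ equals that of $P$, which is $\{0\}$.
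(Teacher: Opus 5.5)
Your proof is correct and follows the paper's route. \Cref{cond:feasibility,cond:facet_subset} hold by construction, and for \Cref{cond:ridge_count} the required neighbour comes from a pivot step in the simple perturbed polytope $\tilde P$, which you carry out geometrically by walking along the bounded edge $L \cap \tilde P$ to its other endpoint. The one difference is the degree bound: you prove it directly (a vertex of $\tilde P$ lying on $L$ must be an endpoint of $L \cap \tilde P$, so each $i \in I$ gives at most one neighbour), whereas the paper cites the $d$-regularity of the lex-feasible basis graph from \cite{AllamigeonCanuStrub2023}.
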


We are free to choose $\sigma^\star$ as an arbitrary lex-feasible basis, and we set $v^\star \coloneqq \phi(\sigma^\star)$. Let us write $\sigma^\star = (i_1, \dots, i_d)$ where $i_1, \dots, i_d \in [n]$. By definition of a basis, the matrix $\begin{pmatrix} a_{i_1} & \dots & a_{i_d} \end{pmatrix}$ has rank $d$. It then suffices to take $\begin{pmatrix} f_1 & \dots & f_d \end{pmatrix}$ as the transposed inverse matrix. As we already explained in the description of \Cref{prop:root}, the simplicial cone $a_{\sigma^\star}$ is then defined by the $d$ inequalities $\scalar{f_i}{u} \geq 0$. Besides, for each $\sigma \in \phi^{-1}(\sigma^\star)$, $\sigma \neq \sigma^\star$, the existence of a hyperplane $\Big\{ z \in \R^n \colon \scalar[\big]{\sum_j (s_\sigma)_j f_j}{z} = 0 \Big\}$ weakly separating $a_{\sigma^\star}$ and $a_{\sigma}$ is a classic separation result related to Farkas Lemma, that applies because $a_\sigma$ and $a_{\sigma^\star}$ can only meet in lower dimensional cones (as a consequence of the structure of the normal fan of $\tilde P$; see \Cref{cor:normal-cone-basic-prop}). In our case, we can actually build the weights $s_\sigma \in \R^d$ from the vertex $\tilde x^{(\sigma)}$ of $\tilde P$ corresponding to the lex-feasible basis $\sigma$ respectively:
\begin{proposition}\label{prop:t4-t5-complete}
Let $\sigma^\star = (i_1, \dots, i_d)$ be an arbitrary lex-feasible basis, $\begin{pmatrix} f_1 & \dots & f_d \end{pmatrix}$ the transposed inverse matrix of $\begin{pmatrix} a_{i_1} & \dots & a_{i_d} \end{pmatrix}$ and $s_\sigma \coloneqq (b_{i_j} + \eps^{i_j} - \scalar{a_{i_j}}{x^{(\sigma)}})_{j \in [d]}$.

Then, \Cref{cond:rank,cond:sep} are satisfied.
\end{proposition}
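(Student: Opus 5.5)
The plan is to verify both conditions by direct computation, using the identity $s_\sigma = \transpose{A}(\tilde x^{(\sigma^\star)} - \tilde x^{(\sigma)})$. Here $\tilde x^{(\sigma)}$ denotes the vertex $x^{(\sigma)}$ of $\tilde P$ attached to $\sigma$ in the statement. Throughout, $\eps>0$ is fixed below the threshold of~\cite{Avis2000}, so that every vertex of $\tilde P$ activates exactly $d$ inequalities and the lex-feasible bases are exactly these active sets.

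For \Cref{cond:rank}, let $A \coloneqq \begin{pmatrix} a_{i_1} & \dots & a_{i_d}\end{pmatrix}$, which is invertible since $\sigma^\star$ is a basis. Set $\begin{pmatrix} f_1 & \dots & f_d\end{pmatrix} = A^{-\top}$. Then $\transpose{A} A^{-\top} = I_d$, which reads $\scalar{a_{i_k}}{f_j} = \delta_{jk}$. This gives $1>0$ on the diagonal and $0$ elsewhere.

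For \Cref{cond:sep}, fix $\sigma \in \phi^{-1}(v^\star)\setminus\{\sigma^\star\}$. Let $\tilde x^\star \coloneqq \tilde x^{(\sigma^\star)}$ be the vertex of $\tilde P$ with active set $\sigma^\star$, so that $\scalar{a_{i_j}}{\tilde x^\star} = b_{i_j}+\eps^{i_j}$ for all $j$.
\begin{enumerate}
\item \emph{Nonnegativity.} Since $\tilde x^{(\sigma)} \in \tilde P$, each entry $(s_\sigma)_j = b_{i_j}+\eps^{i_j} - \scalar{a_{i_j}}{\tilde x^{(\sigma)}}$ is nonnegative.
\item \emph{Key identity.} Substituting the equalities satisfied by $\tilde x^\star$ gives $s_\sigma = \transpose{A}(\tilde x^\star - \tilde x^{(\sigma)})$. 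Hence
\[
\sum_j (s_\sigma)_j f_j = A^{-\top}\transpose{A}(\tilde x^\star - \tilde x^{(\sigma)}) = \tilde x^\star - \tilde x^{(\sigma)}.
\]
\item \emph{Separation inequality.} For $i\in\sigma$, the inequality $i$ is active at $\tilde x^{(\sigma)}$, and $\tilde x^\star\in\tilde P$. Therefore
\[
\scalar[\big]{a_i}{\tilde x^\star - \tilde x^{(\sigma)}} = \scalar{a_i}{\tilde x^\star} - b_i - \eps^i \leq 0.
\]
\end{enumerate}

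The remaining point, and the main place where care is needed, is $s_\sigma \neq 0$. Suppose $s_\sigma = 0$. Then $\tilde x^{(\sigma)}$ satisfies the $d$ linearly independent equations $\scalar{a_{i_j}}{x} = b_{i_j}+\eps^{i_j}$, so $\tilde x^{(\sigma)} = \tilde x^\star$ by uniqueness of the solution. This vertex of $\tilde P$ would then activate every inequality in $\sigma\cup\sigma^\star$. By the simplicity of $\tilde P$ for small $\eps$, its active set has exactly $d$ elements, and both $\sigma$ and $\sigma^\star$ have cardinality $d$. Hence $\sigma = \sigma^\star$, which is a contradiction. The only input needed here is the result of~\cite{Avis2000} recalled above: lex-feasible bases are in bijection with the vertices of $\tilde P$, each being its exact active set. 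I would cite that result rather than reprove it. The hypothesis $\phi(\sigma)=v^\star$ is not actually needed in this argument.
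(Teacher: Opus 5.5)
Your proposal is correct and follows the paper's proof essentially step by step. Both arguments get nonnegativity of $s_\sigma$ from $\tilde x^{(\sigma)} \in \tilde P$, and both derive the identity $\sum_j (s_\sigma)_j f_j = \tilde x^{(\sigma^\star)} - \tilde x^{(\sigma)}$, the separation inequality from the activity of $i \in \sigma$ at $\tilde x^{(\sigma)}$, and $s_\sigma \neq 0$ from the non-degeneracy of lex-feasible bases; your version of that last step, where the common vertex would activate $\sigma \cup \sigma^\star$, and your use of $\tilde x^{(\sigma^\star)} \in \tilde P$ (where the paper writes $P$) are if anything slightly more precise.
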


The completeness of \Cref{cond:flag} follows from the existence of a flag $F_0 = \{v\} \subset F_1 \subset \dots \subset F_{d-1} \subset F_d = P$ above every vertex $v$ of $P$ (we refer to~\cite{ziegler}). For every $k$, $F_{k-1}$ is a proper face of $F_k$, and so there is a vertex $w_k$ of $F_k$ that does not belong to $F_{k-1}$. The point $w_k$ is a vertex of $P$, thus it belongs to $V$. We take for $i_k$ any index of inequality active over the face $F_{k-1}$ but not at $w_k$, \ie, $\scalar{a_{i_k}}{x} = b_{i_k}$ for all $x \in F_{k-1}$ and $\scalar{a_{i_k}}{w_k} < b_{i_k}$. In this way, $i_k \notin I_{w_k}$. Moreover, for all $k \in [d]$ and $j < k$, we have $i_k \in I_{w_j}$ since $w_j \in F_j \subset F_k$. The same argument shows that $i_1, \dots, i_d \in I_v$. We readily get the following result:
\begin{proposition}\label{prop:flag-complete}
For each $v \in V$, every sequence $i_1, \dots, i_d \in I_v$ and vertices $w_1, \dots, w_d \in V$ built from a flag as above satisfy \Cref{cond:flag}. 
\end{proposition}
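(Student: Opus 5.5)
The argument is a direct check of the index conditions against the flag. Fix $v \in V = \vertices(P)$, and let $F_0 = \{v\} \subset F_1 \subset \dots \subset F_d = P$ be a flag of faces with $\dim F_k = k$. Such a flag exists because $P$ is full-dimensional and its face lattice is graded; see~\cite{ziegler}.

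For each $k \in [d]$, the face $F_{k-1}$ is a proper face of $F_k$. Since $F_k$ is a polytope, it is the convex hull of its vertices, so some vertex $w_k$ of $F_k$ lies outside $F_{k-1}$. A face of a face is a face, so $w_k$ is a vertex of $P$ and hence belongs to $V$.

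Next we choose the index $i_k$. Every face of $P$ can be written as $\{x \in P \colon \scalar{a_i}{x} = b_i \text{ for all } i \in I\}$, where $I$ is the set of inequalities active on the whole face. This standard description follows from~\eqref{eq:normal-cone-characterization}: take $c$ in the relative interior of the normal cone of the face and write $c$ as a nonnegative combination of the active $a_i$. Applying it to $F_{k-1}$ shows that some inequality $i_k$ is active on all of $F_{k-1}$ but not at $w_k$; otherwise $w_k$ would satisfy every equality defining $F_{k-1}$ and would lie in $F_{k-1}$. This choice of index is the main step that needs care, and the face description above settles it.

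It remains to check \Cref{cond:flag}, using $I_w = I(w)$ for $w \in V$. First, $i_k \notin I_{w_k}$ holds by the choice of $i_k$. Second, take $j < k$. Then $w_j \in F_j \subset F_{k-1}$, and $i_k$ is active on all of $F_{k-1}$, so $i_k \in I(w_j) = I_{w_j}$. Third, $v \in F_0 \subset F_{k-1}$ gives $i_k \in I(v) = I_v$ for every $k$, so the sequence $i_1, \dots, i_d$ lies in $I_v$, as the certificate component requires.
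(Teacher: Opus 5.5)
Your proof is correct and follows the paper's argument: take $w_k$ to be a vertex of $F_k$ outside $F_{k-1}$, pick $i_k$ active on all of $F_{k-1}$ but not at $w_k$, and read off \Cref{cond:flag} from $v, w_j \in F_{k-1}$ for $j<k$. You add two small improvements: you justify that such an $i_k$ exists via the active-set description of faces, which the paper takes for granted, and you correctly use $F_j \subset F_{k-1}$ where the paper's text writes $F_j \subset F_k$.
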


We finally deal with \Cref{cond:incomparable-edges}. As proved in~\cite{AllamigeonCanuStrub2023}, the image by $\phi$ of the adjacency graph of lex-feasible bases of $P$ coincides with the vertex-edge graph $G(P)$ of $P$. With this setting, \Cref{cond:incomparable-edges} is then equivalent to the fact that for every vertex $v$ of $P$, the sets  $I(v) \setminus I(w)$ for $w$ adjacent to $v$ in $G(P)$ is adjacent are incomparaoble. Equivalently, the sets $I(v) \cap I(w)$ for $w$ adjacent to $v$ in $G(P)$ are incomparable. Notice that $I(v) \cap I(w)$ is precisely the set of inequalities active over the edge $[v,w]$ of $P$. As these sets of active inequalities uniquely characterizes the faces of $P$ (see~\cite[Chapter~8]{Schrijver86}), that of edges must be incomparable. We deduce that \Cref{cond:incomparable-edges} is satisfied.

\section{Efficient Implementation and Experiments}\label{sec:experiments}

The formalization of the certification method is provided in the project \C$homology-checker$ (see \Cref{sec:introduction} for a link to the repository). The main dependencies are the \Rocq{} libraries \CoqPolyhedra{} (which provides a formalization of the theory of convex polyhedra and their faces~\cite{AllamigeonKatzStrub2022}) and \MathComp{}.\footnote{The \CoqPolyhedra{} library currently depends on versions 1.x of \MathComp{}, which are not available for \Rocq{} 9.x. We have updated \MathComp{} and \CoqPolyhedra{} to support the most recent version 8.x of \Rocq, \ie, Coq~8.20.1. The patched \MathComp{} library and related packages are available at \url{https://github.com/Coq-Polyhedra}.}

The certification method is formalized into two versions of a checker, one which uses ``high-level'' proof-oriented types provided by \CoqPolyhedra{} and \MathComp{}, and another based on ``low-level'' computation-oriented types. The former is used to carry out a correctness proof of the certification method (\Cref{subsec:high-level}), whereas the latter is meant to be executed directly in \Rocq{} or via extraction to OCaml. An equivalence proof between the high-level and the low-level is done; see \Cref{subsec:low-level}. 

\subsection{Correctness Proof of the Certification Method}\label{subsec:high-level}

The high-level version of the checker is implemented in the file \C$src/HighLevelCertificate.v$. The certificate is formalized by means of a record type \C$Certificate$ that follows the description given in \Cref{sec:certificate}. \Cref{cond:feasibility,cond:facet_subset,cond:ridge_count,cond:rank,cond:sep,cond:flag,cond:incomparable-edges} are implemented through functions of type \C$Certificate -> bool$. There are also a number of additional test functions which verify the ``well-formedness'' of the components of the certificate that are implicit in the presentation of \Cref{sec:certificate}: for instance, the fact that the graph $G$ has vertex set $F$, the membership of $\sigma^\star$ to the set $F$, and the membership of the points $w_1, \dots, w_d$ to $V$ for each $v \in V$. 

Some preliminary notions and properties introduced in \Cref{sec:normal-fan}, in particular normal cones, are formalized in the file \C$src/covering_criterion.v$, where \Cref{prop:vertex-completeness} is proved. The odd-covering criterion of \Cref{sec:odd-covering} is stated in the file \C$src/odd_covering_theorem.v$, together with the definition and basic properties of abstract simplicial complexes. It reproduces the statement of \Cref{th:odd-covering} (more precisely, of its generalization, \Cref{cor:covering}) as follows:
\begin{lstlisting}
Definition |*ridgesHaveEvenIncidence*| K :=
  forall R : simplex_n, R \in ridgesOf K 
    -> even #|[set F in facetsOf K | R \subset F]|.§\smallskip§
Definition |*existsSpecialPoint*| K :=
  exists z : 'cV[R]_d, 
       isKGeneric (d.+1) z 
    /\ odd #|[set F in facetsOf K | z \in coneOf F]|.§\smallskip§
Definition |*conesArePointed*| K :=
  forall F : simplex_n, F \in facetsOf K -> polyhedron.pointed (coneOf F).§\smallskip§
Theorem |*odd_covering_theorem*| K :
     ridgesHaveEvenIncidence K
  -> existsSpecialPoint K
  -> conesArePointed K
  -> forall x : 'cV[R]_d, exists F : simplex_n, 
       F \in facetsOf K /\ x \in coneOf F.
\end{lstlisting}
The variable \C$K$ represents the abstract simplicial complex $\Ksf$ (the hypothesis on the fact that it is pure $d$-dimensional here is not reported here). The sets of facets and ridges are respectively provided by the terms \C$facetsOf K$ and \C$ridgesOf K$. If a facet $\sigma$ is represented by the variable \C$F$, the cone $a_\sigma$ is given by the term \C$coneOf F$. \Cref{cond:root} on $z^\star$ corresponds to the property \C$isKGeneric (d.+1) z$. (The use of the term \C$d.+1$ instead of \C$d$ is due to the shift by $1$ of the dimension of polyhedra and cones in \CoqPolyhedra{}.)

As explained in \Cref{sec:introduction}, the statement \C$odd_covering_theorem$ is still admitted. From this theorem, we formalize the proofs of the correctness of the vertex-set completeness certification (\Cref{th:vertex-completeness}) and that of the vertex-set certification (\Cref{cor:vertex-set}). This is done in the file \C$src/CertificateCorrectness.v$. The final statement there is:
\begin{lstlisting}
Theorem |*certificate_correctness*| cert :
    well_formedness cert
 -> full_dim_check cert
 -> feasibility_check cert    ((*(* T1 *)*))
 -> mapping_check cert        ((*(* T2 *)*))
 -> graph_check cert          ((*(* T3 *)*))
 -> inversibility_check cert  ((*(* T4 *)*))
 -> separability_check cert   ((*(* T5 *)*))
 -> flag_check cert           ((*(* T6 *)*))
 -> vertex_set '[polytope cert] = points cert.
\end{lstlisting}
where \C$'[polytope cert]$ and \C$points cert$ respectively extract the polytope $P$ and the set $V$ of points from the certificate. The correctness proof of the graph certification method (\Cref{th:graph-completeness,th:graph}) is still work-in-progress.

\subsection{Efficient Implementation and Refinement Proof}\label{subsec:refinement}\label{subsec:low-level}

The low-level checker (\C$src/LowLevelChecker.v$) implements the same
test functions as the high-level one, but using computation-oriented types: primitive $63$-bit integers and persistent arrays, and arbitrary-precision integers
and rationals of the \Rocq{} library \C$bignums$.\footnote{\url{https://github.com/rocq-community/bignums}} Arrays are used almost exclusively in a read-only way (the only exception is the BFS algorithm to compute graph diameters); therefore, we avoid the overhead associated with persistence.

This checker relies on a low-level
certificate that encodes the components of \Cref{sec:certificate}. An
interpretation function assigns to such a certificate the high-level
certificate it denotes, each rational denoting an element of the field, each array a finite set, each adjacency table a graph. The
certificate is stored as a compact binary serialization and is loaded inside
\Rocq{} by \C$binreader$, a plugin that allows to build low-level
records directly in memory from this format.\footnote{\url{https://github.com/Coq-Polyhedra/coq-binreader}}

The proof relating the low-lever and high-lever versions of the checker (\C$src/Refinement.v$) consists of two layers of a different nature. The lower layer is
relational: to each core functionality of the checker (bounded scans, folds,
binary search) corresponds a reflection lemma, stating that the boolean
computation holds \emph{if and only if} its mathematical counterpart holds.
The arithmetic tests on arbitrary-precision integers and rationals are
similarly related, in both directions, to the operations of the ordered field.
On the other hand, the upper layer is
refinement based: for each boolean check, we prove a one-directional
statement, namely that if the low-level check is validated, then the interpreted high-level
certificate satisfies the corresponding conditions of
\C$src/HighLevelCertificate.v$. This direction is the only
one that certification requires. Combined with the high-level statements proved in \C$src/CertificateCorrectness.v$., we obtain end-to-end statements such as
\begin{lstlisting}
Corollary |*vertex_equality*| cert :
     VtxContainment.check_certificate cert
  -> VtxEquality.check_certificate cert
  -> vertex_set '[hpoly_of cert] = points_of cert.
\end{lstlisting}
where \C$hpoly_of cert$ and \C$points_of cert$ are the polytope and the
point set described by the low-level certificate.

Finally, the extracted OCaml version of the checker (provided in the directory \C$ocaml$ of \C$homology-checker$) is obtained from the same
low-level functions, with primitive integers and arrays realized by native
OCaml ones, and arbitrary precision numbers handled by the \C$zarith$ library.\footnote{\url{https://ocaml.org/p/zarith/latest}}

\subsection{Experimental Results}\label{subsec:experiments}

We give a brief description of the instances used for the experiments, and refer to \Cref{app:experiments} for a more detailed presentation. The $n$-Birkhoff polytope (\C$birkhoff_$$n$) is the polytope of $n\times n$-doubly stochastic matrices. It has dimension $(n-1)^2$ and $n!$ vertices. The $d$-dimensional cross-polytope (\C$cross_$$d$) was already defined in \Cref{sec:odd-covering}, and its vertices are the $2d$ points $\pm e_i$ ($ i \in [d]$), where $(e_i)_{i\in[d]}$ denotes the canonical basis of $\R^d$. The $d$-dimensional cube (\C$cube_$$d$) is the polytope $[0,1]^d$. The $(d,n)$-dual cyclic polytope (\C$cyclic_$$d$\C$_$$n$) is known to maximize the number of vertices among $d$-dimensional polytopes with $n$ facets. The $(k,n)$-hypersimplex is defined as the set of points $x \in [0,1]^n$ satisfying $\sum_{i=1}^n x_i = k$. The instance \C$hypersimplex_$$d$ stands for the $(k,n)$-hypersimplex with $k = \big\lfloor\frac{d+1}{2}\big\rfloor$ and $n = d+1$. The permutahedron of order $n$ (instance \C$permutahedron_$$n$) is the convex hull of all permutations of the vector $(1,\dots,n)$. All these instances have (sub)exponentially many vertices as the associated parameter grows, except the cross-polytopes. The instances \C$poly20dim21$ and \C$poly23dim24$ are the two counterexamples to the Hirsch conjecture from~\cite{Matschke2015}. They are respectively $20$- and $23$-dimensional polytopes defined $40$ and $46$ inequalities with large rationals coefficients, and have $36\,425$ and $73\,224$ vertices. 

Benchmarks can be run from the \C$benchmarks$ directory of the project \C$homology-checker$ (we refer to the \C$README.md$ file for detailed instructions). Instances are provided in the subdirectory \C$benchmarks/data$ under the form of \lrs{} input files. The pipeline first invokes \lrs{} in order to generate the list of lex-feasible bases (which is part of the usual vertex enumeration procedure). The list is then post-processed by the certificate generator implemented in Rust to produce the certificate under the form of a binary file (project \C$lrs-postprocess$; see \Cref{sec:introduction}). The certificate generation follows the construction described in \Cref{subsec:completeness}. The time taken by the generation step is less than $20\%$ of the \lrs{} time (except for cubes where it is around $40\%$). The resulting certificate is finally loaded into \Rocq{} through the \C$binreader$ plugin. 

We consider two execution modes of the checker: one running directly inside \Rocq{} via its bytecode virtual machine (the \C$vm_compute$ tactic), and one obtained by extracting the low-level checker to OCaml and compiling it to native code. Measurements are carried out on a 12th~Gen Intel Core i7-12700H with 32\,GB of memory, with every benchmark pinned to a performance core, turbo mode and C-states disabled. For both sets of experiments, we report the \lrs{} vertex-enumeration time and the checker running time in seconds. We also report the checker time as a factor of the \lrs{} time (columns $\times$\lrs{}). This relative measure makes the comparison less dependent on the experimental setup and provides a direct measure of the cost of formal certification relatively to informal vertex enumeration. The \lrs{} reference times are single-threaded runs of \C$lrsgmp$, the GMP-based build of \lrs{}.

\Cref{tab:rocq-bench} reports the performance of the certification method inside \Rocq{} and compares it with that of~\cite{AllamigeonCanuStrub2023}. Since the latter work deals with certification of the vertex-edge graph only, the time reported for our checker corresponds to a single \C$vm_compute$ evaluation verifying \Cref{cond:feasibility,cond:facet_subset,cond:ridge_count,cond:rank,cond:sep,cond:flag,cond:incomparable-edges}. The time reported includes the loading time of the certificate, like in~\cite{AllamigeonCanuStrub2023}. For every instance, we also report the running time of the approach of~\cite{AllamigeonCanuStrub2023}, normalized by the \lrs{} time, as given in~\cite[Table~5.3]{Canu2025}. On the two counterexamples to the Hirsch conjecture, the evaluation additionnally includes the computation of the eccentricity of a distinguished vertex in the certified graph (\ie, the maximum graph distance from this vertex to any other vertex), which allows us to verify that the Hirsch bound is violated. The last column of \Cref{tab:rocq-bench} shows speedups ranging from $15\times$ to more than $1000\times$ over~\cite{AllamigeonCanuStrub2023}. Our checker also scales to larger cubes, cross-polytopes, dual cyclic polytopes, and permutahedra for which no timings were reported in their experiments. As extra measurements, we have run the certification method for the two counterexamples to the Hirsch conjecture (\C$poly20dim21$ and \C$poly23dim24$) on a standard laptop not configured for stable benchmarking (an Apple M4~Pro with $24$\,GB of memory). It respectively takes $31$ and $72$~seconds. This is to be compared with more than one and two hours reported in~\cite{AllamigeonCanuStrub2023}. The size of the certificates also benefit from a major improvement. For instance, the certificate for \C$poly20dim21$ with our method only takes $67\,\text{MB}$ vs more than $1.2\,\text{GB}$ in~\cite{AllamigeonCanuStrub2023}.

\begin{table}
\caption{%
  Performance of the \Rocq{} certification, compared to the approach
  of~\citet{AllamigeonCanuStrub2023}.
}
\label{tab:rocq-bench}
\centering\scriptsize\setlength{\tabcolsep}{3pt}
\begin{tabular}{lrrrrr}
\toprule
Instance & \lrs{} & \multicolumn{2}{c}{checker} & \cite{Canu2025} & \multicolumn{1}{c}{Improvement} \\
\cmidrule(lr){3-4}
 & (s) & (s) & $\times$\lrs{} & $\times$\lrs{} & \multicolumn{1}{c}{\raisebox{2pt}{factor}} \\
\midrule
poly20dim21              &    10.90 &    69.7 &   6.4 &   467 & 73$\times$ \\
poly23dim24              &    15.14 &   159.7 &  10.6 &   656 & 62$\times$ \\
\midrule
cube\_15                 &     1.12 &    34.6 &  31.1 &   471 & 15$\times$ \\
cube\_16                 &     2.41 &    78.0 &  32.3 &   515 & 16$\times$ \\
cube\_18                 &    11.75 &   368.9 &  31.4 &    -- & -- \\
cube\_19                 &    25.40 &   809.5 &  31.9 &    -- & -- \\
cube\_20                 &    55.33 &  1770.4 &  32.0 &    -- & -- \\
\midrule
cross\_6                 &     0.05 &     0.1 &   2.5 &   353 & 141$\times$ \\
cross\_7                 &     0.71 &     1.2 &   1.7 &   370 & 221$\times$ \\
cross\_8                 &    11.27 &    12.2 &   1.1 &  1185 & 1091$\times$ \\
cross\_9                 &   207.63 &   144.6 &   0.7 & $\dagger$ & $\dagger$ \\
cross\_10                &  4287.48 &  2123.8 &   0.5 & $\dagger$ & $\dagger$ \\
\midrule
cyclic\_9\_18            &     0.05 &     0.8 &  14.9 &   407 & 27$\times$ \\
cyclic\_10\_20           &     0.20 &     2.5 &  12.5 &   336 & 27$\times$ \\
cyclic\_11\_22           &     0.59 &     6.2 &  10.4 &   433 & 42$\times$ \\
cyclic\_12\_24           &     2.06 &    19.6 &   9.5 &   472 & 49$\times$ \\
cyclic\_13\_26           &     6.10 &    46.7 &   7.7 &   550 & 72$\times$ \\
cyclic\_14\_28           &    23.00 &   148.4 &   6.5 & $\dagger$ & $\dagger$ \\
cyclic\_15\_30           &    66.30 &   368.6 &   5.6 & $\dagger$ & $\dagger$ \\
cyclic\_16\_32           &   225.23 &  1249.5 &   5.5 & $\dagger$ & $\dagger$ \\
\midrule
permutahedron\_7         &     0.36 &     4.1 &  11.2 &   263 & 23$\times$ \\
permutahedron\_8         &     6.79 &    59.9 &   8.8 &   371 & 42$\times$ \\
permutahedron\_9         &   126.67 &  1086.2 &   8.6 & $\dagger$ & $\dagger$ \\
\midrule
hypersimplex\_17         &    19.05 &   316.8 &  16.6 &    -- & -- \\
hypersimplex\_18         &    44.30 &   721.2 &  16.3 &    -- & -- \\
hypersimplex\_19         &    95.54 &  1646.3 &  17.2 &    -- & -- \\
\midrule
birkhoff\_5              &     0.61 &     5.4 &   8.8 &    -- & -- \\
birkhoff\_6              &    69.72 &   634.0 &   9.1 &    -- & -- \\
\bottomrule
\end{tabular}

\par\smallskip
{\raggedright `--': instance not part of the benchmarks
of~\cite{Canu2025}. $\dagger$: instance beyond the reach at that
time.\par}
\end{table}

\Cref{tab:ocaml-bench} reports the performance of the checker extracted to OCaml,
with the same \lrs{} reference times. The loading time (parsing the
binary certificate into memory) is reported separately. The last two
columns give the time of the vertex-set certification alone (\Cref{cond:feasibility,cond:facet_subset,cond:ridge_count,cond:rank,cond:sep,cond:flag}), and of
the vertex-edge graph certification (\Cref{cond:feasibility,cond:facet_subset,cond:ridge_count,cond:rank,cond:sep,cond:flag,cond:incomparable-edges}), loading included in
both. The vertex-set certification is faster than the enumeration
on every instance, by a factor ranging from $1.7$ on the cubes to
$50$ on the cross-polytopes. The latter instances constitute extreme test cases, since, for the $d$-dimensional cross-polytope, the number of facets in the abstract simplicial complex (or equivalently, the number of lex-feasible bases) equals $2 d!$, which is exponentially larger than the number $2d$ of vertices. The speedup over \lrs{} illustrates the computational savings allowed by the combinatorial criterion of \Cref{th:odd-covering} over a purely numerical computation of lex-feasible bases in such a setting.\footnote{More realistically, cross-polytopes should be handled via the polar-based approach described in \Cref{remark:complexity}, which would allow to scale to larger dimensions like for cubes (the polar of the $d$-dimensional cross-polytope is the $d$-cube $[-1,1]^d$).}

Finally, regarding the performance of the graph certification, the full check remains faster than the enumeration (by a factor of~$1.25$ in the worst case).

\begin{table}
\caption{%
  Performance of the extracted checker, relative to the \lrs{} vertex
  enumeration time.
}
\label{tab:ocaml-bench}
\centering\footnotesize\setlength{\tabcolsep}{3pt}
\begin{tabular}{lrrrr}
\toprule
Instance & \lrs{} & \multicolumn{3}{c}{checker} \\
\cmidrule(lr){3-5}
 & (s) & load (s) & vertices & graph \\
 & & & $\times$\lrs{} & $\times$\lrs{} \\
\midrule
poly20dim21              &    10.90 &    0.26 &  0.16 &  0.19 \\
poly23dim24              &    15.14 &    0.51 &  0.18 &  0.24 \\
\midrule
cube\_20                 &    55.33 &    6.61 &  0.58 &  0.76 \\
cube\_21                 &   118.86 &   13.69 &  0.58 &  0.76 \\
cube\_22                 &   254.51 &   28.20 &  0.58 &  0.76 \\
\midrule
cross\_8                 &    11.27 &    0.08 &  0.03 &  0.03 \\
cross\_9                 &   207.63 &    0.62 &  0.02 &  0.02 \\
cross\_10                &  4287.48 &    7.64 &  0.02 &  0.02 \\
\midrule
cyclic\_16\_32           &   225.23 &    5.43 &  0.17 &  0.20 \\
cyclic\_17\_34           &   708.99 &   12.69 &  0.15 &  0.17 \\
cyclic\_18\_36           &  2535.86 &   40.09 &  0.13 &  0.16 \\
\midrule
permutahedron\_8         &     6.79 &    0.14 &  0.23 &  0.25 \\
permutahedron\_9         &   126.67 &    1.33 &  0.22 &  0.23 \\
permutahedron\_10        &  2958.84 &   15.69 &  0.19 &  0.19 \\
\midrule
hypersimplex\_19         &    95.54 &    4.37 &  0.23 &  0.46 \\
hypersimplex\_20         &   217.99 &    9.33 &  0.22 &  0.45 \\
hypersimplex\_21         &   466.56 &   19.63 &  0.23 &  0.47 \\
\midrule
birkhoff\_5              &     0.61 &    0.02 &  0.16 &  0.21 \\
birkhoff\_6              &    69.72 &    1.51 &  0.15 &  0.20 \\
\bottomrule
\end{tabular}

\end{table}

\section{Conclusion and Future Work}\label{sec:conclusion}

We have presented a new method for certifying vertex sets and vertex-edge graphs of polyhedra. It is based on an odd-covering criterion which relaxes the conditions required to certify a triangulation of the normal fan. We have implemented the resulting checker in the proof assistant \Rocq{} and formally proved its correctness up to two ingredients: the odd-covering criterion and the correctness argument underlying to vertex-edge graph certification. For both, we provide self-contained pen-and-paper proofs in this paper, leaving their formalization for future work. We have evaluated the method on several representative families of polyhedra, including ones involved in the disproof of the Hirsch conjecture. We have shown that our method outperforms the previous work of \cite{AllamigeonCanuStrub2023} by several order of magnitudes, and that the OCaml version of the checker obtained by extraction from the implementation in \Rocq{} is faster than the state-of-the-art library \lrs{} for vertex enumeration. These results show that formal guarantees for one of the most fundamental operations on polyhedra can now be obtained at only a modest additional cost compared to informal computation. The mathematical principles underlying our approach are not specific to \Rocq{}, and could be used in other proof assistants.

This work opens several perspectives. First, we plan to extend the certification method to arbitrary polyhedra (taking into account vertices and extreme rays), and to address the converse V-to-H conversion problem. This does not require any fundamentally new ideas. The treatment of general polyhedra can be done by that of pointed polyhedral cones and their extreme rays, which is equivalent to the vertex enumeration of polytopes (by compactification). Similarly, the V-to-H conversion is simply the dual counterpart of the H-to-V problem considered here, as outlined in \Cref{remark:complexity}.

On the formalization side, we would like to substitute the current refinement proof relating the high-level and low-level checkers (generated with the assistance of AI) by a proof obtained using dedicated tools for refinements, such as the one described in~\cite{CohenCranceMahboubi2024}. The implementation could also benefit from a tighter integration between \lrs{} and the certificate checker. The computations done within \lrs{} could be indeed directly used to produce a substantial part of the certificate. Besides, an improvement of the pipeline between \lrs{} and the checker could remove a part of the time currently taken by the generation and loading of binary certificate files. 

From a theoretical perspective, we would also like to replace \Cref{cond:incomparable-edges} on edges by a criterion based on flags. Given the possibly large number of edges, this requires identifying a compact representation in which flags are shared.

\begin{acks} 
Part of this work was done while the second author was at Inria and CMAP, Ecole polytechnique. The first author thanks Frédéric Meunier for many helpful discussions and valuable pointers on simplicial homology.  
\end{acks}

\bibliographystyle{ACM-Reference-Format}

\appendix

\section{Properties of the Normal Fan}\label{app:normal-fan}

We review basic properties on normal cones used thereafter. To begin with, we recall that the relative interior $\relint Q$ of a polyhedron $Q$ is defined as the interior relatively to its affine hull, and that
\begin{equation}\label{eq:relint}
\relint Q = Q \setminus \bigg(\bigcup_{\substack{F \; \text{face of} \; Q\\ \dim F < \dim Q}} F \bigg) = Q \setminus \bigg(\bigcup_{F \; \text{facets of} \; Q} F \bigg) \, .
\end{equation}
Moreover, we recall that a convex set $S \subset \R^d$ has dimension $d$ if and only if it has nonempty interior~\cite{Rockafellar1970}.

\begin{proposition}\label{prop:normal-cone-def}
Suppose that $P$ is defined by the inequalities $\scalar{a_i}{x} \leq b_i$ where $a_i \in \R^d$ and $b_i \in \R$ for all $i \in [n]$. Then, for all $x \in P$,
\[
N(x) = \cone{\{a_i\}_{i \in I(x)} \quad \text{where} \enspace I(x) \coloneqq \{ i \in [n] \colon \scalar{a_i}{x} = b_i \}} \, .
\]
\end{proposition}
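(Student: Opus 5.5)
We prove the two inclusions separately. Write $C \coloneqq \cone{\{a_i\}_{i \in I(x)}}$.

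\emph{The inclusion $C \subset N(x)$} is immediate. Let $c = \sum_{i \in I(x)} \lambda_i a_i$ with $\lambda_i \geq 0$, and let $y \in P$. Then
\[
\scalar{c}{y} = \sum_{i \in I(x)} \lambda_i \scalar{a_i}{y} \leq \sum_{i \in I(x)} \lambda_i b_i = \sum_{i \in I(x)} \lambda_i \scalar{a_i}{x} = \scalar{c}{x} \, .
\]
The inequality uses $y \in P$, and the next equality uses that every $i \in I(x)$ is active at $x$. Hence $c \in N(x)$.

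\emph{The reverse inclusion $N(x) \subset C$} is the substantive part, and it rests on Farkas' lemma. Take $c \in N(x)$ and suppose, for contradiction, that $c \notin C$. The set $C$ is a finitely generated cone, hence closed and convex. Farkas' lemma (equivalently, separation of the point $c$ from the closed convex cone $C$) then yields a vector $h \in \R^d$ with
\[
\scalar{a_i}{h} \leq 0 \quad \text{for all } i \in I(x) \, , \qquad \scalar{c}{h} > 0 \, .
\]
We then move from $x$ in the direction $h$ and set $y = x + t h$ for a small $t > 0$.

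\begin{itemize}
\item For $i \in I(x)$: $\scalar{a_i}{y} = b_i + t\scalar{a_i}{h} \leq b_i$.
\item For $i \notin I(x)$: the inequality is strict at $x$, i.e.\ $\scalar{a_i}{x} < b_i$. There are only finitely many such $i$, so a single $t > 0$ small enough keeps every one of them strict at $y$.
\end{itemize}

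Thus $y \in P$. But $\scalar{c}{y} = \scalar{c}{x} + t\scalar{c}{h} > \scalar{c}{x}$, which contradicts $c \in N(x)$.

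The degenerate case $I(x) = \emptyset$ needs no separate treatment. There $C = \{0\}$, and the same argument applies: if $c \neq 0$, take $h = c$, which satisfies the conditions vacuously and gives $\scalar{c}{h} > 0$. We conclude that $N(x) = \{0\}$, as expected for an interior point.

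The only nontrivial ingredient is the Farkas/separation step, which we take as standard (e.g.\ from \cite{Schrijver86}). Everything else is the elementary small-$t$ perturbation argument above.
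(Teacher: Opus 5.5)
Your proof is correct, but it takes a different route from the paper's.

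\textbf{The paper's route.} For the nontrivial inclusion $N(x) \subset \cone\{a_i\}_{i \in I(x)}$, the paper notes that for $c \in N(x)$ the point $x$ is an optimal solution of the linear program $\max\{\scalar{c}{y} \colon y \in P\}$. Strong duality then gives multipliers $\lambda \in \R^n_{\geq 0}$ with $c = \sum_{i=1}^n \lambda_i a_i$. Complementary slackness with the primal optimum $x$ forces $\lambda_i = 0$ for every inactive index $i$. This is a direct two-line construction of the representation, but it relies on the full duality theorem for the whole system of $n$ inequalities.

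\textbf{Your route.} You apply Farkas' lemma only to the active vectors $\{a_i\}_{i \in I(x)}$ and argue by contradiction. The strictness of the inactive inequalities, together with the finiteness of the system, lets you move from $x$ in the separating direction $h$ while staying in $P$.

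\textbf{Comparison.} Your argument is more elementary, since strong duality is itself usually derived from Farkas' lemma. It also makes explicit that $N(x)$ depends only on the local structure of $P$ at $x$: it is the polar of the cone of feasible directions $\{h \colon \scalar{a_i}{h} \leq 0, \ i \in I(x)\}$. The paper's version is shorter once LP duality is available, and it produces the multipliers constructively rather than by contradiction.
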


\begin{proof}
The inclusion $\supset$ is a direct consequence of the fact that, for all $i \in I(x)$, $\scalar{a_i}{x} = b_i \geq \scalar{a_i}{y}$. For the symmetric inclusion, take $z \in N(x)$. By Strong Duality Theorem (that we can apply $P$ is nonempty), there exists $(\lambda_i)_i \in \R_{\geq 0}^n$ such that $z = \sum_{i = 1}^n \lambda_i a_i$, and, by the complementary slackness conditions, $\lambda_i = 0$ or $\scalar{a_i}{x} = b_i$ for all $i \in [n]$. We deduce that $\lambda_i = 0$ for all $i \in [n] \setminus I(x)$, which completes the proof.
\end{proof}

Normal cones apply more generally to faces of $P$. Given a nonempty face $Q$ of $P$, we set $N(Q) \coloneqq N(x)$ where $x$ is an arbitrary point of $\relint Q$. This definition is valid since the set $I(x)$ is constant over $x \in \relint Q$. We analogously set $I(Q) \coloneqq I(x)$ for $x \in \relint Q$. 

\begin{proposition}\label{prop:normal-cone-basic-prop}
\begin{enumerate}[label=(\roman*), ref=(\roman*)]
\item\label{item:i} For each face $Q$ of $P$, $\dim N(Q) = d - \dim Q$.
\item\label{item:ii} If $Q, Q'$ are two nonempty faces of $P$ such that $Q' \subset Q$, then $N(Q)$ is a face $N(Q')$.
\item\label{item:iii} If $Q$ be a nonempty face of $P$, and $z \in \relint N(Q)$, the set of maximizers of the linear function $x \mapsto \scalar{z}{x}$ over $P$ is precisely the face $Q$.
\end{enumerate}
\end{proposition}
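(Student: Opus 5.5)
We prove the three items of \Cref{prop:normal-cone-basic-prop} in the order \ref{item:iii}, \ref{item:i}, \ref{item:ii}, using throughout the characterization $N(Q) = \cone\{a_i\}_{i \in I(Q)}$ from \Cref{prop:normal-cone-def}. The basic tool is the following observation. For $z = \sum_{i \in I(Q)} \lambda_i a_i$ with $\lambda \geq 0$ and $y \in P$,
\[
\scalar{z}{y} = \sum_{i \in I(Q)} \lambda_i \scalar{a_i}{y} \leq \sum_{i \in I(Q)} \lambda_i b_i ,
\]
with equality if and only if $\scalar{a_i}{y} = b_i$ for all $i$ with $\lambda_i > 0$. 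So the maximizers of $z$ are the points of $P$ activating the support of $\lambda$.

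\textbf{Item \ref{item:iii}.} Let $z \in \relint N(Q)$. The cone $N(Q)$ contains $z_0 \coloneqq \sum_{i \in I(Q)} a_i$. Since $z$ is in the relative interior, $z - \eps z_0 \in N(Q)$ for small $\eps > 0$. Hence $z = \eps z_0 + w$ with $w \in N(Q)$, so $z = \sum_{i \in I(Q)} \lambda_i a_i$ with all $\lambda_i \geq \eps > 0$. By the observation, the maximizer set of $z$ is $\{y \in P : \scalar{a_i}{y} = b_i \ \forall i \in I(Q)\}$. This is the smallest face containing $\relint Q$, which equals $Q$, using the standard fact (Schrijver, Ch.~8) that a face is determined by its active set.

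\textbf{Item \ref{item:i}.} Let $L$ be the linear subspace parallel to $\operatorname{aff} Q$. First, every $a_i$ with $i \in I(Q)$ is orthogonal to $L$, since $\scalar{a_i}{\cdot}$ is constant on $Q$. Hence $\dim N(Q) \leq d - \dim Q$.

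For the reverse inequality, I would show that $\linspan\{a_i\}_{i \in I(Q)} = L^\perp$. Take $x \in \relint Q$. Any direction $u$ with $\scalar{a_i}{u} = 0$ for all $i \in I(Q)$ satisfies $x \pm \delta u \in P$ for small $\delta$, because the inactive inequalities are strict at $x$. Moreover, $x \pm \delta u$ activates $I(Q)$, so it lies in $Q$. Thus $u \in L$, giving $\{a_i\}^\perp \subset L$, i.e.\ $\linspan\{a_i\} \supset L^\perp$. The dimension of the cone equals the dimension of its span, which gives equality.

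\textbf{Item \ref{item:ii}.} If $Q' \subset Q$, then $I(Q) \subset I(Q')$, since a point of $\relint Q'$ lies in $Q$ and so activates all constraints tight on $Q$. Now pick $x \in \relint Q$ and consider $c \coloneqq \cdot$ the linear functional $z \mapsto \scalar{z}{x' - x}$ for some $x' \in \relint Q'$. More precisely, we use the valid inequality $\scalar{z}{x'} \geq \scalar{z}{x}$ for all $z \in N(Q')$, which holds since $x'$ maximizes $z$ and $x \in P$. The face of $N(Q')$ cut out by equality consists of the $z = \sum_{i \in I(Q')} \lambda_i a_i$ with $\sum_i \lambda_i (b_i - \scalar{a_i}{x}) = 0$. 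Since $b_i - \scalar{a_i}{x} > 0$ exactly for $i \notin I(Q)$, this is exactly $\lambda$ supported on $I(Q)$, i.e.\ $N(Q)$.

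The main obstacle is the face-determination fact used in item \ref{item:iii}, namely that $Q$ equals the set of points activating $I(Q)$. I would take it from the cited reference, or prove it directly: any $y$ in that set has $x + \delta(x - y) \in P$ for small $\delta$, with $x \in \relint Q$. So $x$ lies in the open segment between $y$ and a point of $P$, and the face property forces $y \in Q$.
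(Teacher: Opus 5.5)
Your proof is correct, and it takes a genuinely different route from the paper's.

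\textbf{The paper's route.} It proves (i) by citing the rank formula $\dim N(Q) = \rank\{a_i\}_{i\in I(Q)} = d - \dim Q$ from Schrijver. It gets (ii) from the fact that normal cones form a fan, which is only recalled in Section~2 and never proved. It then derives (iii) from (i) and (ii): if the maximizer face $Q'$ of $z$ strictly contained $Q$, then $N(Q')$ would be a proper face of $N(Q)$, so it could not meet $\relint N(Q)$.

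\textbf{Your route.} You work directly from $N(Q) = \cone\{a_i\}_{i\in I(Q)}$, the complementary-slackness identity, and the face-determination fact, which you prove yourself. This reverses the paper's dependency order.
\begin{itemize}
\item You prove (iii) first, using strictly positive multipliers on all of $I(Q)$, with no dimension counting.
\item You get (i) by identifying $\linspan\{a_i\}_{i\in I(Q)}$ with the orthogonal complement of the direction space of $\operatorname{aff} Q$.
\item For (ii) you exhibit an explicit valid inequality $\scalar{z}{x'-x}\ge 0$ on $N(Q')$ whose equality set is exactly $N(Q)$.
\end{itemize}
Your argument is longer but self-contained. In particular, in (ii) you actually prove the part of the fan property being used instead of assuming it. The paper's proof is shorter because it outsources (i) and (ii).

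\textbf{Small points to tighten.}
\begin{itemize}
\item You use silently that every point of $Q$, not only of $\relint Q$, activates $I(Q)$. This holds because the equalities $\scalar{a_i}{y}=b_i$ define a closed set and $Q$ is the closure of $\relint Q$.
\item In (iii), the set of points attaining the bound $\sum_i\lambda_i b_i$ equals the maximizer set only because the bound is attained. Points of $\relint Q$ attain it; say so.
\item In (ii), describe the equality set in terms of $z$, not $\lambda$. Each term $\lambda_i(b_i-\scalar{a_i}{x})$ is nonnegative, so \emph{every} representation of such a $z$ is supported on $I(Q)$, and hence $z\in N(Q)$.
\item Remove the garbled phrase ``$c \coloneqq \cdot$''.
\end{itemize}
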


\begin{proof}
\begin{enumerate}[label=(\roman*), ref=(\roman*)]
\item This is a consequence of the fact that $\dim N(Q) = \rank{\{a_i\}_{i \in I(Q)}} = d - \dim Q$ (see~\cite[Chapter~8]{Schrijver86})

\item We have $N(Q) \subset N(Q')$. As the normal cones form a fan, $N(Q) \cap N(Q') = N(Q)$ must be a face of $N(Q')$.

\item Let $Q'$ be the face formed by the maximizers of the linear function $x \mapsto \scalar{z}{x}$ over $P$. As $z \in N(Q)$, we have $Q \subset Q'$. Suppose that $Q \subsetneq Q'$, so that $\dim Q < \dim Q'$~\cite[Theorem~2.7]{ziegler}. Then $Q$ is a proper face of $Q'$, and thus $N(Q')$ is a proper face of $N(Q)$ by~\ref{item:i} and~\ref{item:ii}. Thus, $N(Q') \cap \relint N(Q) = \varnothing$, which contradicts the fact that $z \in N(Q') \cap \relint N(Q)$.
\qedhere
\end{enumerate}
\end{proof}

\begin{corollary}\label{cor:normal-cone-basic-prop}
For every nonempty face $Q$ of $P$ and $w \in P$, $(\relint N(Q)) \cap N(w) = \varnothing$ unless $w \in Q$. 
\end{corollary}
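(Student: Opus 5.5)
The plan is a contrapositive argument built on \Cref{prop:normal-cone-basic-prop}\ref{item:iii}. Fix a nonempty face $Q$ of $P$ and a point $w \in P$, and assume some vector $z$ lies in both $\relint N(Q)$ and $N(w)$. We aim to conclude that $w \in Q$.

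First, since $z \in \relint N(Q)$, \Cref{prop:normal-cone-basic-prop}\ref{item:iii} says that the set of maximizers of $x \mapsto \scalar{z}{x}$ over $P$ is exactly $Q$. Next, $z \in N(w)$ means by definition that $\scalar{z}{w} \geq \scalar{z}{y}$ for all $y \in P$. Since $w \in P$, this makes $w$ a maximizer of $x \mapsto \scalar{z}{x}$ over $P$. Hence $w \in Q$, which is the desired contrapositive.

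The argument is essentially immediate once \ref{item:iii} is available, so there is no real obstacle. The one point to check is that the maximizer set is a nonempty face, which holds because $Q$ is nonempty by hypothesis. If one wanted to avoid \ref{item:iii}, the harder route would be to compare the faces $Q$ and $Q' = \{$maximizers of $z\}$ directly through the fan structure, but invoking the earlier proposition suffices.
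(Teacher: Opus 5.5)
Your proof is correct and is the paper's own argument: take $z$ in $(\relint N(Q)) \cap N(w)$, note that $z \in N(w)$ makes $w$ a maximizer of $x \mapsto \langle z, x \rangle$ over $P$, and conclude that $w \in Q$ because item (iii) of the preceding proposition identifies that set of maximizers with $Q$.
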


\begin{proof}
Let $z \in (\relint N(Q)) \cap N(w)$. Then $w$ belongs to the set of maximizers of $x \mapsto \scalar{z}{x}$ over $P$, which is exactly the face $Q$ by \Cref{prop:normal-cone-basic-prop}\ref{item:iii}.
\end{proof}

In the next statement, we recall that the recession cone of $P$ is the cone defined as
\[
\rec P \coloneqq \{ y \in \R^d \colon x^\circ + \R_{\geq 0} y \subset P \}
\]
where $x^\circ$ is an arbitrary element of $P$. The polar cone of $\rec P$ is defined as
\[
(\rec P)^\circ \coloneqq \{ z \in \R^d \colon \forall y \in \rec P \, , \; \scalar{y}{z} \leq 0 \} \, .
\]
Equivalently, $(\rec P)^\circ$ consists of the vectors $z \in \R^d$ such that the supremum of $x \mapsto \scalar{z}{x}$ over $P$ is finite.

\begin{proposition}\label{prop:normal-cone-support}
The support of the normal fan of $P$, \ie, the union of all normal cones, is the polar cone of the recession cone of $P$. In particular, it is equal to $\R^d$ if and only if $P$ is a polytope.   
\end{proposition}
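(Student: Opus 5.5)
We prove the two inclusions between $\bigcup_{x \in P} N(x)$ and $(\rec P)^\circ$ separately, and then deduce the statement about polytopes.

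\emph{First inclusion.} Let $z \in N(x)$ for some $x \in P$. Then $\scalar{z}{y} \leq \scalar{z}{x}$ for all $y \in P$. Hence the supremum of $y \mapsto \scalar{z}{y}$ over $P$ is finite, so $z \in (\rec P)^\circ$ by the equivalent description of the polar cone. If one prefers to work from the definition, take $y \in \rec P$. Since $x + t y \in P$ for all $t \geq 0$, we get $t\scalar{z}{y} \leq 0$, hence $\scalar{z}{y} \leq 0$.

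\emph{Reverse inclusion.} Let $z \in (\rec P)^\circ$, so that $\sup_{x \in P} \scalar{z}{x}$ is finite. Since $P$ is a nonempty polyhedron, the linear program $\max \{\scalar{z}{x} \colon x \in P\}$ is feasible and bounded. By the fundamental theorem of linear programming (or strong duality, as already used in \Cref{prop:normal-cone-def}), the supremum is attained at some $x \in P$. Then $z \in N(x)$ by definition. The one step that needs a genuine cited fact is this attainment of the optimum; everything else is immediate.

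\emph{Consequence for polytopes.} It remains to show that $(\rec P)^\circ = \R^d$ if and only if $P$ is a polytope, i.e., if and only if $\rec P = \{0\}$. Here we use the standard fact that a polyhedron is bounded if and only if its recession cone is trivial.
\begin{itemize}
\item If $\rec P = \{0\}$, the polar of $\{0\}$ is $\R^d$.
\item Conversely, suppose $(\rec P)^\circ = \R^d$ and let $y \in \rec P$. Taking $z = y$ in the defining inequality gives $\scalar{y}{y} \leq 0$, so $y = 0$.
\end{itemize}
Hence $\rec P = \{0\}$, and $P$ is bounded.
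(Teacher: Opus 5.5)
The paper gives no proof of this proposition. It is recalled as a standard fact, right after the equally unproved remark that $(\rec P)^\circ$ is the set of $z$ for which $\sup_{x \in P} \scalar{z}{x}$ is finite, so there is no route to compare yours with. Your argument is correct, given three inputs:
\begin{itemize}
\item[(a)] the paper's remark characterizing $(\rec P)^\circ$ by finiteness of the supremum;
\item[(b)] attainment of the optimum of a feasible, bounded linear program;
\item[(c)] the fact that a polyhedron is bounded iff its recession cone is $\{0\}$.
\end{itemize}

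One point in your accounting is inaccurate. The step ``$z \in (\rec P)^\circ$, so the supremum is finite'' is not immediate. It is the direction of the paper's remark where polyhedrality is really used: it fails for the closed convex set $\{(x,y) \colon y \geq x^2\}$ with $z = (1,0)$. To make the reverse inclusion self-contained with the tools of \Cref{prop:normal-cone-def}, proceed as follows:
\begin{itemize}
\item[1.] Note that $\rec P = \{y \colon \scalar{a_i}{y} \leq 0 \text{ for all } i \in [n]\}$, so Farkas' lemma gives $(\rec P)^\circ = \cone\{a_i\}_{i \in [n]}$.
\item[2.] Write $z = \sum_i \lambda_i a_i$ with $\lambda \geq 0$; this shows that the dual of $\max\{\scalar{z}{x} \colon x \in P\}$ is feasible.
\item[3.] Strong duality then provides a maximizer $x \in P$, i.e., $z \in N(x)$.
\end{itemize}

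For the ``in particular'' part, you can bypass recession cones entirely. By definition, the support of the normal fan is $\R^d$ iff every linear function attains its maximum over $P$. If $P$ is bounded, it is compact (and nonempty), so this holds. Conversely, applying it to $z = \pm e_i$ bounds every coordinate on $P$. This replaces the nontrivial direction of (c), namely ``$\rec P = \{0\}$ implies $P$ bounded''. That is exactly the direction your chain needs, and it is the one invoked in the proof of \Cref{prop:vertex-completeness}.
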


\begin{proof}[\proofname{} of \Cref{prop:vertex-completeness}]
Suppose that $P$ is a polytope and $V \supset \vertices(P)$. By~\Cref{prop:normal-cone-basic-prop}\ref{item:ii}, the normal cones $N(v)$ for $v \in \vertices P$ are maximal inclusionwise (among all normal cones), so that the support of the normal fan is the union of these cones. We deduce from~\Cref{prop:normal-cone-support} and $V \supset \vertices(P)$ that $\bigcup_{v \in V} N(v) = \R^d$.

Conversely, suppose that $\bigcup_{v \in V} N(v) = \R^d$. By \Cref{prop:normal-cone-support}, $P$ is a polytope. Let $v \in \vertices(P)$, and take $z \in \interior N(v)$. Let $w \in V$ such that $z \in N(w)$. Then $z \in (\interior N(v)) \cap N(w)$, which implies $v = w$ by~\Cref{cor:normal-cone-basic-prop}. We conclude that $v \in V$.
\end{proof}

\section{Odd-Covering Criterion: Extra Statements and Proofs}\label{app:odd-covering}

Given an abstract simplicial complex $\Ksf$, we denote by $\Ksf^{(p)}$ the simplices of $\Ksf$ of dimension $p$. In the case of a pure $(d-1)$-dimensional abstract simplicial complex, $\facets$ represents the set of faces, and $\ridges$ the set of ridges. 

\begin{lemma}\label{lemma:rel-boundary}
Let $\sigma \in \facets$ such that $\dim a_\sigma = d$. Then, the $k$-dimensional faces of $a_\sigma$ are precisely the cones $a_\eta$ such that $\eta \subset \sigma$ and $\# \eta = k$. Moreover, $\partial a_\sigma = \bigcup_{\tau \in \partial \sigma} a_\tau$.
\end{lemma}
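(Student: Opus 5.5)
The plan is to exploit the fact that $\sigma$ is a facet, so $\#\sigma = d$, and that $\dim a_\sigma = d$ forces the $d$ vectors $\{a_i\}_{i \in \sigma}$ to be linearly independent. Hence $a_\sigma$ is a $d$-dimensional simplicial cone, and everything reduces to the standard description of faces of a simplicial cone in the coordinates given by the basis $(a_i)_{i\in\sigma}$. Concretely, let $(f_i)_{i \in \sigma}$ be the dual basis, i.e., $\scalar{a_j}{f_i} = \delta_{ij}$ (the columns of the transposed inverse of the matrix with columns $a_i$). Then every $u \in \R^d$ writes uniquely as $u = \sum_{i\in\sigma} \scalar{f_i}{u}\, a_i$. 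Consequently,
\[
a_\sigma = \{u \colon \scalar{f_i}{u} \geq 0 \ \text{for all } i \in \sigma\},
\]
which is an H-representation with $d$ inequalities.

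First, I will describe the faces. For $\eta \subset \sigma$, the set
\[
a_\eta = \{u \in a_\sigma \colon \scalar{f_i}{u} = 0 \ \text{for all } i \in \sigma\setminus\eta\}
\]
is a face of $a_\sigma$, obtained as the maximizers of $u \mapsto \scalar{-\sum_{i\in\sigma\setminus\eta} f_i}{u}$. Since its generators are linearly independent, $\dim a_\eta = \#\eta$. Conversely, every nonempty face of a polyhedron $\{u \colon \scalar{f_i}{u}\ge 0\}$ is obtained by turning some subset of the defining inequalities into equalities. This is the characterization of faces via active constraint sets recalled from \cite[Chapter~8]{Schrijver86}. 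Such a set is exactly some $a_\eta$ by the display above. Distinct $\eta$ give distinct cones, because $a_\eta$ contains $a_i$ exactly when $i \in \eta$, by uniqueness of coordinates. So the $k$-dimensional faces are precisely the $a_\eta$ with $\#\eta = k$.

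Second, I will compute the boundary. Since $a_\sigma$ is full-dimensional, its boundary equals its relative boundary. By \eqref{eq:relint}, this is the union of its facets, that is, its $(d-1)$-dimensional faces. By the first step these are the cones $a_\tau$ with $\tau \subset \sigma$ and $\#\tau = d-1$, i.e., $\tau \in \partial\sigma$ (reading $\partial\sigma$ as the set of codimension-one subsets of $\sigma$). Alternatively, and more directly, I can argue that a point of $a_\sigma$ is interior iff all its coordinates $\scalar{f_i}{u}$ are positive, since these are continuous and the map $u \mapsto (\scalar{f_i}{u})_i$ is a linear isomorphism. Hence the boundary consists of the points with some vanishing coordinate, which is $\bigcup_{i} a_{\sigma\setminus\{i\}}$.

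The only step needing care is the converse inclusion in the face characterization, namely that every face arises from a subset of tight inequalities. I will either cite it or prove it directly. Take a face given by maximizers of $\scalar{c}{\cdot}$. Writing $c = \sum_i \scalar{c}{a_i} f_i$, the face is bounded only if all $\scalar{c}{a_i} \le 0$, and then it is exactly $a_\eta$ with $\eta = \{i \colon \scalar{c}{a_i} = 0\}$. This direct computation is the cleanest route and avoids any external citation.
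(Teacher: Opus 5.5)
Your proof is correct and follows the same route as the paper's. You reduce the first part to the fact that $a_\sigma$ is a $d$-dimensional simplicial cone, then use \eqref{eq:relint} to identify $\partial a_\sigma$ with the union of its facets $a_\tau$ for $\tau \in \partial\sigma$; the only difference is that you make the face description of a simplicial cone explicit via the dual basis $(f_i)$, which the paper takes as standard. One small wording fix: in your last paragraph, the condition $\scalar{c}{a_i} \le 0$ for all $i$ is what makes the supremum of $\scalar{c}{\cdot}$ over $a_\sigma$ finite, i.e., the set of maximizers nonempty, not what makes the face ``bounded''.
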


\begin{proof}
The first part is a consequence of the fact that $a_\sigma$ is a simplicial cone. By~\eqref{eq:relint}, the boundary of $a_\sigma$ is the union of its facets, which are precisely the cones $a_\tau$ for $\tau \in \partial \sigma$.
\end{proof}

\begin{lemma}\label{lemma:open-dense}
The set of $d$-generic vectors is open dense.
\end{lemma}

\begin{proof}
For every $\sigma \in \Ksf$ satisfying $\dim a_\sigma < d$, the complement set $a_\sigma$ is an open dense set. The set of $d$-generic vectors is the intersection of these complement sets.
\end{proof}

\begin{lemma}\label{lemma:finite-breakpoints}
Let $\gamma$ be a generic path. Then $\gamma(t)$ belongs to a $(d-1)$-dimensional cone $a_\sigma$ ($\sigma \in \Ksf$) only at finitely many values of $t \in [0,1]$.
\end{lemma}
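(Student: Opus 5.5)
Since there are only finitely many simplices $\sigma \in \Ksf$, I reduce to one fixed $\sigma$ with $\dim a_\sigma = d-1$ and show that $T_\sigma \coloneqq \{t \in [0,1] \colon \gamma(t) \in a_\sigma\}$ is finite. The union of these sets over all such $\sigma$ is then finite.

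Let $H$ be the linear hyperplane $\linspan a_\sigma$, which is the affine hull of the cone because the cone contains $0$. Pick a normal vector $h \neq 0$ with $H = \{ z \colon \scalar{h}{z} = 0\}$ and set $g(t) \coloneqq \scalar{h}{\gamma(t)}$. Then $g$ is $C^1$ on $[0,1]$ and $T_\sigma \subset \{t \colon g(t) = 0\}$. Transversality of the generic path at each $t \in T_\sigma$ means that $\gamma'(t) \notin H$, i.e. $g'(t) \neq 0$.

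Now suppose $T_\sigma$ is infinite. Since $a_\sigma$ is closed and $\gamma$ is continuous, $T_\sigma$ is closed in $[0,1]$, so by compactness it has an accumulation point $t_0 \in T_\sigma$. Take a sequence $t_k \to t_0$ with $t_k \in T_\sigma \setminus \{t_0\}$. Then $g(t_k) = g(t_0) = 0$, so by the mean value theorem there are $\xi_k$ between $t_k$ and $t_0$ with $g'(\xi_k) = 0$. Continuity of $g'$ gives $g'(t_0) = 0$, which contradicts transversality at $t_0$. More directly: since $g'(t_0) \neq 0$, $g$ is strictly monotone near $t_0$, so $t_0$ is an isolated zero of $g$, and a fortiori an isolated point of $T_\sigma$.

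The only delicate point is the meaning of transversality. I read it as "$\gamma'(t)$ is not parallel to the hyperplane $H$," that is, $\gamma'(t) \notin H$, and I use it exactly at the points of $T_\sigma$. With that reading the argument is the standard fact that transversal zeros of a $C^1$ function are isolated. If $\gamma$ is the linear path of \Cref{lemma:linear-path}, this is even simpler: $g$ is affine and not identically zero, so it has at most one root.
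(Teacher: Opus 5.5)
Your proof is correct and takes the same route as the paper's: points of the intersection set are isolated by transversality, the set is closed because the cones are closed and $\gamma$ is continuous, so it is a discrete compact subset of $[0,1]$ and hence finite. The differences are only in presentation: you make the isolation step explicit with the normal functional $g(t) = \langle h, \gamma(t) \rangle$ and $g'(t) \neq 0$, and you handle one simplex at a time before taking the finite union, whereas the paper treats the union directly.
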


\begin{proof}
The set $\{t \in [0,1] \colon \gamma(t) \in a_\tau \, , \; \dim a_\tau = d-1 \, , \;  \tau \in \Ksf \}$ is discrete, as each of its points is isolated by the transversality property. Moreover, it is closed (as the $a_\tau$ are closed and $\gamma$ is continuous). Thus, as a discrete compact set, it is finite.
\end{proof}

\begin{lemma}\label{lemma:linear-path}
Given two $d$-generic vectors $z$ and $u$, a linear path $\gamma \colon t \in [0,1] \mapsto z + t (u - z)$ is generic as soon as:
\begin{conditions}
\item\label{cond:small-dim} $z \notin \linspan\big(\{u\} \cup \{a_i\}_{i \in \eta}\big)$ for all $\eta \in \Ksf$ such that $\dim a_\eta < d-1$;
\item\label{cond:codim1} $z \notin u + \linspan{\{a_i\}_{i \in \tau}}$ for all $\tau \in \Ksf$ such that $\dim a_\tau = d-1$. 
\end{conditions}
\end{lemma}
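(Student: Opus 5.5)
We have to check the two requirements in the definition of a generic path. First, the linear path must meet only cones $a_\sigma$ of dimension $d$ or $d-1$. Second, each meeting with a $(d-1)$-dimensional cone must be transversal.

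Since $\gamma$ is affine with constant derivative $u - z$, both requirements reduce to linear-algebraic statements about the line through $z$ and $u$. The plan is to derive each one from the corresponding condition of the lemma by contraposition.

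\emph{Avoiding low-dimensional cones.} Let $\eta \in \Ksf$ with $\dim a_\eta < d-1$, and suppose $\gamma(t) \in a_\eta$ for some $t \in [0,1]$. Then
\[
z + t(u - z) = w \quad \text{with } w \in \linspan\{a_i\}_{i \in \eta}.
\]
\begin{itemize}
\item If $t = 1$, then $u = w \in a_\eta$. This is impossible because $u$ is $d$-generic and $\dim a_\eta < d$.
\item If $t < 1$, rearrange to $(1-t) z = w - t u$, so $z \in \linspan(\{u\} \cup \{a_i\}_{i \in \eta})$. This contradicts \Cref{cond:small-dim}.
\end{itemize}
(The case $t = 0$ is also excluded directly by $d$-genericity of $z$.) Hence $\gamma$ never meets a cone of dimension $< d-1$. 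Together with the trivial fact that cones of dimension $d$ or $d-1$ are allowed, the first requirement holds.

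\emph{Transversality.} Let $\tau \in \Ksf$ with $\dim a_\tau = d-1$, and let $L_\tau \coloneqq \linspan\{a_i\}_{i \in \tau}$. Since $a_\tau$ is a cone containing $0$, its affine hull is the hyperplane $L_\tau$. Suppose $\gamma(t) \in a_\tau$ and the intersection is not transversal, i.e., $u - z \in L_\tau$. Then $z \in u + L_\tau$, contradicting \Cref{cond:codim1}. So every meeting with a $(d-1)$-dimensional cone is transversal.

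Note that $\gamma$ is $C^1$ trivially. The finiteness of breakpoints then follows from \Cref{lemma:finite-breakpoints}, so nothing more is needed.

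\emph{Expected difficulty.} There is no real obstacle. The only subtle points are the degenerate endpoint case $t = 1$ in the first step, which is handled by $d$-genericity of $u$, and the observation that the affine hull of $a_\tau$ is a linear subspace, which holds because cones contain $0$.
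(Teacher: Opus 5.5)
Your proof is correct and follows the paper's argument essentially step for step. For cones of dimension less than $d-1$, you turn $\gamma(t)\in a_\eta$ with $t<1$ into $z\in\linspan(\{u\}\cup\{a_i\}_{i\in\eta})$ and handle $t=1$ by the $d$-genericity of $u$; the paper does the same via the rescaling $\gamma(t)=(1-t)\bigl(z+\tfrac{t}{1-t}u\bigr)$. For transversality, you use, as the paper does, that the tangent is the constant $u-z$, which lies in $\linspan\{a_i\}_{i\in\tau}$ exactly when \Cref{cond:codim1} fails.
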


\begin{proof}
First, let $\eta \in \Ksf$ such that $\dim a_\eta < d-1$. By \Cref{cond:small-dim}, we have $z + \alpha u \notin a_\eta$ for all $\alpha \in \R$, since $a_\eta \subset \linspan \{a_i\}_{i \in \eta}$. Thus, for all $t \in [0,1)$, $z + \frac{t}{1-t} u \notin a_\eta$. As $a_\eta$ is a cone and $u \notin a_\eta$ (recall that $u$ is $d$-generic), we deduce that $z + t(u-z) \notin a_\eta$ for all $t \in [0,1]$.

Second, let $\tau \in \Ksf$ such that $\dim a_\tau = d-1$. Suppose that the path $\gamma$ intersects $a_\eta$. As a linear path, the tangential direction is the same at any point of the path and equal to $u - z$, and the intersection is transversal if and only if $u - z \notin \linspan {\{a_i\}_{i \in \tau}}$, which is exactly \Cref{cond:codim1}.
\end{proof}

\begin{lemma}\label{lemma:non-full-dim-cones}
Let $\sigma \in \facets$ such that $a_\sigma$ is a pointed cone of dimension less than $d$, and $x$ a $(d-1)$-generic point in $a_\sigma$. Then, there are precisely two cones $a_\tau$ for $\tau \in \partial \sigma$ containing $x$. 
\end{lemma}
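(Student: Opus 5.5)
Since $x$ is $(d-1)$-generic and $x \in a_\sigma$, we must have $\dim a_\sigma \geq d-1$, hence $\dim a_\sigma = d-1$. The plan is to parametrize all nonnegative representations of $x$ in terms of the generators $a_i$, $i \in \sigma$. This set turns out to be a compact segment, and each of its two endpoints produces exactly one ridge $\tau \in \partial\sigma$ with $x \in a_\tau$.

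\textbf{Step 1 (the dependency).} Because $\#\sigma = d$ and $\rank\{a_i\}_{i\in\sigma} = d-1$, the linear relations among the $a_i$ form a one-dimensional space, spanned by some $\lambda \in \R^\sigma \setminus \{0\}$ with $\sum_{i\in\sigma}\lambda_i a_i = 0$. We claim $\lambda$ has both a positive and a negative entry. Suppose instead, up to sign, that $\lambda \geq 0$, and pick $j$ with $\lambda_j > 0$. Then $-a_j = \lambda_j^{-1}\sum_{i\neq j}\lambda_i a_i \in a_\sigma$. Since also $a_j \in a_\sigma$ and $a_j \neq 0$, this contradicts \Cref{cond:pointed}.

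\textbf{Step 2 (the segment of representations).} Write $x = \sum_{i\in\sigma}\mu_i a_i$ with $\mu \geq 0$. The nonnegative representations of $x$ are exactly the vectors $\mu + t\lambda$ with $t \in T \coloneqq \{t \in \R \colon \mu + t\lambda \geq 0\}$. The set $T$ is a closed interval containing $0$, and it is bounded because $\lambda$ has entries of both signs. Write $T = [t_-, t_+]$. For $\tau = \sigma\setminus\{j\}$, we have $x \in a_\tau$ if and only if there is $t \in T$ with $(\mu+t\lambda)_j = 0$.

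\textbf{Step 3 (genericity rules out degenerate cases).} For every $t \in T$, at most one coordinate of $\mu+t\lambda$ can vanish. Otherwise $x \in a_\eta$ with $\#\eta = d-2$, so $\dim a_\eta \leq d-2 < d-1$, contradicting $(d-1)$-genericity. This has three consequences.
\begin{enumerate}[label=(\alph*)]
\item $t_- < t_+$. If $T$ were a single point, then at that point some coordinate with $\lambda_i > 0$ and some coordinate with $\lambda_i < 0$ would both be zero, giving two zeros.
\item At an interior point $t \in (t_-, t_+)$, no coordinate vanishes. If one did, its coordinate would take a negative value on one side of $t$ within $T$, unless $\lambda_j = \mu_j = 0$. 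In that exceptional case coordinate $j$ is zero on all of $T$, and together with the coordinate attaining zero at $t_-$ this again gives two zeros at $t_-$.
\item At each endpoint $t_\pm$, exactly one coordinate $j_\pm$ vanishes.
\end{enumerate}
Hence the ridges $\tau \in \partial\sigma$ with $x \in a_\tau$ are exactly $\sigma\setminus\{j_-\}$ and $\sigma\setminus\{j_+\}$.

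\textbf{Step 4 (the two ridges are distinct).} Suppose $j_- = j_+ = j$. Then the affine function $t \mapsto (\mu+t\lambda)_j$ vanishes at two distinct points, so $\lambda_j = \mu_j = 0$. We already saw in Step 3 that this forces two zeros at $t_-$, which is impossible. So $j_- \neq j_+$, and there are precisely two cones $a_\tau$ with $\tau \in \partial\sigma$ containing $x$.

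I expect the main care to be needed in the bookkeeping of Steps 3 and 4, namely in systematically excluding the case of a coordinate that is identically zero on $T$ (where $\lambda_j = \mu_j = 0$).
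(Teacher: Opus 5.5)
Your proof is correct and follows essentially the same idea as the paper's. The paper's relation $x = \sum_{i \in \tau} (\mu_i - \alpha \lambda_i) a_i + \alpha a_j$ is exactly your segment of nonnegative representations $\mu + t\lambda$: pointedness forces the dependency to change sign, and $(d-1)$-genericity rules out two simultaneous zero coefficients. The difference is only organizational. The paper finds one ridge by Carath\'eodory, a second by a min-ratio pivot, and then proves uniqueness, whereas you treat the two endpoints of $T$ symmetrically, which also makes the degenerate case $\lambda_j = \mu_j = 0$ explicit.
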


\begin{proof}
We must have $\dim a_\sigma = d-1$ as $x$ is $(d-1)$-generic and $x \in a_\sigma$ by assumption. By Carathéodory Theorem, there exists $\tau \in \ridges$ such that $\tau \subset \sigma$ and $x \in a_\tau$. Thus, $\tau \in \partial \sigma$, and $\dim a_\tau = d-1$ (as $x$ is $(d-1)$-generic). Let $j$ be unique element of $[n]$ such that $\sigma = \tau \cup \{j\}$. As $\dim a_\sigma = \dim a_\tau = d-1$, we have $a_j \in \linspan{\{a_i\}_{i \in \tau}}$, thus there exist scalars $\lambda_i \in \R$ ($i \in \tau$) such that $a_j = \sum_{i \in \tau} \lambda_i a_i$. As $a_j \neq 0$ and $a_\sigma$ is pointed, there exists at least one $i \in \tau$ such that $\lambda_i > 0$. Let $\mu_i \in \R_{\geq 0}$ ($i \in \tau$) such that $x = \sum_{i \in \tau} \mu_i a_i$. Then, for all $\alpha \in \R$, we have
\begin{equation}\label{eq:simplex-pivot}
x = \sum_{i \in \tau} (\mu_i - \alpha \lambda_i) a_i + \alpha a_j \, .
\end{equation}
Taking $k \in \arg\min_{i \in \tau \colon \lambda_i > 0} \frac{\mu_i}{\lambda_i}$ and $\alpha = \frac{\mu_k}{\lambda_k}$, we deduce that $x \in a_{\tau'}$ for $\tau' = \sigma \setminus \{k\}$. Hence, $\tau' \in \partial \Ksf$ and $\tau' \neq \tau$. 

Conversely, let $\tau'' \in \partial \sigma$ such that $\tau \neq \tau''$ and $x \in a_{\tau''}$. We must have $j \in \tau''$ since otherwise, $\tau'' \subset \tau$ while $\# \tau'' = \# \tau = d-1$. Let $\nu_i \in \R_{\geq 0}$ ($i \in \tau''$) such that $x = \sum_{i \in \tau''} \nu_i a_i$. We have $\nu_j > 0$ (otherwise, $x \in a_{\tau'' \setminus \{j\}}$ which is impossible as $x$ is $(d-1)$-generic). As the scalars $(\lambda_i)_{i \in \tau}$ are uniquely defined (since $\dim a_\tau = \rank \{a_i\}_{i \in \tau} = d-1$), the relation~\eqref{eq:simplex-pivot} holds for $\alpha = \nu_j$. We deduce that there exists $l \in \tau$ such that $\mu_l = \nu_j \lambda_l$. Hence, $l \in  \arg\min_{i \in \tau \colon \lambda_i > 0} \frac{\mu_i}{\lambda_i}$. In the case where $k \neq l$, we would obtain that $x \in a_{\tau \setminus \{k,l\} \cup \{j\}}$, which is impossible by the $(d-1)$-genericity of $x$. We conclude that $k = l$ and $\tau' = \tau''$.
\end{proof}

\begin{proof}[\proofname{} of \Cref{th:odd-covering}]
Let $u \in \R^d$ be a $d$-generic vector. Let $S^0$ (resp.~$S^1$) be the set of $\sigma \in \facets$ such that $z^*$ (resp.~$u$) belongs to $a_\sigma$ (or, equivalently, $\interior(a_\sigma)$, as both vectors are generic). 

We note that the conditions of \Cref{lemma:linear-path} can be fulfilled for $z = z^\star$ up to perturbing the vector $z^\star$ (since the set of vectors $z$ satisfying the conditions is dense). On top of making the linear path $\gamma \colon t \in [0,1] \mapsto z^\star + t (u - z^\star)$ generic, such a perturbation preserves the $d$-genericity of $z^\star$ (\Cref{lemma:open-dense}), leaves the set $S^0$ unchanged (since $\bigcap_{\sigma \in S^0} \interior(a_\sigma) \cap \bigcap_{\sigma \notin S^0} (\R^d \setminus a_\sigma)$ is an open set containing $z^\star$, and any point in it belongs to precisely the cones $a_\sigma$ for $\sigma \in S^0$).

We set $S(t) \coloneqq \{ \sigma \in \facets \colon \; \gamma(t) \in a_\sigma \}$. Let $t^\circ \in [0,1]$ such that $\gamma(t^\circ)$ is $(d-1)$-generic, and $t^-, t^+ \in [0,1]$ such that $t^- < t^\circ < t^+$ and $\gamma(t)$ is $d$-generic over $[t^-, t^\circ)$ and $(t^\circ, t^+]$. Then $S(t)$ is constant over $(t^-, t^\circ)$ and $(t^\circ, t^+)$. Indeed, over these intervals, $\gamma(t)$ only meet $d$-dimensional cones (as $\gamma(t)$ is $d$-generic), and it cannot meet their boundaries, since the latter are included in the set $\bigcup_{\tau \in \ridges} a_\tau$ (\Cref{lemma:rel-boundary}) and $\dim a_\tau < d$ for all $\tau \in \ridges$. 

Let us show that $\# S(t^-)$ and $\# S(t^+)$ have the same parity. To this purpose, we set $T \coloneqq \{ \tau \in \ridges \colon \gamma(t^\circ) \in a_\tau\}$. We claim that 
\begin{equation}\label{eq:symdiff}
S(t^-) \symdiff S(t^+) = 
\bigcup_{\tau \in T} \{ \sigma \in \facets \colon \tau \in \partial \sigma \, , \; \dim a_\sigma = d \big\} \, . 
\end{equation}
Indeed, $\sigma \in S(t^-) \symdiff S(t^+)$ implies that $\dim a_\sigma = d$ (as $\gamma(t^\pm)$ is $d$-generic) and $\gamma(t^\circ) \in \partial a_\sigma$, so that, by~\Cref{lemma:rel-boundary}, $\gamma(t^\circ) \in a_\tau$ for some $\tau \in \partial \sigma$. Conversely, let $\tau \in T$ and $\sigma \in \facets$ such that $\tau \in \partial \sigma$ and $\dim a_\sigma = d$. Firstly, as $a_\sigma$ is a simplicial cone, $a_\tau$ is a facet of $a_\sigma$. Secondly, $\gamma(t^\circ) \in \relint a_\tau$ as $\gamma(t^\circ)$ is $(d-1)$-generic (owing to~\eqref{eq:relint}). Thirdly, the path $\gamma$ transversally crosses $a_\tau$ at $\gamma(t^\circ)$. As a consequence, when $t$ grows from $t^-$ to $t^+$, $\gamma(t)$ either leaves or enters the cone $a_\sigma$ at $t = t^\circ$. We deduce that $\sigma \in S(t^-) \symdiff S(t^+)$.

The union in~\eqref{eq:symdiff} is disjoint. Indeed, for all $\sigma \in \facets$ such that $\dim a_\sigma = d$, we cannot have $\gamma(t^\circ) \in a_\tau \cap a_{\tau'}$ for two distinct $\tau, \tau' \in \partial \sigma$, since $a_\tau \cap a_{\tau'} = a_{\tau \cap \tau'}$ by the simpliciality of $a_\sigma$, and $a_{\tau \cap \tau'}$ is a face of dimension less than $d-1$ while $\gamma(t^\circ)$ is $(d-1)$-generic. Thus, we obtain that
\begin{align*}
& \# (S(t^-) \symdiff S(t^+)) \\
& = \sum_{\tau \in T} \# \{ \sigma \in \facets \colon \tau \in \partial \sigma \, , \; \dim a_\sigma = d \big\} \\
& = \sum_{\tau \in T} \# \{ \sigma \in \facets \colon \tau \in \partial \sigma \, , \; \dim a_\sigma < d \big\} \mod 2 \reasonnext{by \Cref{cond:ridge}} \\
& = \sum_{\substack{\sigma \in \facets \\ \dim a_\sigma < d}} \# (\partial \sigma \cap T) \mod 2 \\
& = 0 \mod 2 \reason{by \Cref{lemma:non-full-dim-cones} and \Cref{cond:pointed}.}  
\end{align*}

The previous argument shows that $\# S(0)$ and $\# S(1)$ have the same parity. Moreover, $S^0 = S(0)$ and $S^1 = S(1)$. Consequently, as $\# S^0$ is odd (\Cref{cond:root}), so is $\# S^1$.
\end{proof}

\begin{lemma}\label{lemma:local-homology-isomorphic}
Let $n \geq 1$ and $y \in S^n$. The inclusion map from $H_n(S^n; \F_2)$ to $H_n(S^n, S^n \setminus \{y\}; \F_2)$ is an isomorphism.
\end{lemma}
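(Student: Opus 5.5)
We will deduce the statement from the long exact sequence of the pair $(S^n, S^n \setminus \{y\})$ in (singular) homology with coefficients in $\F_2$:
\[
\cdots \to H_n(S^n \setminus \{y\}; \F_2) \to H_n(S^n; \F_2) \xrightarrow{\;j\;} H_n(S^n, S^n \setminus \{y\}; \F_2) \xrightarrow{\;\partial\;} H_{n-1}(S^n \setminus \{y\}; \F_2) \to H_{n-1}(S^n; \F_2) \to \cdots
\]
Here $j$ is precisely the map induced by the inclusion of chains, i.e., the map sending a cycle $c$ to its class $[c]_y$ considered in \Cref{remark:homology}.

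The key input is that $S^n \setminus \{y\}$ is contractible. Stereographic projection from $y$ gives a homeomorphism onto $\R^n$, and $\R^n$ deformation retracts to a point. Hence $H_k(S^n \setminus \{y\}; \F_2) = 0$ for all $k \geq 1$, and $H_0(S^n \setminus \{y\}; \F_2) \cong \F_2$.

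Injectivity of $j$ follows from $H_n(S^n \setminus \{y\}; \F_2) = 0$, which holds since $n \geq 1$.

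For surjectivity we split into two cases.
\begin{itemize}
\item If $n \geq 2$, then $H_{n-1}(S^n \setminus \{y\}; \F_2) = 0$, so $\partial = 0$ and $j$ is onto.
\item If $n = 1$, the next map is $H_0(S^1 \setminus \{y\}; \F_2) \to H_0(S^1; \F_2)$. This map is induced by the inclusion of a nonempty path-connected space into a path-connected space, so it sends the class of a point to the class of a point and is an isomorphism $\F_2 \to \F_2$. By exactness, $\partial$ again has image equal to the kernel of this isomorphism, which is $0$. Hence $j$ is surjective.
\end{itemize}
Alternatively, both cases can be handled at once with reduced homology, where $\tilde H_*(S^n \setminus \{y\}) = 0$ in every degree.

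The only subtle point is this degree-zero bookkeeping when $n = 1$; everything else is the standard exact sequence. As a sanity check, both groups are $\F_2$: $H_n(S^n; \F_2) \cong \F_2$ is classical, and the local group is $\F_2$ by excision to a small disk $D$ around $y$, giving $H_n(D, D \setminus \{y\}) \cong \tilde H_{n-1}(S^{n-1}) \cong \F_2$. This check is not needed for the argument, since exactness alone gives bijectivity.
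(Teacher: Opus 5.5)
Your proof is correct and follows the same route as the paper: the long exact sequence of the pair $(S^n, S^n \setminus \{y\})$, with injectivity coming from the contractibility of $S^n \setminus \{y\}$ and surjectivity from the injectivity of $H_{n-1}(S^n \setminus \{y\}; \F_2) \to H_{n-1}(S^n; \F_2)$. Your treatment of $n = 1$ is slightly more careful than the paper's: the paper only notes that both degree-zero groups are isomorphic to $\F_2$, whereas you check that the induced map itself is an isomorphism.
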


\begin{proof}
We recall the following exact sequence 
\begin{multline*}
H_n(S^n \setminus \{y\}; \F_2) \to H_n(S^n; \F_2) \to H_n(S^n, S^n \setminus \{y\}; \F_2) \\
\to H_{n-1}(S^n \setminus \{y\}; \F_2) \to H_{n-1}(S^n; \F_2) \, .
\end{multline*}
See~\cite[p.~115]{Hatcher}. The homology group $H_n(S^n \setminus \{y\}; \F_2)$ is trivial as $S^n \setminus \{y\}$ is contractible. This shows that the second inclusion map is injective. The last inclusion is an isomorphism (if $n \geq 2$, the two groups are trivial, while if $n = 1$, both are isomorphic $\F_2$). We deduce that the third map is trivial, which ensures that the second map is surjective.
\end{proof}

\begin{lemma}\label{lemma:local-homology-simplex}
Suppose $d \geq 2$. Let $\sigma$ be a facet of $\Ksf$, and $y$ be a $d$-generic point of $S^{d-1}$. Then $f_\sharp(\sigma)$ is a local cycle (w.r.t.~$y$), and $[f_\sharp(\sigma)]_y \neq 0$ if and only if $y \in a_\sigma$. 
\end{lemma}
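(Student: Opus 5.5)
We need to show that $f_\sharp(\sigma)$ is a local cycle with respect to $y$, and that its class $[f_\sharp(\sigma)]_y$ is nonzero exactly when $y \in a_\sigma$. The argument splits into three parts: checking the local-cycle property from genericity, the case $y \notin a_\sigma$, and the case $y \in a_\sigma$, which carries the real content.

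\textbf{Local-cycle property.} The boundary $\partial f_\sharp(\sigma) = f_\sharp(\partial\sigma)$ is the sum of the singular simplices $f \circ \tau$ for $\tau \in \partial\sigma$. The image of $f\circ\tau$ is $(a_\tau \setminus\{0\})/\|\cdot\|$, since \Cref{cond:pointed} guarantees $0 \notin f(\cdot)$. Since $\dim a_\tau \le d-1 < d$ and $y$ is $d$-generic, $y \notin a_\tau$. So every boundary simplex avoids $y$, and $f_\sharp(\sigma)$ is a local cycle.

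\textbf{Case $y \notin a_\sigma$.} The whole image $f(\sigma) = (a_\sigma\setminus\{0\})/\|\cdot\|$ lies in $S^{d-1}\setminus\{y\}$. Hence $f_\sharp(\sigma)$ is a chain of $C_{d-1}(S^{d-1}\setminus\{y\})$, and it is zero in the relative chain group. Its class is therefore trivial.

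\textbf{Case $y \in a_\sigma$.} By genericity, $\dim a_\sigma = d$, so $a_\sigma$ is a $d$-dimensional simplicial cone. The map $f$ restricted to the geometric simplex $\conv\{e_i\}_{i\in\sigma}$ is then a homeomorphism onto the spherical simplex $\Delta := a_\sigma \cap S^{d-1}$. Indeed, the linear map $t\mapsto \sum t_i a_i$ is injective on the simplex (the $a_i$ are linearly independent), and radial projection from a cone not containing $0$ is injective on a transversal slice. Moreover, $y$ lies in the interior of $\Delta$, because $y \notin a_\tau$ for $\tau\in\partial\sigma$ and \Cref{lemma:rel-boundary} applies.

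To conclude, I would use excision together with the standard fact that a homeomorphism $h\colon \triangle_{d-1} \to \Delta$ represents a generator of $H_{d-1}(\Delta,\Delta\setminus\{y\};\F_2)\cong \F_2$. The chain of identifications is
\[
H_{d-1}(S^{d-1},S^{d-1}\setminus\{y\};\F_2) \cong H_{d-1}(\Delta,\Delta\setminus\{y\};\F_2) \cong H_{d-1}(\Delta,\partial\Delta;\F_2) \cong \F_2 .
\]
The first isomorphism is excision of $S^{d-1}\setminus\Delta$, which is allowed since $y \in \interior \Delta$. The second holds because $\Delta\setminus\{y\}$ deformation retracts onto $\partial\Delta$. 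The third is the standard computation for a $(d-1)$-cell, realized by $\Delta \cong \triangle_{d-1}$. Under these isomorphisms, the class of the identity singular simplex of $\triangle_{d-1}$ is the fundamental class, hence nonzero; this is the usual fact that the identity simplex generates $H_k(\triangle_k,\partial\triangle_k)$, for instance by induction on $k$ via the long exact sequence of the triple.

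I expect the main obstacle to be this last step. It requires verifying carefully that $f\circ\sigma$ corresponds to the fundamental class through the excision and retraction maps, so that the singular simplex itself, and not merely some class, is shown to be nonzero. The hypothesis $d\ge2$ ensures $\partial\Delta$ is nonempty and that the identification above applies.
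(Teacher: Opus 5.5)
Your proposal is correct and is essentially the paper's proof: the local-cycle property and the case $y \notin a_\sigma$ follow from $d$-genericity, and for $y \in a_\sigma$ you use the same inclusion-induced isomorphisms (excision of $S^{d-1}\setminus\Delta$, then the deformation retraction of $\Delta\setminus\{y\}$ onto $\partial\Delta$) to reduce to $H_{d-1}(\Delta,\partial\Delta;\F_2)\cong\F_2$, which is generated by the class of the homeomorphism $f\circ\sigma$. The step you flag as the main obstacle is actually immediate: every map in the chain is induced by an inclusion, so it carries the class of the chain $f\circ\sigma$ to the class of that same chain.
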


\begin{proof}
We recall that $(f \circ \sigma)(\triangle_{d-1})$ is the spherical polytope $D_\sigma \coloneqq a_\sigma \cap S^{d-1}$. Moreover, $\partial(f_\sharp(\sigma)) = f_\sharp(\partial \sigma) = \sum_{\tau \in \partial \sigma} f \circ \tau$, and $\bigcup_{\tau \in \partial \sigma} (f \circ \tau)(\triangle_{d-2}) = \partial a_\sigma \cap S^{d-1} = \partial D_\sigma$ by \Cref{lemma:rel-boundary}. As $y$ is $d$-generic, $y \notin \partial a_\sigma$, and so $y \notin \partial D_\sigma$. We deduce that $\partial(f_\sharp(\sigma)) \in C_{d-2}(S^{d-1} \setminus \{y\}; \F_2)$. Thus, $f_\sharp(\sigma)$ is a local cycle w.r.t.~$y$. 

By genericity of $y$, either $y \notin a_\sigma$ or $y \in \interior(a_\sigma)$. In the first case, $f_\sharp(\sigma)$ is a chain of $C_{d-1}(S^{d-1} \setminus \{y\}; \F_2)$ and so $[f_\sharp(\sigma)]_y = 0$. In the second case, $D_\sigma$ is homeomorphic to a $(d-1)$-simplex. The inclusion maps then induce the following isomorphisms:
\begin{multline*}
H_{d-1}(D_\sigma, \partial D_\sigma; \F_2) \approx H_{d-1}(D_\sigma, D_\sigma \setminus \{y\}; \F_2) \\
\approx H_{d-1}(S^{d-1}, S^{d-1} \setminus \{y\}; \F_2) 
\end{multline*}
where the first one follows from the fact that $\partial D_\sigma$ is a retract of $D_\sigma \setminus \{y\}$, and the second one from the Excision Theorem (see~\cite[Th.~2.20]{Hatcher} with $X = S^{d-1}$, $A = S^{d-1} \setminus \{y\}$ and $B = D_\sigma$). All groups are isomorphic to $\F_2$ (by \Cref{lemma:local-homology-isomorphic}), and the first one is generated by the class of $f_\sharp(\sigma)$. We deduce that $H_{d-1}(S^{d-1}, S^{d-1} \setminus \{y\}; \F_2)$ is generated by $[f_\sharp(\sigma)]_y \neq 0$.
\end{proof}

\section{Certification Method: Extra Statements and Proofs}\label{app:certificate}

\begin{proof}[\proofname{} of \Cref{lemma:feasibility}]
The first part of the statement is immediate from \Cref{cond:feasibility}. The second part follows from the characterization of $N(v)$ given in \eqref{eq:normal-cone-characterization} and the fact that $I_v = I(v)$.
\end{proof}

\begin{proof}[\proofname{} of \Cref{lemma:facet-subset-normal-cone}]
Let $\sigma \in F$. By \Cref{lemma:feasibility}, we have $\phi(v) \in P$. By \Cref{cond:facet_subset}, we have $\sigma \subset I_{\phi(\sigma)}$, hence 
\[
a_\sigma \subset \cone{\{a_i\}_{i \in I_{\phi(\sigma)}}} = N(\phi(\sigma))
\] 
by \Cref{lemma:feasibility} again and the characterization~\eqref{eq:normal-cone-characterization}.
\end{proof}

When \Cref{cond:ridge_count} holds, for every $\sigma \in F$ and $i \in \sigma$, we set $\sigma^{(i)}$ to an arbitrary neighbor $\sigma'$ of $\sigma$ satisfying $\sigma' \neq \sigma$ and $\sigma \setminus \{i\} \subset \sigma'$.

\begin{lemma}\label{lemma:neighbors}
Under \Cref{cond:ridge_count}, each $\sigma \in F$ has exactly $d$ neighbors in $G$, given by the $\sigma^{(i)}$ for $i \in \sigma$.
\end{lemma}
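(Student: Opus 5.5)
The plan is to show that the map $i \mapsto \sigma^{(i)}$ from $\sigma$ to the neighbors of $\sigma$ in $G$ is injective. Then $\sigma$ has at least $d$ distinct neighbors. Since \Cref{cond:ridge_count} bounds the degree by $d$, this set of $d$ neighbors must be the whole neighborhood of $\sigma$.

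First, recall what \Cref{cond:ridge_count} guarantees for each $i \in \sigma$: the element $\sigma^{(i)}$ is a neighbor of $\sigma$ in $G$, it satisfies $\sigma^{(i)} \neq \sigma$, and it satisfies $\sigma \setminus \{i\} \subset \sigma^{(i)}$.

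The key step is injectivity. Suppose $i, j \in \sigma$ with $i \neq j$, and assume $\sigma^{(i)} = \sigma^{(j)} =: \sigma'$. Then
\[
\sigma' \supset (\sigma \setminus \{i\}) \cup (\sigma \setminus \{j\}) = \sigma ,
\]
because $i \neq j$. Both $\sigma$ and $\sigma'$ belong to $F$, so both have cardinality exactly $d$. The inclusion $\sigma \subset \sigma'$ therefore forces $\sigma' = \sigma$, which contradicts $\sigma^{(i)} \neq \sigma$. Hence the $d$ elements $\sigma^{(i)}$, for $i \in \sigma$, are pairwise distinct neighbors of $\sigma$.

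To finish, we compare with the degree bound. The neighborhood of $\sigma$ contains these $d$ distinct elements and has size at most $d$, so it equals $\{\sigma^{(i)} : i \in \sigma\}$, and $\sigma$ has exactly $d$ neighbors. One point needs care: if $G$ may have loops or multiple edges, "degree at most $d$" must bound the number of distinct neighbors. This holds, because each multiplicity counts at least once, so at least $d$ distinct neighbors already saturate the bound. The same counting shows that no loops or parallel edges can occur, which is the content later recorded in \Cref{lemma:simple}. I expect no real obstacle; the only subtle point is this reading of the degree in a possibly non-simple graph.
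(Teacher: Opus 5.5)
Your proposal is correct and follows the paper's proof essentially step for step. It shows that the $\sigma^{(i)}$ are pairwise distinct: if $\sigma^{(i)}=\sigma^{(j)}$ with $i\neq j$, then $\sigma\subset\sigma^{(i)}$, the cardinality argument forces $\sigma^{(i)}=\sigma$, a contradiction; it then uses the degree bound to conclude. Your extra remark that the degree bound also rules out loops and parallel edges is sound, and it is exactly what the paper records separately in \Cref{lemma:simple}.
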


\begin{proof}
Let $\sigma \in F$. We claim that the $\sigma^{(i)}$ for $i \in \sigma$ are pairwise distinct. By contradiction, suppose that $\sigma \setminus \{j\} \subset \sigma^{(i)}$ for some $j \in \sigma$, $j \neq i$. As $i \in \sigma \setminus \{j\}$, we get that $i \in \sigma^{(i)}$. Thus, $\sigma \subset \sigma^{(i)}$, which leads to $\sigma = \sigma^{(i)}$ since both sets have cardinality $d$, a contradiction.

We deduce that $\sigma$ has at least $d$ pairwise distinct neighbors $\sigma^{(i)}$ ($i \in \sigma$). As the degree of $\sigma$ in $G$ is at most $d$, the $\sigma^{(i)}$ are precisely the $d$ neighbors of $\sigma$.
\end{proof}

\begin{lemma}\label{lemma:simple}
The graph $G$ is simple.
\end{lemma}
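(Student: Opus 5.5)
We need to show that $G$ has no loops and no multiple edges, under \Cref{cond:ridge_count} (and \Cref{lemma:neighbors}). The natural reading is that $G$ is given as an undirected graph over the vertex set $F$, possibly with loops or repeated adjacency entries. The neighbor condition explicitly asks for neighbors $\sigma' \neq \sigma$, so the degree bound does the remaining work.

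\textbf{No multiple edges.} Fix $\sigma \in F$. By \Cref{lemma:neighbors}, the $d$ sets $\sigma^{(i)}$ for $i \in \sigma$ are pairwise distinct neighbors of $\sigma$, and each is different from $\sigma$. Each of them contributes at least one to the degree of $\sigma$. Since \Cref{cond:ridge_count} bounds the degree by $d$, each $\sigma^{(i)}$ is joined to $\sigma$ by exactly one edge.

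\textbf{No loops.} With the same count, the $d$ edges to the $\sigma^{(i)}$ already exhaust the degree of $\sigma$. Hence no further edge, and in particular no loop at $\sigma$, can exist (a loop would contribute one or two to the degree under either convention).

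So every edge incident to $\sigma$ goes to some $\sigma^{(i)}$, with multiplicity one, and $\sigma$ carries no loop. As $\sigma$ was arbitrary, $G$ is simple.

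The only delicate point is the convention for degree. If the degree is measured as the number of distinct neighbors, then loops and multiple edges are invisible to the bound. In that case the statement should be read as ``the adjacency relation is irreflexive'': the neighbors of $\sigma$ are exactly the $\sigma^{(i)}$, all different from $\sigma$. This is where I expect the main obstacle. I would fix the multigraph-degree convention matching how the checker counts adjacency-list entries, and under it the argument above goes through directly.
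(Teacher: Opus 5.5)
Your proof is correct and takes essentially the same route as the paper: both combine \Cref{lemma:neighbors}, which gives $d$ pairwise distinct neighbors $\sigma^{(i)} \neq \sigma$, with the degree bound of \Cref{cond:ridge_count} counted with multiplicity, and conclude that there are no loops and no multiple edges. Your remark on the degree convention is apt, since the paper's argument implicitly relies on the same multigraph convention.
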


\begin{proof}
The graph $G$ has no loop as, for all $\sigma \in F$, its neighbors are the $\sigma^{(i)}$ for $i \in \sigma$ (\Cref{lemma:neighbors}), and $\sigma \neq \sigma^{(i)}$. Moreover, given $\sigma \in F$, there are no multiple edges incident to $\sigma$ as $\sigma$ has exactly $d$ neighbors (\Cref{lemma:neighbors}) and degree at most $d$.
\end{proof}

\begin{proof}[\proofname{} of \Cref{lemma:ridge-even}]
We label every edge of $G$ between $\sigma$ and $\sigma'$ by the set $\sigma \cap \sigma'$. 

Let $\tau \in \ridges$, and $S \coloneqq \{ \sigma \in \facets \colon \tau \subset \sigma \}$. We claim that $S$ is precisely the set of vertices of the edges of $G$ labeled by $\tau$. Indeed, let $\sigma \in S$ and $i \in \sigma$ such that $\tau = \sigma \setminus \{i\}$. We have $\tau = \sigma \setminus \{i\} \subset \sigma^{(i)}$. Thus, $\sigma \cap \sigma^{(i)} = \tau$ (as $\sigma \cap \sigma^{(i)} \supset \tau$, and $\sigma \cap \sigma^{(i)}$ has cardinality $d-1$ since $\sigma \neq \sigma^{(i)}$), so the edge between $\sigma$ and $\sigma'$ is labeled by $\tau$. Conversely, if an edge between $\sigma$ and $\sigma'$ of $G$ is labeled by $\tau$, then, by definition of the labels, $\tau = \sigma \cap \sigma'$, thus $\tau \subset \sigma, \sigma'$ and, subsequently, $\sigma, \sigma' \in S$. This proves the claim.

We now claim that the edges labeled by $\tau$ form a matching, \ie, a set of edges whose vertices are pairwise disjoint. Indeed, by \Cref{lemma:neighbors}, every vertex $\sigma$ in $G$ is incident to precisely $d$ edges labeled by $\sigma \cap \sigma^{(i)} = \sigma \setminus \{i\}$ for $i \in \sigma$. Thus, two edges labeled by a ridge $\tau$ cannot intersect (recall that there are no multiple edges by \Cref{lemma:simple}).

As a consequence, $S$ has even cardinality, as the set of the vertices of a matching. 
\end{proof}

\begin{proof}[\proofname{} of \Cref{prop:root}]
By \Cref{cond:rank}, setting $\sigma^\star \coloneqq (i_1, \dots, i_d)$, the matrix $\begin{pmatrix} a_{i_1} & \dots & a_{i_d} \end{pmatrix}^\top \begin{pmatrix} f_1 & \dots & f_d \end{pmatrix}$ is a diagonal matrix with positive diagonal entries. Thus, the family of vectors $a_i$ for $i \in \sigma^\star$ has rank $d$. Consequently, $a_{\sigma^\star}$ is a $d$-dimensional (simplicial) cone. This implies that $N(v^\star)$ has dimension $d$ (as it contains $a_{\sigma^\star}$ by \Cref{lemma:facet-subset-normal-cone}), and thus $v^\star$ is a vertex of $P$ (\Cref{prop:normal-cone-basic-prop}\ref{item:i}).

The vector $z^\star$ obviously belongs to $a_{\sigma^\star}$; in fact, it belongs to its interior, as a positive combination of all $a_i$ for $i \in \sigma^\star$, and thanks to the fact that these $a_i$ form a family of rank $d$. Second, let $\sigma \in F$, $\sigma \neq \sigma^\star$. We first deal with the case where $\phi(\sigma) = v^\star$. In this case, we have $\scalar[\big]{x}{\sum_j (s_\sigma)_j f_j} \leq 0$ for all $x \in a_\sigma$, thanks to \Cref{cond:sep}. In contrast, $\scalar{z^\star}{f_j} > 0$ for all $j \in [d]$ (as a consequence of \Cref{cond:rank}), and, since  $s_\sigma \in \R_{\geq 0}^d \setminus \{0\}$ (\Cref{cond:sep} again), we get that $\scalar[\big]{z^\star}{\sum_j (s_\sigma)_j f_j} > 0$. Thus, $z^\star \notin a_\sigma$. It remains to consider the case where $w \coloneqq \phi(\sigma) \neq v^\star$. Recall that $a_\sigma \subset N(w)$ (\Cref{lemma:facet-subset-normal-cone}), and $(\interior N(v^\star)) \cap N(w) = \varnothing$ as $w \neq v^\star$ (\Cref{cor:normal-cone-basic-prop}). We deduce that $z^\star \notin a_\sigma$.

Finally, we argue that $z^\star$ is a $d$-generic vector. To this purpose, it suffices to prove that $a_{\sigma^\star}$ is the only cone $a_\eta$ for $\eta \in \Ksf$ containing $z^\star$ (since $a_{\sigma^\star}$ is $d$-dimensional). By contradiction, let $\eta \in \Ksf$, $\eta \neq \sigma^\star$. If $\eta \subset \sigma$ for some $\sigma \in F \setminus \{\sigma^\star\}$, then $z^\star \notin a_{\eta}$ as $a_\eta \subset a_\sigma$ and $z^\star \notin a_\sigma$ by the previous argument. Suppose now that $\eta \subsetneq \sigma^\star$. As $a_{\sigma^\star}$ is a $d$-dimensional simplicial cone, the cone $a_\eta$ is a face of $a_{\sigma^\star}$ of dimension less than $d$, and thus it is included in the boundary of $a_{\sigma^\star}$. However, $z^\star \in \interior(a_{\sigma^\star})$, hence $z^\star \notin a_\eta$.
\end{proof}

\begin{lemma}\label{lemma:normal-cone-pointed}
Suppose that $P$ is full-dimensional polyhedron. For every point $x \in P$, the cone $N(x)$ is pointed.
\end{lemma}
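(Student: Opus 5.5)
We argue by contradiction, working directly from the definition of $N(x)$. Suppose $c \in N(x) \cap (-N(x))$. Then both $c$ and $-c$ lie in $N(x)$, so for every $y \in P$ we have $\scalar{c}{x} \geq \scalar{c}{y}$ and $\scalar{-c}{x} \geq \scalar{-c}{y}$. Together these give $\scalar{c}{y} = \scalar{c}{x}$ for all $y \in P$. In other words, $P$ lies in the affine hyperplane $\{y \in \R^d \colon \scalar{c}{y} = \scalar{c}{x}\}$ whenever $c \neq 0$.

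Next we use full-dimensionality. Since $P$ is full-dimensional, it has nonempty interior; this is the fact, recalled at the beginning of \Cref{app:normal-fan}, that a convex set has dimension $d$ if and only if it has nonempty interior. So $P$ contains a ball $B(y_0, r)$ with $r > 0$. If $c \neq 0$, the point $y_0 + \tfrac{r}{2}\tfrac{c}{\|c\|}$ lies in $P$. However, its value under $\scalar{c}{\cdot}$ exceeds $\scalar{c}{y_0}$ by $\tfrac{r}{2}\|c\| > 0$. This contradicts the constancy of $\scalar{c}{\cdot}$ on $P$. Hence $c = 0$, and therefore $N(x) \cap (-N(x)) = \{0\}$.

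Equivalently, the argument can be phrased through affine hulls. If $c \neq 0$, the affine hull of $P$ would sit inside a hyperplane, so $\dim P \leq d-1$, contradicting $\dim P = d$.

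Nothing here is really an obstacle. The only step needing care is the passage from ``$\dim P = d$'' to ``$P$ is not contained in a hyperplane'', and the affine-hull definition of dimension given in \Cref{sec:normal-fan} handles it directly. In particular, the characterization $N(x) = \cone\{a_i\}_{i \in I(x)}$ from \Cref{prop:normal-cone-def} is not needed.
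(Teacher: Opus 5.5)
Your proof is correct and follows essentially the same route as the paper. The paper also notes that $c$ and $-c$ both lying in $N(x)$ force $\langle c, y\rangle = \langle c, x\rangle$ for all $y \in P$, so that $P$ lies in a hyperplane when $c \neq 0$, which contradicts $\dim P = d$. Your interior-ball computation only makes explicit the final step that the paper leaves implicit.
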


\begin{proof}
Suppose that $N(x)$ is not pointed, and take $z \in N(x) \cap (-N(x))$, $z \neq 0$. Set $\alpha \coloneqq \scalar{z}{x}$. Then, for all $y \in P$, $\scalar{z}{y} \leq \alpha$ and $-\scalar{z}{y} \leq -\alpha$. Thus, $P$ is contained in the hyperplane $\{y \in \R^d \colon \scalar{z}{y} = \alpha \}$, and thus, its dimension is less than $d$. 
\end{proof}

\begin{proof}[\proofname{} of \Cref{th:vertex-completeness}]
\Cref{cond:ridge} of \Cref{th:odd-covering} holds thanks to \Cref{lemma:ridge-even}, and \Cref{cond:root} thanks to \Cref{prop:root}. As the polyhedron $P$ has dimension $d$, $N(v)$ is pointed for all $v \in V$ (\Cref{lemma:normal-cone-pointed}). Thus, for all $\sigma \in F$, $a_\sigma$ is a pointed cone, as a subset of $N(\phi(\sigma))$ (\Cref{lemma:facet-subset-normal-cone}). We deduce that \Cref{cond:pointed} holds. 

We can now apply \Cref{cor:covering}, and we get that $\bigcup_{\sigma \in F} a_\sigma = \bigcup_{v \in V} \bigcup_{\sigma \in \phi^{-1}(v)} a_\sigma = \R^d$. Since $a_\sigma \subset N(v)$ for all $\sigma \in \phi^{-1}(v)$ and $v \in V$ (\Cref{lemma:facet-subset-normal-cone}), we deduce that $\bigcup_{v \in V} N(v) = \R^d$. This proves that $V \supset \vertices(P)$ by \Cref{prop:vertex-completeness}, and that $P$ is a polytope.
\end{proof}

\begin{proof}[\proofname{} of \Cref{prop:flag}]
We first note that by \Cref{cond:feasibility}, the points $v, w_1, \dots, w_d$ all belong to $P$ and $I_v = I(v)$ and $I(w_k) = I_{w_k}$ for all $k$ (\Cref{lemma:feasibility}).

We claim that the product $M$ of matrices 
\[
M \coloneqq \begin{pmatrix}
\transpose{a_{i_1}} \\ \vdots \\ \transpose{a_{i_d}} 
\end{pmatrix}
\begin{pmatrix}
v - w_1 & \dots & v - w_d  
\end{pmatrix}
\]
is an upper triangular matrix with positive diagonal entries. Indeed, for all $k \in [d]$, $M_{kk} = \scalar{a_{i_k}}{v} - \scalar{a_{i_k}}{w_k} = b_{i_k} - \scalar{a_{i_k}}{w_k} > 0$ as $i_k \notin I(w_k)$ and $i_k \in I(v)$. Similarly, if $j < k$, $M_{kj} = \scalar{a_{i_k}}{v} - \scalar{a_{i_k}}{w_j} = 0$ as $i_k \in I_{w_j} = I(w_j)$. We deduce that $M$ has rank $d$, and so has the matrix $\begin{pmatrix}
a_{i_1} & \dots & a_{i_d} \\
\end{pmatrix}$. This ensures that vectors $a_{i_1}, \dots, a_{i_d}$ are linearly independent. As $v \in P$ and $i_1, \dots, i_d \in I(v)$, we deduce that $v$ is a vertex of $P$. Similarly, we get that the matrix $\begin{pmatrix}
w_1 - v & \dots & w_d - v  
\end{pmatrix}$ has rank $d$. As $v, w_1, \dots, w_d \in P$, we conclude $\dim(P) = d$ (see the characterization of $\dim P$ discussed in \Cref{sec:normal-fan}).
\end{proof}

In the sequel, we suppose that \Cref{cond:feasibility,cond:facet_subset,cond:ridge_count,cond:rank,cond:sep,cond:flag} are met. Recall that $\Ksf$ is the (pure $(d-1)$-dimensional) abstract simplicial complex generated by the elements of $F$. In the following, given $v \in V$, we denote $c_v \coloneqq \sum_{\sigma \in \phi^{-1}(v)} \sigma \in C_{d-1}(\Ksf; \F_2)$ the (mod-$2$) chain of facets $F$ associated with $v$ via the mapping $\phi$. Recall that its boundary is $\partial c_v = \sum_{\sigma \in \phi^{-1}(v)} \partial \sigma$.

\begin{lemma}\label{lemma:cochain-boundary}
Let $v$ be a vertex of $P$. Every $(d-1)$-generic vector $x$ in the relative interior of a facet of $N(v)$ satisfies $x \in a_\tau$ for an odd number of $\tau \in \partial c_v$.
\end{lemma}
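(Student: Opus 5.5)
Fix a vertex $v$ of $P$ and a facet $N'$ of $N(v)$ with a $(d-1)$-generic vector $x \in \relint N'$. Since $\partial c_v = \sum_{\sigma \in \phi^{-1}(v)} \partial\sigma$ over $\F_2$, the number of $\tau \in \partial c_v$ with $x \in a_\tau$ has the parity of
\[
\sum_{\sigma \in \phi^{-1}(v)} \#\{\tau \in \partial\sigma \colon x \in a_\tau\} .
\]
The plan is to split this sum according to $\dim a_\sigma$, and to compute the contribution of full-dimensional facets by a local crossing argument near $x$, in the spirit of the proof of \Cref{th:odd-covering}.

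Facets $\sigma \in \phi^{-1}(v)$ with $\dim a_\sigma < d$ are handled first. Here $a_\sigma \subset N(v)$ is pointed by \Cref{lemma:normal-cone-pointed}, since $P$ is full-dimensional by \Cref{prop:flag}. If $x \in a_\sigma$, \Cref{lemma:non-full-dim-cones} gives exactly two $\tau \in \partial\sigma$ with $x \in a_\tau$. If $x \notin a_\sigma$, no $a_\tau \subset a_\sigma$ contains $x$. Either way the contribution is even.

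Facets with $\dim a_\sigma = d$ are then simplicial. As in the proof of \Cref{th:odd-covering}, distinct $\tau,\tau' \in \partial\sigma$ satisfy $a_\tau \cap a_{\tau'} = a_{\tau\cap\tau'}$, which has dimension less than $d-1$. Hence $x$ lies in at most one $a_\tau$ with $\tau \in \partial\sigma$. Moreover, by \Cref{lemma:rel-boundary}, this happens if and only if $x \in \partial a_\sigma$. So it remains to show that the number of $d$-dimensional $a_\sigma$ with $\sigma \in \phi^{-1}(v)$ and $x \in \partial a_\sigma$ is odd.

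For this, take a short transversal segment $x + t\nu$, $t \in [-\eps,\eps]$, where $\nu$ is a generic direction crossing the hyperplane spanned by $N'$. Then $x+\eps\nu \in \interior N(v)$ and $x - \eps\nu \notin N(v)$, choosing $\nu$ so both endpoints are $d$-generic and the segment meets lower-dimensional cones only at $x$; \Cref{lemma:linear-path}-style perturbation makes this possible. Consider the set $S(t)$ of $\sigma \in \phi^{-1}(v)$ with $x+t\nu \in a_\sigma$.
\begin{itemize}
\item At $t=-\eps$, $S$ is empty, since $a_\sigma \subset N(v)$ by \Cref{lemma:facet-subset-normal-cone}.
\item At $t=+\eps$, the point $x+\eps\nu$ is $d$-generic in $\interior N(v)$. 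By \Cref{th:odd-covering} it lies in an odd number of cones $a_\sigma$ over all of $F$. Cones with $\phi(\sigma) = w \neq v$ cannot contain it, since $a_\sigma \subset N(w)$ and \Cref{cor:normal-cone-basic-prop} gives $\interior N(v) \cap N(w) = \varnothing$ when $v \ne w$ are vertices. Hence $\#S(\eps)$ is odd.
\end{itemize}
The symmetric difference $S(-\eps) \symdiff S(\eps) = S(\eps)$ consists exactly of the full-dimensional $\sigma$ whose boundary the segment crosses at $x$. These are precisely the $d$-dimensional $a_\sigma$ with $x \in \partial a_\sigma$, by the same transversality reasoning as for \eqref{eq:symdiff}. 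Since no $\sigma \in \phi^{-1}(v)$ contains $x \pm \eps\nu$ without containing $x$ for small $\eps$, by closedness, this count is odd. That completes the argument.

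The main obstacle is the genericity bookkeeping in the last step. One must ensure that the perturbation direction $\nu$ keeps $x+\eps\nu$ $d$-generic and inside $\interior N(v)$, which uses that $x \in \relint N'$. One must also check that the crossing count at $x$ equals exactly the boundary incidences: every full-dimensional $a_\sigma$ with $x \in \partial a_\sigma$ has $x$ in the relative interior of a facet $a_\tau$ by $(d-1)$-genericity, so the segment enters or leaves it.
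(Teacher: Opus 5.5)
Your proof is correct and follows the paper's argument. It uses the same mod-$2$ splitting of $\sum_{\sigma \in \phi^{-1}(v)} \#\{\tau \in \partial\sigma \colon x \in a_\tau\}$ according to $\dim a_\sigma$, the same use of \Cref{lemma:non-full-dim-cones} and simpliciality, and the same appeal to \Cref{th:odd-covering} at a $d$-generic point of $\interior N(v)$ reached by a short transversal segment through $x$.

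The only difference is where the other endpoint goes. You place it just outside $N(v)$ and restrict $S(t)$ to $\phi^{-1}(v)$, so it is trivially empty there. The paper instead sends it into $\interior N(w)$, for the vertex $w$ adjacent to $v$ along the edge dual to the facet, and shows $S(0) \subset \phi^{-1}(v)$ and $S(1) \subset \phi^{-1}(w)$. Your variant avoids having to identify the facet with $N(v) \cap N(w)$.
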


\begin{proof}
Let $x$ be a $(d-1)$-generic vector in the relative interior of a facet $F$ of $N(v)$. As a face of $N(v)$, $F$ is the normal cone of some face of $P$, and this face have dimension $1$ (\Cref{prop:normal-cone-basic-prop}\ref{item:i}). Since $P$ is a polytope (by \Cref{th:vertex-completeness}), the edge is a line segment between $v$ and an (adjacent) vertex $w$ of $P$. In consequence, $F = N(v) \cap N(w)$.

We build a generic path from $\interior N(v)$ and $\interior N(w)$ passing through $x$ as follows. The path is linear and directed along vector $y \in \R^d$ that we choose to be ``generic'', in the sense that it is not parallel to any cone $a_\eta$ ($\eta \in \Ksf$) of dimension $d-1$, and not parallel to $F$ (such a choice is possible as these conditions define the complement set of a finite union of hyperplanes of $\R^d$). Given $\eps > 0$, the vectors $z \coloneqq x - \eps y$ and $u \coloneqq x + \eps y$ are $d$-generic vectors lying respectively in $\interior N(v)$ and $\interior N(w)$, up to assuming that $\eps$ is small enough and changing $y$ into $-y$. Moreover, up to taking $\eps$ smaller, the line segments $(x,z]$ and $(x,u]$ do not meet any cone $a_\eta$ ($\eta \in \Ksf$) of dimension less than $d$. In this way, the only element of the path $\gamma \colon t \in [0,1] \mapsto z + t (u - z)$ which is not $d$-generic is the vector $x$, and the latter is $(d-1)$-generic as required. On top of that, the intersection with any cone $a_\eta$ ($\eta \in \Ksf$) of dimension $d-1$ is transversal, owing to the genericity of the direction vector $y$. We deduce that $\gamma$ is a generic path from $z$ to $u$ passing through $x$.

Since \Cref{cond:feasibility,cond:facet_subset,cond:ridge_count,cond:rank,cond:sep} are satisfied, the hypotheses of \Cref{th:odd-covering} hold, as shown in the proof of \Cref{th:vertex-completeness}. Thus, we can proceed as in the proof of \Cref{th:odd-covering}, and we introduce $S(t) \coloneqq \{\sigma \in \facets \colon \; \gamma(t) \in a_\sigma \}$. As shown there, we have
\begin{equation}\label{eq:symdiff2}
S(0) \symdiff S(1) = 
\bigcup_{\tau \in T} \{ \sigma \in \facets \colon \tau \in \partial \sigma \, , \; \dim a_\sigma = d \big\} \, . 
\end{equation}
where $T \coloneqq \{ \tau \in \ridges \colon x \in a_\tau\}$. We claim that $S(0) \subset \phi^{-1}(v)$ and $S(1) \subset \phi^{-1}(w)$, so that both sets are disjoint. Indeed, let $\sigma \in S(0)$. Then, $a_\sigma \subset N(\phi(\sigma))$, and thus $z \in (\interior N(v)) \cap N(\phi(\sigma))$, which implies $\phi(\sigma) = v$ by \Cref{cor:normal-cone-basic-prop}. Symmetrically, we can show $S(1) \subset \phi^{-1}(w)$. 

We deduce that 
\begin{equation}\label{eq:S0}
S(0) = \bigcup_{\tau \in T} \{ \sigma \in \phi^{-1}(v) \colon \tau \in \partial \sigma \, , \; \dim a_\sigma = d \big\}
\end{equation}
As shown in the proof of \Cref{th:odd-covering}, the union in~\eqref{eq:symdiff2}, and subsequently  in~\eqref{eq:S0}, is disjoint. We deduce that
\begin{align*}
& \# \{\tau \in \partial c_v \colon x \in a_\tau\} \\
& = \sum_{\substack{\tau \in \ridges\\ x \in a_\tau}} \# \{ \sigma \in \phi^{-1}(v) \colon \tau \in \partial \sigma \} \mod 2 \\
& = 
\sum_{\substack{\tau \in \ridges\\ x \in a_\tau}} \# \{ \sigma \in \phi^{-1}(v) \colon \tau \in \partial \sigma \, , \; \dim a_\sigma = d \}  \\
& + 
\sum_{\substack{\tau \in \ridges\\ x \in a_\tau}} \# \{ \sigma \in \phi^{-1}(v) \colon \tau \in \partial \sigma \, , \; \dim a_\sigma < d \} \mod 2 \\
& = \# S(0) + \sum_{\substack{\sigma \in \phi^{-1}(v) \\ \dim a_\sigma < d}} \# \{ \tau \in \partial \sigma \colon x \in a_\tau \} \mod 2 
\reasonnext{as the union in~\eqref{eq:S0} is disjoint,}\\[\jot]
& = \# S(0) \mod 2  
\reason{\begin{minipage}[t]{0.25\textwidth}
\begin{flushright}
by \Cref{lemma:non-full-dim-cones} \\
and the pointedness of the $a_\sigma$,
\end{flushright}
\end{minipage}} \\[\jot]
& = 1 \mod 2  
\reason{%
by \Cref{th:odd-covering} 
and the $d$-genericity of $z$. \qedhere} 
\end{align*}
\end{proof}

\begin{lemma}\label{lemma:boundary}
Let $v \in V$. For all $\tau \in \partial c_v$, there is a pair $(\sigma, \sigma') \in \phi^{-1}(v) \times (F \setminus \phi^{-1}(v))$ of facets adjacent in $G$ such that $\tau = \sigma \cap \sigma'$. 
\end{lemma}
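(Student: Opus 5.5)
The plan is to work over $\F_2$ and use only the matching structure of $G$ obtained in the proof of \Cref{lemma:ridge-even}. First I will unpack what $\tau \in \partial c_v$ means. Since $\partial c_v = \sum_{\sigma \in \phi^{-1}(v)} \partial \sigma$ with coefficients in $\F_2$, a ridge $\tau$ appears in $\partial c_v$ exactly when the set $S_v(\tau) \coloneqq \{\sigma \in \phi^{-1}(v) \colon \tau \subset \sigma\}$ has odd cardinality. In particular $S_v(\tau)$ is nonempty, so $\tau$ is a ridge of $\Ksf$.

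Next I will reuse the labeling of the edges of $G$ from the proof of \Cref{lemma:ridge-even}: each edge $\{\sigma, \sigma'\}$ is labeled by $\sigma \cap \sigma'$. That proof established two facts. The set $S \coloneqq \{\sigma \in \facets \colon \tau \subset \sigma\}$ is exactly the set of endpoints of edges labeled $\tau$. Moreover, these edges form a perfect matching $M_\tau$ on $S$; this uses \Cref{lemma:neighbors} and \Cref{lemma:simple}, and the matching pairs each $\sigma \in S$ with $\sigma^{(i)}$, where $\tau = \sigma \setminus \{i\}$. Both facts rely only on \Cref{cond:ridge_count}.

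The core step is a parity count. Since $S_v(\tau) = S \cap \phi^{-1}(v)$ is a subset of the vertex set of the perfect matching $M_\tau$, I split the edges of $M_\tau$ into three kinds: those with both endpoints in $\phi^{-1}(v)$, those with exactly one endpoint there, and those with none. Edges of the first kind contribute an even number of elements of $S_v(\tau)$. Since $\# S_v(\tau)$ is odd, at least one edge $\{\sigma, \sigma'\} \in M_\tau$ must have exactly one endpoint in $\phi^{-1}(v)$. Choose the orientation with $\sigma \in \phi^{-1}(v)$ and $\sigma' \in F \setminus \phi^{-1}(v)$. The two are adjacent in $G$, and $\sigma \cap \sigma' = \tau$ because the edge is labeled $\tau$. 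This gives the required pair.

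I do not expect a real obstacle. The only point needing care is that the matching claim in \Cref{lemma:ridge-even} is stated inside its proof, so I will either restate it as a standalone fact or briefly re-derive it. The re-derivation goes as follows: every $\sigma \in S$ has exactly one incident edge labeled $\tau$, namely the edge to $\sigma^{(i)}$, because its $d$ incident edges carry the pairwise distinct labels $\sigma \setminus \{j\}$ for $j \in \sigma$.
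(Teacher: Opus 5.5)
Your proof is correct and follows essentially the same route as the paper. Both label each edge $\{\sigma,\sigma'\}$ of $G$ by $\sigma\cap\sigma'$, reuse the fact from the proof of \Cref{lemma:ridge-even} that the $\tau$-labeled edges form a matching whose vertex set is exactly the set of facets containing $\tau$, and conclude by parity of $\#\{\sigma\in\phi^{-1}(v)\colon \tau\subset\sigma\}$. The only difference is cosmetic: the paper argues the contrapositive (if every $\tau$-labeled edge touching $\phi^{-1}(v)$ has both endpoints in $\phi^{-1}(v)$, then $S_v$ is even and $\tau\notin\partial c_v$), whereas you count the matching edges directly.
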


\begin{proof}
We proceed analogously to the proof of \Cref{lemma:ridge-even} and label every edge of $G$ between $\sigma$ and $\sigma'$ by $\sigma \cap \sigma'$. 

Let $v \in V$ and $\tau \in \ridges$. As shown in the proof of \Cref{lemma:ridge-even}, the set $S = \{ \sigma \in \facets \colon \tau \subset \sigma \}$ is precisely the set of vertices of the edges of $G$ labeled by $\tau$. Thus, the set $S_v \coloneqq \{ \sigma \in \phi^{-1}(v) \colon \tau \subset \sigma \}$ is precisely the set of the vertices of the edges of $G$ labeled by $\tau$ that belong to $\phi^{-1}(v)$. Moreover, the set of edges labeled by $\tau$ in $G$ form a matching, as shown in the proof of \Cref{lemma:ridge-even}. Let $M_v$ be the submatching formed by the edges incident to $\phi^{-1}(v)$. If the vertices of $M_v$ all belong to $\phi^{-1}(v)$, then $S_v$ coincides with the vertices of $M_v$. Thus, $S_v$ has even cardinality, and $\tau \notin \partial c_v$.
\end{proof}

\begin{proof}[\proofname{} of \Cref{th:graph-completeness}]
We already know that $\phi(F) \subset V = \vertices(P)$ by \Cref{cor:vertex-set}. We claim that $\vertices(P) \subset \phi(F)$. Let $v$ be a vertex of $P$, and $c \in \R^d$ such that $v$ is the unique maximizer of the linear map $x \mapsto \scalar{c}{x}$ over $P$. As shown in the proof of \Cref{th:vertex-completeness}, $\bigcup_{w \in V} \bigcup_{\sigma \in \phi^{-1}(w)} a_\sigma = \R^d$. Let $w \in V$ and $\sigma \in \phi^{-1}(w)$ such that $c \in a_\sigma$. As $a_\sigma \subset N(w)$ (\Cref{lemma:facet-subset-normal-cone}) and $c \notin N(w)$ for all $w \in P$, $w \neq v$, we deduce that $w = v$. Thus, $v = \phi(\sigma) \in \phi(F)$.

Now, let $v, w$ be two adjacent vertices in $G_P$. The normal cone $N([v,w])$ of $[v,w]$ is a facet of the normal cone of $v$ (\Cref{prop:normal-cone-basic-prop}). Let $u \in \relint(N([v,w]))$ be a $(d-1)$-generic vector (the existence of $u$ is guaranteed because the dimension of $N([v,w])$ is $d-1$). By \Cref{lemma:cochain-boundary}, we deduce that there exists $\tau \in \partial c_v$ such that $u \in a_\tau$. By \Cref{lemma:boundary}, there is a pair $(\sigma, \sigma') \in \phi^{-1}(v) \times (F \setminus \phi^{-1}(v))$ of facets adjacent in $G$ such that $\tau = \sigma \cap \sigma'$. Let $x \coloneqq \phi(\sigma')$. By definition, $v$ and $x$ are adjacent in the graph $\phi(G)$. We observe that $a_\tau \subset a_{\sigma'} \subset N(x)$ since $\tau \subset \sigma'$ (\Cref{lemma:facet-subset-normal-cone}). Thus, $N(x)$ contains $u \in \relint(N([v, w]))$, which implies that $x$ belongs to the edge $[v,w]$ (\Cref{cor:normal-cone-basic-prop}). As, by \Cref{cor:vertex-set}, $x \in V$ is a vertex of $P$, it is a vertex of the edge $[v,w]$. Thus, $x \in \{v,w\}$. We cannot have $x = v$ as $\sigma' \notin \phi^{-1}(v)$. We deduce that $x = w$. As a result, $v$ and $w$ are adjacent in $\phi(G)$.
\end{proof}

\begin{proof}[\proofname{} of \Cref{th:graph}]
Throughout the proof, we use the fact that $I(v) = I_v$ for all $v \in V$ (as a consequence of \Cref{lemma:feasibility}).

Let $v \in V$ be a vertex of $P$. By \Cref{th:graph-completeness}, we already know that all vertices of $P$ adjacent to $v$ in $G(P)$ are also adjacent to $v$ in $\phi(G)$. Conversely, let $w$ be a point of $V$ adjacent to $v$ in $\phi(G)$. Let $Q$ be the minimal face of $P$ containing the line segment $[v,w]$. The point $v$ is a vertex of $Q$ (since it is a vertex of $P$ contained in $Q$). Since $Q$ is not reduced to $\{v\}$ (we have $w \neq v$), we can consider an edge $E = [v,w']$ of $Q$ incident to $v$. The point $w'$ is a vertex of $P$, and $[v, w']$ is also an edge of $P$ (faces of $Q$ are faces of $P$). Moreover, by \Cref{th:graph-completeness}, $w'$ is adjacent to $v$ in $\phi(G)$. Note that $I_v \cap I_w = I(Q)$, as $Q$ is the minimal face the segment $[v,w]$ and  the set of inequalities activated by all points of $[v,w]$ is $I_v \cap I_w$. As $Q$ contains the edge $[v,w']$, we have $I(Q) \subset I([v,w']) = I_v \cap I_{w'}$. Thus, $I_v \setminus I_w \supset I_v \setminus I_{w'}$. By \Cref{cond:incomparable-edges}, we deduce that $w = w'$. We conclude that $w$ is a vertex of $P$ adjacent to $v$.  
\end{proof}

\begin{proof}[\proofname{} of \Cref{prop:simplicial-refinement}]
By abuse of notation, given a point $\tilde x$ of $\tilde P$, we denote by $I(\tilde x) \coloneqq \forall i \in [n] \, , \; \scalar{a_i}{x} \leq b_i + \eps^{i} \}$ the set of active inequalities at $\tilde x$ in the inequality system defining $\tilde P$.

Let $\tilde v$ be a vertex of $\tilde P$. The vectors $a_i$ for $i \in I(\tilde v)$ are $d$ linearly independent vectors. Thus, the normal cone $N(\tilde v) = \cone{\{a_i\}_{i \in I(\tilde v)}}$ is a simplicial cone. This proves that the normal fan of $\tilde P$ is a simplicial fan.

In order to prove the refinement property, it suffices to prove to consider maximal cones in the normal fan of $\tilde P$, \ie, normal cones of vertices of $\tilde P$. Let $\tilde v$ be a vertex of $\tilde P$. As explained earlier, there is a vertex $v$ of $P$ such that $I(\tilde v) \subset I(v)$. By the characterization~\eqref{eq:normal-cone-characterization} applied to $P$ and $\tilde P$, we deduce that the normal cone of $\tilde v$ (w.r.t.~$\tilde P$) is given by $\cone{\{a_i\}_{i \in I(\tilde v)}}$, while that of $v$ (w.r.t.~$P$) by $\cone{\{a_i\}_{i \in I(v)}}$. We deduce that the former is included in the latter.
\end{proof}

\begin{proof}[\proofname{} of \Cref{prop:t1-to-t3-complete}]
\Cref{cond:feasibility,cond:facet_subset} trivially hold by construction. As shown in~\cite{AllamigeonCanuStrub2023}, the graph $G$ is $d$-regular, \ie, every lex-feasible basis has precisely $d$ neighbors in the adjacency graph $G$. Let $I$ be a lex-feasible basis, and $i \in I$. The fact that there exists an adjacent lex-feasible basis $I'$ such that $I \setminus \{i\} \subset I'$ can be shown constructively by applying one pivot step the simplex method from the basis $I$ of the perturbed polytope $\tilde P$; see~\cite[Chapter~11]{Schrijver86}. This proves \Cref{cond:ridge_count}.
\end{proof}

\begin{proof}[\proofname{} of \Cref{prop:t4-t5-complete}]
\Cref{cond:rank} has already been discussed, so we focus on \Cref{cond:sep}. First, as $\tilde x^{(\sigma)}$ belongs to $\tilde P$, we have $\scalar{a_i}{x^{(\sigma)}} \leq b_i + \eps^{i}$. This ensures that $s_{\sigma} \in \R_{\geq 0}^d$. Let $x^{(\sigma^\star)}$ be the vertex of $\tilde P$ corresponding to the lex-feasible basis $\sigma^\star$. For all $j \in [d]$, we have $\scalar{a_{i_j}}{x^{(\sigma^\star)}} = b_{i_j} + \eps^{i_j}$, and so 
\[
\sum_k (s_\sigma)_k f_j = \Big(\sum_j f_j \transpose{a_{i_j}}\Big) (\tilde x^{(\sigma^\star)} - \tilde x^{(\sigma)}) = \tilde x^{(\sigma^\star)} - \tilde x^{(\sigma)} \, .
\]
Let $i \in \sigma$. We get $\scalar[\big]{a_i}{\sum_j (s_\sigma)_j f_j} = \scalar{a_i}{\tilde x^{(\sigma^\star)} - \tilde x^{(\sigma)}} = \scalar{a_i}{\tilde x^{(\sigma^\star)}} - (b_i + \eps^{i}) \leq 0$ as $\scalar{a_i}{\tilde x^{(\sigma)}} = b_i + \eps^{i}$ and $\tilde x^{(\sigma^\star)}$ belongs to $P$. Finally, if $s_\sigma = 0$, we would have $\scalar{a_{i_j}}{x^{(\sigma)}} = b_{i_j} + \eps^{i_j}$, which would imply $x^{(\sigma^\star)} = x^{(\sigma)}$ (as the $a_{i_j}$ are linearly independent). This is impossible because lex-feasible bases are not degenerate~\cite{Avis2000}.
\end{proof}

\section{Description of the Instances}\label{app:experiments}

The instances used for the experiments are the following:
\begin{itemize}
\item the $n$-Birkhoff polytope (instance \C$birkhoff_$$n$) is the polytope of $n\times n$ doubly stochastic matrices. More precisely, it is the set of points $x\in\R_{\geq 0}^{n\times n}$ such that $\sum_{j=1}^n x_{ij}=1$ for every $i\in[n]$ and $\sum_{i=1}^n x_{ij}=1$ for every $j\in[n]$. Its dimension is $(n-1)^2$, and its vertices are precisely the $n!$ permutation matrices.
\item the $d$-dimensional cross-polytope (instance \C$cross_$$d$) is the convex hull of the $2d$ points ${\pm e_i : i\in[d]}$, where $(e_i)_{i\in[d]}$ denotes the canonical basis of $\R^d$. Equivalently, this is the unit ball for norm $\| \cdot \|_1$. It is defined by the $2^d$ inequalities $\sum_{i = 1}^d \pm x_i \leq 1$. It has dimension $d$ and $2d$ vertices. 
\item the $d$-dimensional cube (instance \C$cube_$$d$) is the polytope $[0,1]^d$. It has dimension $d$ and $2^d$ vertices, corresponding to the vectors of ${0,1}^d$.
\item the $(d,n)$-dual cyclic polytope (instance \C$cyclic_$$d$\C$_$$n$) is defined by means of the moment curve $u(t)=(t,t^2,\ldots,t^d)$. This is a $d$-dimensional polytope given by $n$ inequalities $\scalar{u(t_i)-c}{x} \leq 1$ where $t_1 < t_2 < \dots < t_n$ are fixed parameters, and $c = \frac{1}{n} \sum_{t = 1}^n u(t_i)$. The $(d,n)$-dual cyclic polytope is known to maximize the number of vertices among $d$-dimensional polytopes with $n$ facets. Here, we focus on the case where $n = 2d$, where the number of vertices grows exponentially in $d$.

\item the $(k,n)$-hypersimplex is defined as the set of points $x \in [0,1]^n$ satisfying $\sum_{i=1}^n x_i = k$. For $0 < k < n$, it has dimension $n-1$, and its vertices are the $\binom{n}{k}$ vectors of $\{0,1\}^n$ having exactly $k$ nonzero coordinates. The instance \C$hypersimplex_$$d$ stands for the $(k,n)$-hypersimplex with $k = \big\lfloor\frac{d+1}{2}\big\rfloor$ and $n = d+1$.
\item the permutahedron of order $n$ (instance \C$permutahedron_$$n$) is the convex hull of all permutations of the vector $(1,\dots,n)$. It lies in the affine hyperplane $\sum_{i=1}^n x_i=n(n+1)/2$, and therefore has dimension $n-1$. Its $n!$ vertices are precisely the permutations of $(1,\dots,n)$.
\item the instances \C$poly20dim21$ and \C$poly23dim24$ are the two counterexamples to the Hirsch conjecture from~\cite{Matschke2015}. They are respectively $20$- and $23$-dimensional polytopes defined by $40$ and $46$ inequalities involving very large rationals.
\end{itemize}

\end{document}